\renewcommand\footnotetextcopyrightpermission[1]{} 
\newcolumntype{L}[1]{>{\raggedright\let\newline\\\arraybackslash\hspace{0pt}}m{#1}}
\newcolumntype{C}[1]{>{\centering\let\newline\\\arraybackslash\hspace{0pt}}m{#1}}
\newcolumntype{R}[1]{>{\raggedleft\let\newline\\\arraybackslash\hspace{0pt}}m{#1}}
\tikzstyle{proglabel}=[shape=circle,draw,inner sep=0pt,minimum size=5mm]
\tikzstyle{tran}=[draw,->,>=stealth, rounded corners]
\lstdefinelanguage{prog}
{
morekeywords={if, then, else, fi, while, do, od, true, false, and, or, skip, prob, tick},
sensitive = false
}
\newcolumntype{C}[1]{>{\centering\arraybackslash\hspace{0pt}}p{#1}}
\newcommand{\Rset}{\mathbb{R}}
\newcommand{\Nset}{\mathbb{N}}
\newcommand{\pvars}{V_\mathrm{p}}
\newcommand{\rvars}{V_{\mathrm{r}}}
\newcommand{\expv}{\mathbb{E}}
\newcommand{\pv}{\mathbf{v}}
\newcommand{\rv}{\mathbf{u}}
\newcommand{\loc}{\ell}
\newcommand{\locs}{\mathit{L}}
\newcommand{\blocs}{\mathit{L}_{\mathrm{b}}}
\newcommand{\alocs}{\mathit{L}_{\mathrm{a}}}
\newcommand{\plocs}{\mathit{L}_{\mathrm{p}}}
\newcommand{\tlocs}{\mathit{L}_{\mathrm{t}}}
\newcommand{\Dlocs}{\mathit{L}_{\mathrm{nd}}}
\newcommand{\transitions}{{\rightarrow}}
\newcommand{\lin}{\loc_\mathrm{in}}
\newcommand{\lout}{\loc_\mathrm{out}}
\newcommand{\val}[1]{\mbox{\sl Val}_{#1}}
\newcommand{\probm}{\mathbb{P}}
\newcommand{\condexpv}[2]{{\expv}{\left({#1}{\mid}{#2}\right)}}
\newcommand{\sat}[1]{\langle #1 \rangle}
\newcommand{\monoid}{\mbox{\sl Monoid}}
\newcommand{\pre}{\mathrm{pre}}
\newcommand{\supval}{\mbox{\sl supval}}
\newcommand{\handelmanformat}{(\dagger)}
\newcommand{\pspace}{(\Omega,\mathcal{F},\probm)}
\newcommand{\setR}{\mathbb{R}}
\newcommand{\setN}{\mathbb{N}}
\newcommand{\run}{\{ (\loc_n, \pv_n) \}_{n=0}^\infty}
\theoremstyle{acmdefinition}
\newtheorem{thm}{Theorem}
\newtheorem{remark}[thm]{Remark}
\newcommand{\rd}[1]{{#1}}
\begin{document}

\title[Cost Analysis of Nondeterministic Probabilistic Programs]{Cost Analysis of Nondeterministic\\Probabilistic Programs}         
\titlenote{A conference version of this paper appears in~\cite{confver}.}             


\author{Peixin Wang}
\affiliation{
  \institution{Shanghai Jiao Tong University}            
}
\affiliation{
	\department{Shanghai Key Laboratory of Trustworthy Computing}              
	\institution{East China Normal University}            
	\city{Shanghai}
	\country{China}                    
}
\email{wangpeixin@sjtu.edu.cn}          

\author{Hongfei Fu}
\authornote{Corresponding Author}          
\affiliation{
	\institution{Shanghai Jiao Tong University}            
}
\affiliation{
	\department{Shanghai Key Laboratory of Trustworthy Computing}              
	\institution{East China Normal University}            
	\city{Shanghai}
	\country{China}                    
}
\email{fuhf@cs.sjtu.edu.cn}

\author{Amir Kafshdar Goharshady}
\authornote{Recipient of a DOC Fellowship of the Austrian Academy of Sciences (\"{O}AW)}          
\affiliation{
  \institution{IST Austria}           
  \city{Klosterneuburg}
  \country{Austria}                   
}
\email{ amir.goharshady@ist.ac.at }         

\author{Krishnendu Chatterjee}
\affiliation{
	\institution{IST Austria}           
	\city{Klosterneuburg}
	\country{Austria}                   
}
\email{ krishnendu.chatterjee@ist.ac.at }

\author{Xudong Qin}
\affiliation{
	\institution{East China Normal University}           
	\city{Shanghai}
	\country{China}                   
}
\email{ 52174501019@stu.ecnu.edu.cn}

\author{Wenjun Shi}
\affiliation{
	\institution{East China Normal University}           
	\city{Shanghai}
	\country{China}                   
}
\email{51174500041@stu.ecnu.edu.cn }         

\renewcommand{\shortauthors}{P. Wang, H. Fu, A.K. Goharshady, K. Chatterjee, X. Qin and W. Shi}

\begin{abstract}
We consider the problem of expected cost analysis over nondeterministic probabilistic programs,
which aims at automated methods for analyzing the resource-usage of such programs.
Previous approaches for this problem could only handle nonnegative bounded costs.
However, in many scenarios, such as queuing networks or analysis of cryptocurrency protocols,
both positive and negative costs are necessary and the costs are unbounded as well.

In this work, we present a sound and efficient approach to obtain polynomial bounds on the
expected accumulated cost of nondeterministic probabilistic programs.
Our approach can handle (a)~general positive and negative costs with bounded updates in
variables; and (b)~nonnegative costs with general updates to variables.
We show that several natural examples which could not be
handled by previous approaches are captured in our framework.

Moreover, our approach leads to an efficient polynomial-time algorithm, while no
previous approach for cost analysis of probabilistic programs could guarantee polynomial runtime.
Finally, we show the effectiveness of our approach using experimental results on a variety of programs for which we efficiently synthesize tight resource-usage bounds.
\end{abstract}

\begin{CCSXML}
	<ccs2012>
	<concept>
	<concept_id>10003752.10003790.10002990</concept_id>
	<concept_desc>Theory of computation~Logic and verification</concept_desc>
	<concept_significance>500</concept_significance>
	</concept>
	<concept>
	<concept_id>10003752.10003790.10003794</concept_id>
	<concept_desc>Theory of computation~Automated reasoning</concept_desc>
	<concept_significance>500</concept_significance>
	</concept>
	</ccs2012>
\end{CCSXML}

\ccsdesc[500]{Theory of computation~Logic and verification}
\ccsdesc[500]{Theory of computation~Automated reasoning}

\keywords{Program Cost Analysis, Program Termination, Probabilistic Programs, Martingales} 
\maketitle

\renewcommand{\paragraph}[1]{\smallskip\noindent\textbf{\textit{#1}}}

\section{Introduction}

In this work, we consider expected cost analysis of nondeterminisitic
probabilistic programs, and present a sound and efficient approach
for a large class of such programs.
We start with the description of probabilistic programs
and the cost analysis problem, and then present our contributions.

\paragraph{Probabilistic programs.}
Extending classical imperative programs with randomization, i.e.~generation of random values according to a predefined probability distribution, leads to the class of probabilistic programs~\cite{gordon2014probabilistic}.
Probabilistic programs are shown to be powerful models for a wide variety of applications, such as analysis of stochastic network protocols~\cite{netkat,netkat2,netkat3}, machine learning applications~\cite{roy2008stochastic,gordon2013model,scibior2015practical,claret2013bayesian}, and robot planning~\cite{thrun2000probabilistic,thrun2002probabilistic}, to name a few. There are also many probabilistic programming languages (such as Church~\cite{goodman2008church}, Anglican~\cite{anglican} and WebPPL~\cite{dippl}) and automated analysis of such programs is an active research area in formal methods and programming languages (see~\cite{SriramCAV,pmaf,pldi18,AgrawalC018,ChatterjeeFNH16,ChatterjeeFG16,EsparzaGK12,KaminskiKMO16,kaminski2018hardness}).

\paragraph{Nondeterministic programs.}
Besides probability, another important modeling concept in programming languages is nondeterminism.
A classic example is abstraction: for efficient static analysis of large programs, it is often infeasible
to keep track of all variables. Abstraction ignores some variables and replaces them with worst-case behavior, which is modeled by nondeterminism~\cite{cousot1977abstract}.

\paragraph{Termination and cost analysis.}
The most basic liveness question for probabilistic programs is {\em termination}.
The basic qualitative questions for termination of probabilistic programs, such as, whether the program
terminates with probability~1 or whether the expected termination time is bounded, have been widely studied~\cite{kaminski2018hardness,ChatterjeeFG16,KaminskiKMO16,ChatterjeeFNH16}.
However, in program analysis, the more general quantitative task of obtaining precise bounds on resource-usage is a challenging problem that is of significant interest for the following reasons:
(a)~in applications such as hard real-time systems, guarantees of worst-case behavior are required;
and (b)~the bounds are useful in early detection of egregious performance problems in large code bases.
Works such as~\cite{SPEED1,SPEED2,Hoffman1,Hoffman2} provide excellent motivation for the study of automated methods to obtain worst-case
bounds for resource-usage of nonprobabilistic programs.
The same motivation applies to the class of probabilistic programs as well.
Thus, the problem we consider is as follows: given a probabilistic program with costs associated to each execution step,
compute bounds on its expected accumulated cost until its termination. \rd{Note that several probabilistic programming languages have observe statements and conditioning operators for limiting the set of valid executions. In this work, we do not consider conditioning and instead focus on computing the expected accumulated cost over \emph{all} executions. See Remark~\ref{rem:conditioning}.}

\paragraph{Previous approaches.}
While there is a large body of work for qualitative termination analysis problems
(see Section~\ref{sec:rel} for details), the cost analysis problem has only been considered recently.
The most relevant previous work for cost analysis is that of Ngo, Carbonneaux and Hoffmann~\cite{pldi18}, which considers the stepwise
costs to be nonnegative and bounded.
While several interesting classes of programs satisfy the above restrictions, there are
many natural and important classes of examples that cannot be modeled in this framework.
For example, in the analysis of cryptocurrency protocols, such as mining, there are both energy costs (positive costs) and solution rewards (negative costs).
Similarly, in the analysis of queuing networks, the cost is proportional to the length of the queues, which might be unbounded.
For concrete motivating examples see Section~\ref{sec:motivation}.

\paragraph{Our contribution.}
In this work, we present a novel approach for synthesis of polynomial bounds on the expected accumulated cost of nondeterministic probabilistic programs.
\begin{compactenum}

\item Our sound framework can handle the following cases:
(a)~general positive and negative costs, with bounded updates to the variables at every step of the execution; and
(b)~nonnegative costs with general updates (i.e.~unbounded costs and unbounded updates to the variables).
In the first case, our approach obtains both upper and lower bounds, whereas in the second case we only obtain upper bounds.
In contrast, previous approaches only provide upper bounds for bounded nonnegative costs.
A key technical novelty of our approach is an extension of the classical Optional Stopping Theorem (OST) for martingales.

\item We present a sound algorithmic approach for the synthesis of polynomial bounds. Our algorithm runs in polynomial time and \rd{only relies on standard tools such as linear programming and linear invariant generation as prerequisites}. Note that no previous approach provides polynomial runtime guarantee for synthesis of such bounds for nondeterministic probabilistic programs.
Our synthesis approach is based on application of results from semi-algebraic geometry.

\item Finally, we present experimental results on a variety of programs, which are motivated from applications such as
cryptocurrency protocols, stochastic linear recurrences, and queuing networks, and show that our approach can efficiently obtain
tight polynomial resource-usage bounds.

\end{compactenum}
We start with preliminaries (Section~\ref{sec:pre}) and then present a set of motivating examples (Section~\ref{sec:motivation}).
Then, we provide an overview of the main technical ideas of our approach in Section~\ref{sec:novelty}. The following sections each present technical details of one of the steps of our approach.

\section{Preliminaries} \label{sec:pre}

In this section, we define some necessary notions from probability theory and probabilistic programs. We also formally define the expected accumulated cost of a program.

\subsection{Martingales}

We start by reviewing some notions from probability theory. We consider a probability space $\pspace$ where $\Omega$ is the sample space, $\mathcal{F}$ is the set of events and $\probm: \mathcal{F} \rightarrow [0, 1]$ is the probability measure.

\paragraph{Random variables.} A \emph{random variable} is an $\mathcal{F}$-measurable function $X: \Omega \rightarrow \setR \cup \{+\infty,-\infty\}$, i.e.~a function satisfying the condition that for all $d \in \setR \cup \{ +\infty, -\infty \}$, the set of all points in the sample space with an $X$ value of less than $d$ belongs to $\mathcal{F}$.

\paragraph{Expectation.} The \emph{expected value} of a random variable $X$, denoted by $\expv(X)$, is the Lebesgue integral of $X$ wrt $\probm$. See~\cite{williams1991probability} for the formal definition of Lebesgue integration. If the range of $X$ is a countable set $A$, then $\expv(X) = \sum_{\omega \in A} \omega \cdot \probm(X = \omega)$.

\paragraph{Filtrations and stopping times.} A \emph{filtration} of the probability space $\pspace$ is an infinite sequence $\{ \mathcal{F}_n \}_{n=0}^{\infty}$ such that for every $n$, the triple $(\Omega, \mathcal{F}_n, \probm)$ is a probability space and $\mathcal{F}_n \subseteq \mathcal{F}_{n+1} \subseteq \mathcal{F}$. A \emph{stopping time} wrt $\{ \mathcal{F}_n \}_{n=0}^{\infty}$ is a random variable $U: \Omega \rightarrow \setN \cup \{0, \infty\}$ such that for every $n \geq 0$, the event $U \leq n$ is in $\mathcal{F}_n$. Intuitively, $U$ is interpreted as the time at which the stochastic process shows a desired behavior.

\paragraph{Discrete-time stochastic processes.} A \emph{discrete-time stochastic process} is a sequence $\Gamma = \{X_n\}_{n=0}^\infty$ of random variables in $\pspace$. The process $\Gamma$ is \emph{adapted} to a filtration $\{ \mathcal{F}_n \}_{n=0}^{\infty}$, if for all $n \geq 0$, $X_n$ is a random variable in $(\Omega, \mathcal{F}_n, \probm)$.

\paragraph{Martingales.} A discrete-time stochastic process $\Gamma=\{X_n\}_{n=0}^\infty$ adapted to a filtration $\{\mathcal{F}_n\}_{n=0}^\infty$ is a \emph{martingale} (resp. \emph{supermartingale}, \emph{submartingale})
if for all $n \geq 0$, $\expv(|X_n|)<\infty$ and it holds almost surely (i.e., with probability $1$) that
$\condexpv{X_{n+1}}{\mathcal{F}_n}=X_n$ (\mbox{resp. } $\condexpv{X_{n+1}}{\mathcal{F}_n}\le X_n$, $\condexpv{X_{n+1}}{\mathcal{F}_n}\ge X_n$).
See~\cite{williams1991probability} for details.

Intuitively, a martingale is a discrete-time stochastic process, in which at any time $n$, the expected value $\condexpv{X_{n+1}}{\mathcal{F}_n}$ in the next step, given all previous values, is equal to the current value $X_n$. In a supermartingale, this expected value is less than or equal to the current value and a submartingale is defined conversely.
Applying martingales for termination analysis is a well-studied technique~\cite{SriramCAV,ChatterjeeFG16,ChatterjeeNZ2017}.

\subsection{Nondeterministic Probabilistic Programs}

We now fix the syntax and semantics of the nondeterministic probabilistic programs we consider in this work.

\paragraph{Syntax.} 
Our nondeterministic probabilistic programs are imperative programs with the usual conditional and loop structures (i.e.~\textbf{if} and \textbf{while}), as well as the following new structures: (a)~probabilistic branching statements of the form ``\textbf{if prob}$(p) \dots$'' that lead to the \textbf{then} part with probability $p$ and to the \textbf{else} part with probability $1-p$, (b)~nondeterministic branching statements of the form ``\textbf{if $\star$} \dots'' that nondeterministically lead to either the \textbf{then} part or the \textbf{else} part, and (c)~statements of the form \textbf{tick}$(q)$ whose execution triggers a cost of $q$. Moreover, the variables in
our programs can either be program variables, which act in the usual way, or sampling variables, whose values are randomly sampled from predefined probability distributions each time they are accessed in the program.

\begin{remark} \label{rem:conditioning}
	\rd{
		
		We remark two points about the probabilistic programs considered in this work:
		
		\begin{compactitem}
		
		\item \emph{Probability Distributions.} We do not limit our sampling variables to any specific type of distributions. Our approach supports any predefined distribution, including but not limited to Bernoulli and binomial distributions, which are used in our examples. 
		
		\item \emph{Conditioning.}
		 We do not consider conditioning and observe statements in this work. Note that as shown in~\cite{DBLP:journals/toplas/OlmedoGJKKM18}, in many cases of probabilistic programs conditioning can be removed. Moreover, our focus is on computing the expected accumulated cost of a program over \emph{all} executions, while conditioning is used to limit the set of valid executions.
		
		\end{compactitem}

}
\end{remark}

Formally, nondeterministic probabilistic programs are generated by the grammar in Figure~\ref{fig:syntax}.
In this grammar $\langle\mathit{pvar}\rangle$ (resp. $\langle\mathit{rvar}\rangle$) expressions range over program (resp. sampling) variables. For brevity, we omit the $\textbf{else}$ part of the conditional statements if it
contains only a single $\textbf{skip}$.
See Appendix~\ref{app:syntax} for more details about the syntax.

 An example program is given in Figure~\ref{fig:example} (left). Note that the complete specification of the program should also include distributions from which the sampling variables are sampled.

\begin{figure}
	\footnotesize{
	\begin{align*}
	\langle \mathit{stmt}\rangle &::= \mbox{`\textbf{skip}'}  \mid \langle\mathit{pvar}\rangle \,\mbox{`$:=$'}\, \langle\mathit{expr} \rangle\\
	& \mid \mbox{`\textbf{if}'} \, \langle\mathit{bexpr}\rangle\,\mbox{`\textbf{then}'} \, \langle \mathit{stmt}\rangle \, \mbox{`\textbf{else}'} \, \langle \mathit{stmt}\rangle \,\mbox{`\textbf{fi}'}
	\\
	& \mid \mbox{`\textbf{if}'} \, \mbox{`\textbf{prob}' `('$p$`)'}\,\mbox{`\textbf{then}'} \, \langle \mathit{stmt}\rangle \, \mbox{`\textbf{else}'} \, \langle \mathit{stmt}\rangle \,\mbox{`\textbf{fi}'}\\
	& \mid \mbox{`\textbf{if}'} \, \mbox{`$\star$'}\,\mbox{`\textbf{then}'} \, \langle \mathit{stmt}\rangle \, \mbox{`\textbf{else}'} \, \langle \mathit{stmt}\rangle \,\mbox{`\textbf{fi}'}\\
	&\mid  \mbox{`\textbf{while}'}\, \langle\mathit{bexpr}\rangle \, \text{`\textbf{do}'} \, \langle \mathit{stmt}\rangle \, \text{`\textbf{od}'}
	\\
	& \mid \mbox{`\textbf{tick}'`$($'}\langle\mathit{pexpr}\rangle\mbox{`$)$'}
	\mid \langle\mathit{stmt}\rangle \, \text{`;'} \, \langle \mathit{stmt}\rangle
	\\
	\langle\mathit{literal} \rangle &::= \langle\mathit{pexpr} \rangle\, \mbox{`$\leq$'} \,\langle\mathit{pexpr} \rangle \mid \langle\mathit{pexpr} \rangle\, \mbox{`$\geq$'} \,\langle\mathit{pexpr} \rangle
	\\
	\langle \mathit{bexpr}\rangle &::=  \langle \mathit{literal} \rangle \mid \text{`}\neg\text{'} \langle\mathit{bexpr}\rangle\\
	&\mid \langle \mathit{bexpr} \rangle \, \mbox{`\textbf{or}'} \, \langle\mathit{bexpr}\rangle
	\mid \langle \mathit{bexpr} \rangle \, \mbox{`\textbf{and}'} \, \langle\mathit{bexpr}\rangle \\
	\langle\mathit{pexpr} \rangle &::= \langle \mathit{constant} \rangle
	\mid \langle\mathit{pvar}\rangle
	\mid \langle \mathit{pexpr} \rangle \,\text{`$*$'} \,
	\langle\mathit{pexpr}\rangle
	\\
	&\mid \langle\mathit{pexpr} \rangle\, \text{`$+$'}
	\,\langle\mathit{pexpr} \rangle \mid \langle\mathit{pexpr} \rangle\,
	\text{`$-$'} \,\langle\mathit{pexpr} \rangle   \\
	\langle\mathit{expr} \rangle &::= \langle \mathit{constant} \rangle
	\mid \langle\mathit{pvar}\rangle \mid \langle\mathit{rvar}\rangle
	\mid \langle \mathit{expr} \rangle \,\text{`$*$'} \,
	\langle\mathit{expr}\rangle
	\\
	&\mid \langle\mathit{expr} \rangle\, \text{`$+$'}
	\,\langle\mathit{expr} \rangle \mid \langle\mathit{expr} \rangle\,
	\text{`$-$'} \,\langle\mathit{pexpr} \rangle
	\end{align*}
}
	\caption{Syntax of nondeterministic probabilistic programs.}
	\label{fig:syntax}
\end{figure}

\paragraph{Labels.} We refer to the status of the program counter as a \emph{label}, and assign labels $\lin{}$ and $\lout{}$ to the start and  end of the program, respectively. Our label types are as follows:
\begin{compactitem}
	\item An \emph{assignment} label corresponds to an assignment statement indicated by $:=$ or $\textbf{skip}$. After its execution, the value of the expression on its right hand side is stored in the variable on its left hand side and control flows to the next statement. A $\textbf{skip}$ assignment does not change the value of any variable.
	\item A \emph{branching} label corresponds to a conditional statement, i.e.~either an ``\textbf{if} $\phi$ \dots'' or a ``\textbf{while} $\phi$ \dots'', where $\phi$ is a condition on program variables, and the next statement to be executed depends on $\phi$.
	\item A \emph{probabilistic} label corresponds to an ``\textbf{if}~\textbf{prob}($p$) \dots'' with $p\in [0,1]$, and leads to the \textbf{then} branch with probability $p$ and the \textbf{else} branch with probability $1-p$.
	\item A \emph{nondeterministic} label corresponds to a nondeterministic branching statement indicated by ``\textbf{if} $\star$ \dots'', and is nondeterministically followed by either the \textbf{then} branch or the \textbf{else} branch.
	\item A \emph{tick} label corresponds to a statement \textbf{tick}($q$) that triggers a cost of $q$, and leads to the next label. Note that $q$ is an arithmetic expression, serving as the step-wise \emph{cost function}, and can depend on the values of program variables.
\end{compactitem}

\paragraph{Valuations.} Given a set $V$ of variables, a valuation over $V$ is a function $\pv: V \rightarrow \setR$ that assigns a value to each variable. We denote the set of all valuations on $V$ by $\val{V}$.

\paragraph{Control flow graphs (CFGs)~\cite{allen1970control}.} We define control flow graphs of our programs in the usual way, i.e.~a CFG contains one vertex for each label and an edge connects a label $\loc_i$ to another label $\loc_j$, if $\loc_j$ can possibly be executed right after $\loc_i$ by the rules above. Formally, a CFG is a tuple
\begin{equation}\label{eq:cfg}
\left(\pvars{},\rvars,\locs{},\transitions{}\right)
\end{equation}
where:
\begin{compactitem}
	\item $\pvars{}$ and $\rvars$ are finite sets of \emph{program variables} and \emph{sampling (randomized) variables}, respectively;
	\item $\locs{}$ is a finite set of \emph{labels} partitioned into (i) the set $\alocs{}$ of \emph{assignment} labels, (ii) the set $\blocs{}$ of \emph{branching} labels, (iii) the set $\plocs{}$ of \emph{probabilistic} labels, (iv) the set $\Dlocs{}$ of \emph{nondeterministic} labels, (v) the set $\tlocs{}$ of \emph{tick} labels, and (vi) a special terminal label $\lout{}$ corresponding to the end of the program. Note that the start label $\lin$ corresponds to the first statement of the program and is therefore covered in cases (i)--(v).
	
	\item $\transitions{}$ is a transition relation whose every member is a triple of the form $(\loc,\alpha,\loc')$ where $\loc$ is the source and $\loc'$ is the target of the transition, and $\alpha$ is the rule that must be obeyed when the execution goes from $\loc$ to $\loc'$. The rule $\alpha$ is either an \emph{update function} $F_\ell:\val{\pvars{}}\times\val{\rvars}\rightarrow \val{\pvars{}}$
	if $\loc\in\alocs{}$, which maps values of program and sampling variables before the assignment to the values of program variables after the assignment, or a condition $\phi$ over $\pvars{}$ if $\loc\in\blocs{}$,
	or a real number $p\in [0,1]$ if $\loc\in\plocs{}$,
	or $\star$ if $\loc\in\Dlocs{}$, or a \emph{cost function} $R_{\loc}:\val{\pvars{}}\rightarrow \Rset$ if $\loc\in\tlocs{}$. In the last case, the cost function $R_\loc$ is specified by the arithmetic expression $q$ in \textbf{tick}($q$) and maps the values of program variables to the cost of the tick operation.
\end{compactitem}

\begin{example} \label{ex:ex1}
	Figure~\ref{fig:example} provides an example program and its CFG. We assume that the probability distributions for the random variables $r$ and $r'$ are $(1, -1):(1/4, 3/4)$ and $(1, -1):(2/3, 1/3)$ respectively. In this program, the value of the variable $x$ is incremented by the sampling variable $r$, whose value is $1$ with probability $1/4$ and $-1$ with probability $3/4$. Then, the variable $y$ is assigned a random value sampled from the variable $r'$, that is $1$ with probability $2/3$ and $-1$ with probability $1/3$. The \textbf{tick} command then incurs a cost of $x \cdot y$, i.e.~$x * y$ is used as the cost function. 

		\begin{figure}
	\begin{minipage}{4.1cm}
		\lstset{language=prog}
		\lstset{linewidth=4.0cm}
		\begin{lstlisting}[basicstyle=\small,mathescape]
$1$: while $x\ge 1$ do
$2$:    $x:=x+r;$
$3$:    $y:=r';$
$4$:    tick$(x * y)$
   od
$5$:
		\end{lstlisting}
	\end{minipage}
	\begin{minipage}{3.1cm}
		\begin{flushright}
			\includegraphics[scale=0.45]{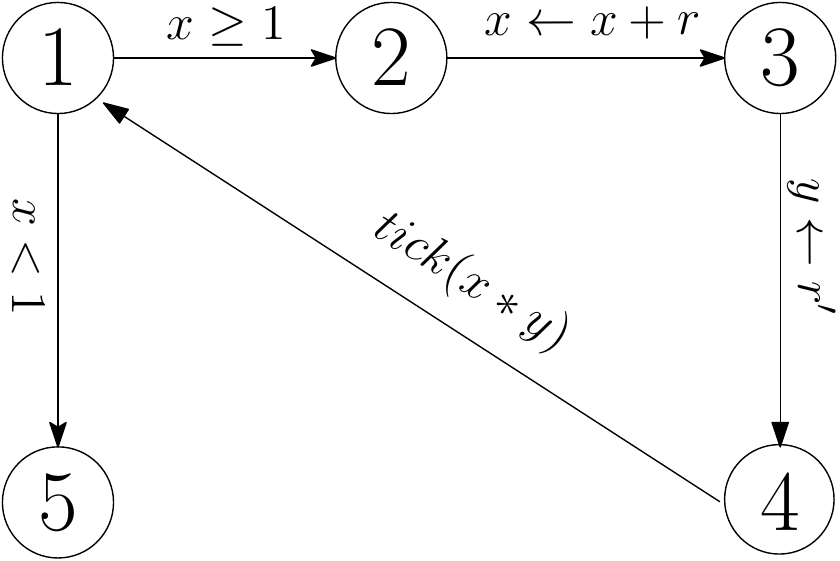}
		\end{flushright}
	\end{minipage}
	\caption{An example program with its labels (left), and its CFG (right). We have $\lin = 1$ and $\lout = 5$. }
	\label{fig:example}
\end{figure}
\end{example}
	
\paragraph{Runs and schedulers.} A \emph{run} of a program is an infinite sequence $\{ (\loc_n, \pv_n) \}_{n=0}^\infty$ of labels $\loc_n$ and valuations $\pv_n$ to program variables that respects the rules of the CFG. A \emph{scheduler} is a policy that chooses the next step, based on the history of the program, when the program reaches a nondeterministic choice. For more formal semantics see Appendix~\ref{app:semantic}.

\paragraph{Termination time~\cite{HolgerPOPL}.} The \emph{termination time} is a random variable $T$ defined on program runs as $T( \{ (\loc_n, \pv_n) \}_{n=0}^\infty ) := \min \{n ~\vert~ \loc_n = \lout \}$. We define $\min \emptyset := \infty$. Note that $T$ is a stopping time on program runs. Intuitively, the termination time of a run is the number of steps it takes for the run to reach the termination label $\lout$ or $\infty$ if it never terminates.

\paragraph{Types of termination~\cite{kaminski2018hardness,HolgerPOPL,ChatterjeeFG16}.} A program is said to \emph{almost surely terminate} if it terminates with probability 1 using any scheduler. Similarly, a program is \emph{finitely terminating} if it has finite expected termination time over all schedulers. Finally, a program has the \emph{concentration property} or \emph{concentratedly terminates} if there exist positive constants $a$ and $b$ such that for sufficiently large $n$, we have $\probm(T > n) \leq a \cdot e^{-b \cdot n}$ for all schedulers, i.e.~if the probability that the program takes $n$ steps or more decreases exponentially as $n$ grows.

Termination analysis of probabilistic programs is a widely-studied topic. For automated approaches, see~\cite{ChatterjeeFG16,AgrawalC018,SriramCAV,mciver2017new}.

\subsection{Expected Accumulated Cost}

The main notion we use in cost analysis of nondeterministic probabilistic programs is the expected accumulated cost until program termination. This concept naturally models the total cost of execution of a program in the average case. We now formalize this notion.

\paragraph{Cost of a run.} We define the random variable $C_m$ as the cost at the $m$-th step in a run, which is equal to a cost function $R_\ell$ if the $m$-th executed statement is a tick statement and is zero otherwise, i.e.~given a run $\rho = \run$, we define:
$$
C_m(\rho) := \left\{\begin{matrix*}[l]
R_{\loc_m} (\pv_m) & \text{if } \loc_m \in \tlocs\\

0 & \text{otherwise}

\end{matrix*}\right.
$$
Moreover, we define the random variable $C_\infty$ as the total cost of all steps, i.e. $C_\infty(\rho) := \sum_{m=0}^\infty C_m (\rho)$. Note that when the program terminates, the run remains in the state $\lout$ and does not trigger any costs. Hence, $C_\infty$ represents the total accumulated cost until termination. Given a scheduler $\sigma$ and an initial valuation $\pv$ to program variables, we define $\expv_{\pv}^\sigma(C_\infty)$ as the expected value of the random variable $C_\infty$ over all runs that start with $(\lin, \pv)$ and use $\sigma$ for making choices at nondeterministic points.


\begin{definition}[Expected Accumulated Cost]
	Given an initial valuation $\pv$ to program variables, the \emph{maximum expected accumulated cost}, $\supval(\pv)$, is defined as $\sup_\sigma \expv_{\pv}^\sigma(C_\infty)$,
	where $\sigma$ ranges over all possible schedulers.
\end{definition}

Intuitively, $\supval(\pv)$ is the maximum expected total cost of the program until termination, i.e.~assuming a scheduler that resolves nondeterminism to maximize the total accumulated cost.
In this work, we focus on automated approaches to find polynomial bounds for $\supval(\pv)$.

\section{Motivating Examples} \label{sec:motivation}

In this section, we present several motivating examples for the expected cost
analysis of nondeterministic probabilistic programs.
Previous general approaches for probabilistic programs, such as~\cite{pldi18},
require the following restrictions: (a)~stepwise costs are nonnegative; and (b)~stepwise costs are bounded.
We present natural examples which do not satisfy the above restrictions.
Our examples are as follows:
\begin{compactenum}
\item In Section~\ref{sec:mining}, we present an example of Bitcoin mining, where the costs are both positive and negative, but bounded.
Then in Section~\ref{sec:pool}, we present an example of Bitcoin pool mining, where the costs
are both positive and negative, as well as unbounded, but the updates to the variables at each program execution step are bounded.
\item
In Section~\ref{sec:queue}, we present an example of queuing networks which also has unbounded costs but bounded updates to the variables.

\item In Section~\ref{sec:problin}, we present an example of stochastic linear recurrences,
where the costs are nonnegative but unbounded, and the updates to the variable values
are also unbounded.
\end{compactenum}

\subsection{Bitcoin Mining} \label{sec:mining}

Popular decentralized cryptocurrencies, such as Bitcoin and Ethereum, rely on proof-of-work Blockchain protocols to ensure a consensus about ownership of funds and validity of transactions~\cite{nakamoto2008bitcoin,vogelstellerethereum}. In these protocols, a subset of the nodes of the cryptocurrency network, called \emph{miners}, repeatedly try to solve a computational puzzle. In Bitcoin, the puzzle is to invert a hash function, i.e.~to find a nonce value $v$, such that the SHA256 hash of the state of the Blockchain and the nonce $v$ becomes less than a predefined value~\cite{nakamoto2008bitcoin}. The first miner to find such a nonce is rewarded by a fixed number of bitcoins. If several miners find correct nonces at almost the same time, which happens with very low probability, only one of them will be rewarded and the solutions found by other miners will get discarded~\cite{baliga2017understanding}.

Given the one-way property of hash functions, one strategy for a miner is to constantly try randomly-generated nonces until one of them leads to the desired hash value. Therefore, a miner's chance of getting the next reward is proportional to her computational power. Bitcoin mining uses considerable electricity and is therefore very costly~\cite{de2018bitcoin}.

Bitcoin mining can be modeled by the nondeterministic probabilistic program given in Figure~\ref{fig:mining}. In this program, a miner starts with an initial balance of $x$ and mines as long as he has some money left for the electricity costs. At each step, he generates and checks a series of random nonces. This leads to a cost of $\alpha$ for electricity. With probability $p$, one of the generated nonces solves the puzzle. When this happens, with probability $p'$ the current miner is the only one who has solved the puzzle and receives a reward of $\beta$ units. However, with probability $1 - p'$, other miners have also solved the same puzzle in roughly the same time. In this case, whether the miner receives his reward or not is decided by nondeterminism. \rd{Since we are modeling the total cost from the point-of-view of the miner, getting a reward has negative cost while paying for electricity has positive cost.} The values of parameters $\alpha, \beta, p,$ and $p'$ can be found experimentally in the real world. Basically, $\alpha$ is the cost of electricity for the miner, which depends on location, $\beta$ is the reward for solving the puzzle, which depends on the Bitcoin exchange rate, and $p$ and $p'$ depend on the total computational power of the Bitcoin network, which can be estimated at any time~\cite{hashrate}. In the sequel, we assume $\alpha=1,\beta=5000,p=0.0005,p'=0.99$.

\begin{figure}
	\lstset{language=prog}
	\lstset{linewidth=5.1cm}
	\begin{lstlisting}[basicstyle=\small,mathescape]
while $x \ge \alpha$ do
	$x := x - \alpha$; tick$(\alpha)$;
	if prob($p$) then
	   if prob($p'$) then tick$(-\beta)$
	   else if $\star$ then tick$(-\beta)$
	 fi fi fi od
	\end{lstlisting}
	\caption{Bitcoin mining}
	\label{fig:mining}
\end{figure}
\begin{remark}
Note that in the example of Figure~\ref{fig:mining}, the costs are both positive ($\mathbf{tick}(\alpha)$) and
negative ($\mathbf{tick}(-\beta)$), but bounded by the constants $|\alpha|$ and $|\beta|$. Also all updates to the program variable $x$ are bounded by $|\alpha|$.
\end{remark}

\subsection{Bitcoin Pool Mining} \label{sec:pool}
As mentioned earlier, a miner's chance of solving the puzzle in Bitcoin is proportional to her computational power. Given that the overall computational power of the Bitcoin network is enormous, there is a great deal of variance in miners' revenues, e.g.~a miner might not find a solution for several months or even years, and then suddenly find one and earn a huge reward. To decrease the variance in their revenues, miners often collaborate in \emph{mining pools}~\cite{rosenfeld2011analysis}.

A mining pool is created by a manager who guarantees a steady income for all participating miners. This income is proportional to the miner's computational power. Any miner can join the pool and assign its computational power to solving puzzles for the pool, instead of for himself, i.e.~when a puzzle is solved by a miner participating in a pool, the rewards are paid to the pool manager~\cite{ChatterjeeErgodic}. Pools charge participation fees, so in the long term, the expected income of a participating miner is less than what he is expected to earn by mining on his own.

A pool can be modeled by the probabilistic program in Figure~\ref{fig:pool}. The manager starts the pool with $y$ identical miners\footnote{This assumption does not affect the generality of our modeling. If the miners have different computational powers, a more powerful miner can be modeled as a union of several less powerful miners.}. At each time step, the manager has to pay each miner a fixed amount $\alpha$. Miners perform the mining as in Figure~\ref{fig:mining}. Note that their mining revenue now belongs to the pool manager. Finally, at each time step, a small stochastic change happens in the number of miners, i.e.~a miner might choose to leave the pool or a new miner might join the pool. The probability of such changes can also be estimated experimentally.
In our example, we have that the number of miners increases by one with probability $0.4$, decrease by one with probability $0.5$, and does not change with probability $0.1$ ($y := y + (-1, 0, 1):(0.5, 0.1, 0.4)$).

\begin{figure}
	\lstset{language=prog}
	\lstset{linewidth=5.1cm}
	\begin{lstlisting}[basicstyle=\small,mathescape]
while $y \ge 1$ do
  tick$(\alpha * y)$; $i := 1$;
  while $i \leq y$ do
    if prob($p$) then
      if prob($p'$) then tick$(-\beta)$
      else if $\star$ then tick$(-\beta)$
      fi fi fi; $i := i+1$ od;
      $y := y + (-1, 0, 1):(0.5, 0.1, 0.4)$ od
	\end{lstlisting}
	\caption{Bitcoin pool mining}
	\label{fig:pool}
\end{figure}

\begin{remark}
In contrast to Figure~\ref{fig:mining} where the costs are bounded,
in Figure~\ref{fig:pool}, they are not bounded ($\mathbf{tick}(\alpha * y)$). Moreover, they are both positive ($\mathbf{tick}(\alpha * y)$) and negative ($\mathbf{tick}(-\beta)$).
However, changes to the program variables $i$ and $y$ are bounded.
\end{remark}

\subsection{Queuing Networks} \label{sect:Queing-network} \label{sec:queue}

A well-studied structure for modeling parallel systems is the \emph{Fork and Join} (FJ) queuing network~\cite{alomari2014efficient}. An FJ network consists of $K$ processors, each with its own dedicated queue (Figure~\ref{fig:FJ}). When a job arrives, the network probabilistically divides (\emph{forks}) it into one or more parts and assigns each part to one of the processors by adding it to the respective queue. Each processor processes the jobs in its queue on a first-in-first-out basis. When all of the parts of a job are processed, the results are \emph{joined} and the job is completed. The \emph{processing time} of a job is the amount of time it takes from its arrival until its completion.

FJ networks have been used to model and analyze the efficiency of a wide variety of parallel systems~\cite{alomari2014efficient}, such as web service applications~\cite{menasce2004response}, complex network intrusion detection systems~\cite{alomari2012autonomic}, MapReduce frameworks~\cite{dean2008mapreduce}, programs running on multi-core processors~\cite{hill2008amdahl}, and health care applications such as diagnosing patients based on test results from several laboratories~\cite{almomen2012design}.

An FJ network can be modeled as a probabilistic program. For example, the program in Figure~\ref{fig:FJcode} models a network with $K=2$ processors that accepts jobs for $n$ time units. At each unit of time, one unit of work is processed from each queue, and there is a fixed probability $0.02$ that a new job arrives. The network then probabilistically decides to assign the job to the first processor (with probability $0.2$) or the second processor (with probability $0.4$) or to divide it among them (with probability $0.4$). We assume that all jobs are identical and for processor~$1$ it takes $3$ time units to process a job, while processor~$2$ only takes $2$ time units. If the job is divided among them, processor~$1$ takes $2$ units to finish its part and processor~$2$ takes $1$ time unit. The variables $l_1$ and $l_2$ model the length of the queues for each processor, and the program cost models the total processing time of the jobs.

Note that the processing time is computed from the point-of-view of the jobs and does not model the actual time spent on each job by the processors, instead it is defined as the amount of time from the moment the job enters the network, until the moment it is completed. Hence, the processing time can be computed as soon as the job is assigned to the processors and is equal to the length of the longest queue.

\begin{figure}
	\includegraphics[scale=0.65]{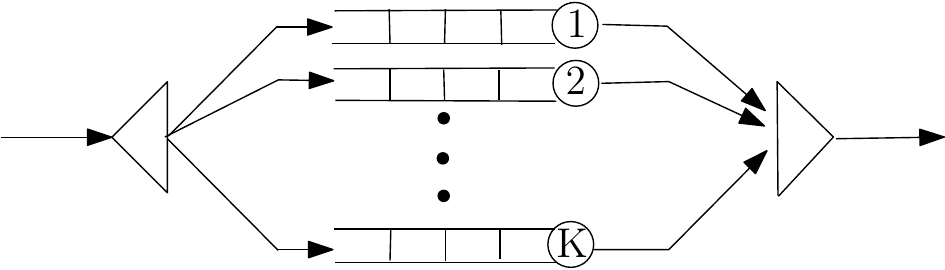}
	\caption{A Fork and Join network with $K$ processors}
	\label{fig:FJ}
\end{figure}

\begin{figure}
	\lstset{language=prog}
	\lstset{linewidth=5.1cm}
	\begin{lstlisting}[basicstyle=\small,mathescape]
$l_1 := 0; \quad l_2 := 0; \quad i:= 1;$
while $i \leq n$ do
  if $l_1 \ge 1$ then $l_1 := l_1 - 1$ fi;
  if $l_2\ge 1$ then $l_2 := l_2 - 1$ fi;
  if prob ($0.02$) then
    if prob($0.2$) then
      $l_1 := l_1 + 3$
    else if prob($0.5$) then
      $l_2 := l_2 + 2$
    else
      $l_1 := l_1 + 2;$ $l_2 := l_2 + 1$
    fi fi;
    if $l_1 \geq l_2$ then tick$(l_1)$ else tick$(l_2)$ fi
  fi; $i := i + 1$ od
	\end{lstlisting}
	\caption{A FJ-network Example with $K=2$ Processors}
	\label{fig:FJcode}
\end{figure}

\begin{remark}
	In the example of Figure~\ref{fig:FJcode}, note that the costs, i.e.~$\mathbf{tick}(l_1)$ and $\mathbf{tick}(l_2)$, depend on the length of the queues and are therefore unbounded. However, all updates in program variables are bounded, i.e.~a queue size is increased by at most $3$ at each step of the program. The maximal update appears in the assignment $l_1 := l_1 + 3$.
\end{remark}

\subsection{Stochastic Linear Recurrences} \label{sec:problin}

Linear recurrences are systems that consist of a finite set $\textbf{x}$ of variables, together with a finite set $ \textbf{a}_1, \textbf{a}_2, \ldots, \textbf{a}_m$ of linear update rules. At each step of the system's execution, one of the rules is chosen and applied to the variables. Formally, if there are $n$ variables, then we consider $\textbf{x}$ and each of the $\textbf{a}_i$'s to be a vector of length $n$, and applying the rule $\textbf{a}_i$ corresponds to the assignment $\textbf{x} := \textbf{a}_i \cdot \textbf{x}$. This process continues as long as a condition $\phi$ is satisfied.
Linear recurrences are well-studied and appear in many contexts, e.g.~to model linear dynamical systems, in theoretical biology, and in statistical physics (see~\cite{joel1,joel2,joelSURVEY}). A classical example is
the so-called species fight in ecology.

A natural extension of linear recurrences is to consider stochastic linear recurrences, where at each step the rule to be applied is chosen probabilistically. Moreover, the cost of the process at each step is a linear combination $\textbf{c} \cdot \textbf{x}$ of the variables. Hence, a general stochastic linear recurrence is a program in the form shown in Figure~\ref{fig:linrec}.

We present a concrete instantiation of such a program in the context of species fight.
Consider a fight between two types of species, $a$ and $b$, where there are a finite number of
each type in the initial population.
The types compete and might also prey upon each other.
The fitness of the types depends on the environment, which evolves stochastically. For example, the environment may represent the temperature,
and a type might have an advantage over the other type in warm/cold environment.
The cost we model is the amount of resources consumed by the population. Hence, it is a linear combination of the population of each type
(i.e.~each individual consumes some resources at each time step).

Figure~\ref{fig:species} provides an explicit example,
in which with probability $1/2$, the environment becomes hospitable to $a$, which leads to an increase in its population, and assuming that $a$ preys on $b$, this leads to a decrease in the population of $b$.
On the other hand, the environment might become hostile to $a$, which leads to an increase in $b$'s population.
Moreover, each individual of either type $a$ or $b$ consumes 1 unit of resource per time unit. We also assume that a population of less than $5$ is unsustainable and leads to extinction.

\begin{figure}
	\lstset{language=prog}
	\lstset{linewidth=5.1cm}
	\begin{lstlisting}[basicstyle=\small,mathescape]
while $\phi$ do
    if prob($p_1$) then
    	$\textbf{x}$ := $\textbf{a}_1 \cdot \textbf{x}$
    else if prob($p_2$) then
    	$\textbf{x}$ := $\textbf{a}_2 \cdot \textbf{x}$
	    $\vdots$
    else if prob($p_m$) then
    	$\textbf{x}$ := $\textbf{a}_m \cdot \textbf{x}$
    fi $\ldots$ fi;
    tick$(\textbf{c} \cdot \textbf{x})$
od
	\end{lstlisting}
	\caption{A general stochastic linear recurrence}
	\label{fig:linrec}
\end{figure}

\begin{figure}
	\lstset{language=prog}
	\lstset{linewidth=5.1cm}
	\begin{lstlisting}[basicstyle=\small,mathescape]
while $a \ge 5$ and $b \ge 5$ do
    tick$(a+b)$;
    if prob($0.5$) then $b := 0.9*b$; $a := 1.1*a$
    else $b := 1.1*b$; $a := 0.9*a$ fi
od
	\end{lstlisting}
	\caption{A species fight example}
	\label{fig:species}
\end{figure}

\begin{remark}
	Note that in Figure~\ref{fig:species}, there are unbounded costs ($\mathbf{tick}(a+b)$) and unbounded updates to the variables (e.g.~$a := 1.1*a$). However, the costs are always nonnegative.
\end{remark}

\section{Main Ideas and Novelty} \label{sec:novelty}

In this work, our main contribution is an automated approach for obtaining polynomial bounds on the expected accumulated cost of nondeterministic probabilistic programs.
In this section, we present an outline of our main ideas, and a discussion on their novelty in comparison with previous approaches.
The key contributions are organized as follows:
(a)~mathematical foundations; (b)~soundness of the approach; and (c)~computational results.

\subsection{Mathematical Foundations}

The previous approach of~\cite{pldi18} can only handle nonnegative bounded costs.
Their main technique is to consider {\em potential functions} and probabilistic
extensions of weakest precondition, which relies on monotonicity. This is the key reason why the costs must be nonnegative.
Instead, our approach is based on martingales, and can hence handle both positive and
negative costs.

\paragraph{Extension of OST.}
A standard result in the analysis of martingales is the Optional Stopping Theorem (OST), which provides a set of conditions on a (super)martingale $\{ X_n\}_{n=0}^\infty$ that are sufficient to ensure bounds on its expected value at a stopping time.
A requirement of the OST is the so-called bounded difference condition, i.e.~that there should exist a constant number $c$, such that the stepwise difference $\vert X_{n+1} - X_n \vert$ is always less than $c$. In program cost analysis, this condition translates to the requirement that the stepwise cost function at each program point must be bounded by a constant.
It is well-known that the bounded difference condition in OST is an essential prerequisite,
and thus application of classical OST can only handle bounded costs.

We present an extension of the OST that provides certain new conditions for handling differences $\vert X_{n+1} - X_n \vert$ that are not bounded by a constant, but instead by a polynomial on the step number $n$. Hence, our extended OST can be applied to programs such as the motivating examples in Sections~\ref{sec:mining},~\ref{sec:pool} and~\ref{sec:queue}.
The details of the OST extension are presented in Section~\ref{sec:ost}.

\subsection{Soundness of the Approach}

For a sound approach to compute polynomial bounds on expected accumulated cost, we present the following results
(details in Section~\ref{sec:approach}):
\begin{compactenum}
\item We define the notions of \emph{polynomial upper cost supermartingale (PUCS)} and \emph{polynomial lower cost submartingale (PLCS)} for upper and lower bounds of the expected accumulated cost over probabilistic programs, respectively (see Section~\ref{sec:defs}).

\item For the case where the costs can be both positive and negative (bounded or unbounded), but the variable updates  are bounded,
we use our extended OST to establish that PUCS's and PLCS's provide a sound approach to obtain upper and lower
bounds on the expected accumulated cost (see Section~\ref{sec:soundness}).

\item For costs that are nonnegative (even with unbounded updates), we show that PUCS's
provide a sound approach to obtain upper bounds on the expected accumulated cost (see Section~\ref{sec:soundness2}).
The key mathematical result we use here is the Monotone Convergence theorem. We do not need OST in this case.

\end{compactenum}

\subsection{Computational Results}

By our definition of PUCS/PLCS, a candidate polynomial $h$ is a PUCS/PLCS for a given program, if it satisfies a number of polynomial inequalities, which can be obtained from the CFG of the program. Hence, we reduce the problem of synthesis of a PUCS/PLCS to solving a system of polynomial inequalities. Such systems can be solved using quantifier elimination, which is computationally expensive. Instead, we present the alternative sound method of using a Positivstellensatz, i.e.~a theorem in real semi-algebraic geometry that characterizes positive polynomials over a semi-algebraic set. In particular, we use Handelman's Theorem to show that given a nondeterministic probabilistic program, a PUCS/PLCS can be synthesized by solving a linear programming instance of polynomial size (wrt the size of the input program and invariant). Hence, our sound approach for obtaining polynomial bounds on the expected accumulated cost of a program runs in polynomial time.
The details are presented in Section~\ref{sec:computational}.

\subsection{Novelty}

\rd{
The main novelties of our approach are as follows:
\begin{compactenum}
\item \emph{Positive and Negative Costs.} In contrast to previous approaches, such as~\cite{pldi18}, that can only handle positive costs, our approach can handle both positive and negative costs. In particular, approaches that are based on weakest pre-expectation require the one-step pre-expectation of the cost to be non-negative (due to monotonicity conditions). This requirement is enforced by disallowing negative costs. In contrast, our approach can even handle cases where the one-step pre-expectation is negative, e.g.~see lines~4--5 in Figure~\ref{fig:mining} (Section~\ref{sec:mining}) and lines~5--6 in Figure~\ref{fig:pool} (Section~\ref{sec:pool}) where the cost is always negative. As shown by these examples, many real-world scenarios contain both costs and rewards (negative costs). We provide the first approach that can handle such scenarios.

\item \emph{Variable-dependent Costs.} Previous approaches, such as~\cite{pldi18}, require the one-step costs to be bounded constants. A major novelty of our approach is that it can handle unbounded \emph{variable-dependent} costs. This allows us to consider real-world examples, such as the ubiquitous Queuing Networks of Section~\ref{sec:queue}, in which the cost depends on the length of a queue.

\item \emph{Upper and Lower Bounds.} While previous approaches, such as~\cite{pldi18}, could only present sound upper bounds with positive bounded
costs, our approach for positive and negative costs, with the restriction of bounded updates
to the variables, can provide both upper and lower bounds on the expected accumulated costs.
Thus, for the examples of Sections~\ref{sec:mining}, \ref{sec:pool} and~\ref{sec:queue}, we obtain
both upper and lower bounds. This is the first approach that is able to provide lower bounds for expected accumulated cost.

\item \emph{Efficiency.} We present a \emph{provably polynomial-time} computational approach for obtaining bounds on the expected accumulated costs. The previous approach in~\cite{pldi18} has exponential dependence on size of the program.

\item \emph{Compositionality.} Our approach directly leads to a compositional approach. For a program $P$, we can annotate $P$ with $\{G\}P\{H\}$ meaning that $G$ and $H$ are functions of a PUCS/PLCS at the start and end program counter. Then, by applying the conditions of PUCS/PLCS to the program syntax, one can directly establish a proof system for proving the triple $\{G\}P\{H\}$.

\end{compactenum}
}

\subsection{Limitations}

We now discuss some limitations of our approach.
\begin{compactenum}
\item As in previous approaches, such as~\cite{pldi18,ijcai18}, we need to assume that the input program terminates.

\item For programs with both positive and negative costs, we handle either bounded updates to variables or
bounded costs. The most general case, with both unbounded costs and unbounded updates,
remains open.

\item For unbounded updates to variables, we consider nonnegative costs, and present only upper bounds, and
not lower bounds. However, note that our approach is the first one to present any lower bounds for
cost analysis of probabilistic programs (with bounded updates to variables),
and no previous approach can obtain lower bounds in any case.

\item \rd{
	Our approach assumes that linear invariants at every point of the program are given as part of the input (see Section~\ref{sec:defs}). Note that linear invariant generation is a classical problem with several efficient tools (e.g.~\cite{sting1}).
}

\end{compactenum}

\section{The Extension of the OST}\label{sect:ost} \label{sec:ost}

The Optional Stopping Theorem (OST) states that, given a martingale (resp. supermartingale), if its step-wise difference $X_n - X_{n+1}$ is bounded, then its expected value at a stopping time
is equal to (resp. no greater than) its initial value.

\begin{theorem}[Optional Stopping Theorem (OST)~\cite{williams1991probability,doob1971martingale}]
Consider any stopping time $U$ wrt a filtration $\{\mathcal{F}_n\}_{n=0}^\infty$ and any martingale (resp. supermartingale) $\{X_n\}_{n=0}^\infty$ adapted to $\{\mathcal{F}_n\}_{n=0}^\infty$ and let $Y=X_U$.
Then the following condition is sufficient to ensure that $\expv\left(|Y|\right)<\infty$ and $\expv\left(Y\right) = \expv(X_0)$ (resp. $\expv\left(Y\right)\le\expv(X_0)$):
\begin{itemize}
	\item There exists an $M \in [0, \infty)$ such that for all $n \geq 0$, $|X_{n+1}-X_n|\le M$ almost surely.
\end{itemize}
\end{theorem}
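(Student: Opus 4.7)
The plan is to use the classical stopped-process plus Dominated Convergence Theorem (DCT) argument. Since the bounded-difference hypothesis alone is not sufficient in full generality (a symmetric simple random walk stopped at the first hitting time of $1$ violates the conclusion despite having $|X_{n+1}-X_n|\le 1$), I will also use the condition $\expv(U) < \infty$, which is standard and in this paper's setting is supplied by the assumption that programs are finitely terminating.

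First, I would introduce the stopped process $X_n^U := X_{\min(n,U)}$. Using that $\{U > n\} \in \mathcal{F}_n$ (complement of the stopping-time event), a short calculation with conditional expectations shows that $\{X_n^U\}$ inherits the (super)martingale property from $\{X_n\}$: on $\{U\le n\}$ both sides of the (super)martingale identity equal $X_U$, while on $\{U>n\}$ the identity reduces to that of $\{X_n\}$. Hence for every finite $n$, $\expv(X_n^U) = \expv(X_0)$ in the martingale case, respectively $\expv(X_n^U) \le \expv(X_0)$ in the supermartingale case.

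Next, I would produce a uniform integrable dominator. The telescoping identity
$$X_n^U - X_0 \;=\; \sum_{k=0}^{n-1}(X_{k+1}-X_k)\,\mathbf{1}_{\{U>k\}}$$
combined with the bounded-difference assumption gives $|X_n^U| \le |X_0| + M\cdot U$ almost surely, uniformly in $n$. Since $\expv(|X_0|+M\cdot U)<\infty$ under $\expv(U)<\infty$, and $X_n^U \to X_U = Y$ pointwise (as $U<\infty$ a.s.), DCT yields $\expv(Y) = \lim_n \expv(X_n^U)$, which equals $\expv(X_0)$ in the martingale case and is bounded above by $\expv(X_0)$ in the supermartingale case (the inequality passes to the limit because its right-hand side is a constant independent of $n$). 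The same dominator also gives $\expv(|Y|)<\infty$.

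The main obstacle is the stopped-process step, since that is the only place where the stopping-time structure is used in an essential way; the remainder is a clean telescoping plus DCT argument. A secondary subtlety is justifying that the supermartingale inequality is preserved in the limit, which here is immediate because the bound is $n$-independent, but would become more delicate in the extended OST of Section~\ref{sec:ost} where the step-wise differences are no longer uniformly bounded and one needs a more refined dominator.
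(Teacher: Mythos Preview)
The paper does not actually prove this theorem: it is stated as a classical result with citations to \cite{williams1991probability,doob1971martingale}, and no proof is given anywhere in the paper. The only related proof the paper supplies is for the \emph{Extended} OST (Theorem~\ref{eost}) in Appendix~\ref{app:OST}. So there is nothing to compare your argument against directly.

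That said, your proof is correct, and you have in fact spotted a genuine omission in the paper's statement: the bounded-difference hypothesis is indeed insufficient on its own (your symmetric-random-walk counterexample is the standard one), and the classical formulation in \cite{williams1991probability} requires an additional integrability condition on $U$ such as $\expv(U)<\infty$. Your addition of this hypothesis, justified by the paper's standing assumption of finite termination, is exactly right.

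It is also worth noting that your argument is structurally the same as the paper's proof of the Extended OST: telescope $X_{U\wedge n}-X_0$ into increments truncated by $\mathbf{1}_{U>k}$, exhibit an integrable dominator for the whole sequence, and apply Dominated Convergence to pass to the limit. The only difference is the dominator: you use $|X_0|+M\cdot U$ (integrable via $\expv(U)<\infty$), whereas the paper uses $|X_0|+\sum_k |X_{k+1}-X_k|\cdot\mathbf{1}_{U>k}$ and bounds the expectation of the series term-by-term via the polynomial-difference and exponential-tail hypotheses. So your proof is precisely the classical specialization of the paper's extended argument.
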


It is well-known that the stepwise bounded difference condition (i.e. $|X_{n+1}-X_n|\le M$) is an essential prerequisite~\cite{williams1991probability}.
Below we present our extension of OST to unbounded differences.

\begin{theorem}[The Extended OST]\label{eost} \label{thm:eost}
Consider any stopping time $U$ wrt a filtration $\{\mathcal{F}_n\}_{n=0}^\infty$ and any martingale (resp. supermartingale) $\{X_n\}_{n=0}^\infty$ adapted to $\{\mathcal{F}_n\}_{n=0}^\infty$ and let $Y = X_U$.
Then the following condition is sufficient to ensure that $\expv\left(|Y|\right)<\infty$ and $\expv\left(Y\right)=\expv(X_0)$ (resp. $\expv\left(Y\right)\le\expv(X_0)$):
\begin{compactitem}
\item There exist real numbers $M, c_1, c_2, d > 0$ such that (i) for sufficiently large $n \in \setN$, it holds that $\probm(U>n) \leq c_1 \cdot e^{-c_2 \cdot n}$ and (ii) for all $n \in \setN$, $\vert X_{n+1} - X_n \vert \leq M \cdot n^d$ almost surely.
\end{compactitem}
\end{theorem}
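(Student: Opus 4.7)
The plan is to follow the standard martingale template for proving OST-type statements, namely to pass from the stopped process to its limit via the Dominated Convergence Theorem (DCT), and to use the polynomial bound on the differences together with the exponential tail of $U$ to produce an integrable dominator. Concretely, define the stopped process $X_n^{U} := X_{n \wedge U}$, which inherits the (super)martingale property from $\{X_n\}_{n=0}^\infty$, so in particular $\expv(X_n^U) = \expv(X_0)$ in the martingale case and $\expv(X_n^U) \le \expv(X_0)$ in the supermartingale case. Since $\probm(U > n) \le c_1 e^{-c_2 n}$ for large $n$, we have $U < \infty$ almost surely, hence $X_n^U \to X_U = Y$ pointwise a.s.

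Next I would construct a single integrable random variable $Z$ dominating every $|X_n^U|$. Using the telescoping bound
\[
|X_n^U| \;\le\; |X_0| + \sum_{k=0}^{(n \wedge U)-1} |X_{k+1} - X_k| \;\le\; |X_0| + M \sum_{k=0}^{U-1}(k+1)^d \;\le\; |X_0| + M\,(U+1)^{d+1},
\]
set $Z := |X_0| + M\,(U+1)^{d+1}$. The core computational step is to show $\expv(Z) < \infty$. Integrability of $|X_0|$ is part of the martingale hypothesis. For the second summand, write
\[
\expv\bigl((U+1)^{d+1}\bigr) \;=\; \sum_{n=0}^{\infty}\Big(\,(n+1)^{d+1} - n^{d+1}\,\Big)\,\probm(U \ge n),
\]
and observe that $(n+1)^{d+1} - n^{d+1} = O(n^{d})$ while $\probm(U \ge n) \le c_1 e^{-c_2 n}$ for large $n$, so the series converges by comparison with a polynomial-times-exponential sum. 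Hence $Z \in L^1$.

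With the dominator in hand, DCT gives $\expv(X_n^U) \to \expv(Y)$ and simultaneously $\expv(|Y|) \le \expv(Z) < \infty$. Combining with the identity (resp.\ inequality) $\expv(X_n^U) = \expv(X_0)$ (resp.\ $\le \expv(X_0)$) valid for each $n$, letting $n \to \infty$ yields $\expv(Y) = \expv(X_0)$ in the martingale case and $\expv(Y) \le \expv(X_0)$ in the supermartingale case, as required.

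The main obstacle is the second step: verifying that the dominator $Z$ is integrable. Everything else is routine once this is in place. The delicate point is that the differences $|X_{n+1} - X_n|$ are bounded only by a growing function $Mn^d$ of the step index, so a naive bound would give a term of order $U \cdot U^d = U^{d+1}$; the hypothesis that $U$ has an exponential (rather than merely polynomial) tail is precisely what makes $\expv(U^{d+1})$ finite, and thus what makes the whole argument close. Note that this is also where the extension is tight: if the tail of $U$ were only polynomial, $\expv(U^{d+1})$ could diverge and the classical counterexamples to OST without bounded differences would reappear.
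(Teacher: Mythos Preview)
Your proposal is correct and follows essentially the same route as the paper: bound the stopped process by a telescoping sum of increments, produce an integrable dominator from the polynomial-difference bound together with the exponential tail of $U$, and then apply the Dominated Convergence Theorem to pass to the limit. The only cosmetic difference is that the paper keeps the dominator in the form $|X_0|+\sum_{k\ge 0}|X_{k+1}-X_k|\,\mathbf{1}_{U>k}$ and bounds its expectation directly by $\sum_k M\,k^d\,\probm(U>k)$, whereas you first collapse the sum to $M(U+1)^{d+1}$ and then show $\expv\bigl((U+1)^{d+1}\bigr)<\infty$ via the layer-cake formula; the underlying computation is the same.
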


\paragraph{Intuition and proof idea.} 
We extend the OST so that the stepwise difference $|X_{n+1}-X_n|$ need not be bounded by a constant, but instead by a polynomial in terms of the step counter $n$.
However, we require that the stopping time $U$ satisfies the concentration condition that specifies an exponential decrease in $\probm(U>n)$.
We present a rigorous proof that uses Monotone and Dominated Convergence Theorems along with the
concentration bounds and polynomial differences to establish the above result. 
For technical details see Appendix~\ref{app:OST}.

\rd{
\begin{remark} We note several points about our extended OST:
\begin{compactenum}
\item It can handle unbounded difference. 
\item While the original proof for OST (\cite{williams1991probability}) is restricted to bounded-difference, our proof for the extended OST (Appendix~\ref{app:OST}) uses a novel condition, i.e.~that the stopping time has exponentially-decreasing probabilities, to handle unbounded difference.
\item We show that the new condition in our extended OST formulation corresponds to program termination. Thus, our extension provides a sound method for cost analysis of terminating programs.
\end{compactenum}
\end{remark}
}

\section{Polynomial Cost Martingales}\label{sec:approach}

In this section, we introduce the notion of polynomial cost martingales, which serve as the main tool for reducing the cost analysis problem over nondeterministic probabilistic programs to the analysis of a stochastic process.

\subsection{Definitions}\label{sec:defs}

Below, we fix a probabilistic program and its CFG of form~(1).
In order to apply our extended OST for cost analysis of the program, it should first be translated into a discrete-time stochastic process.
This is achieved using the concept of \emph{cost martingales}.
To define cost martingales, we first need the notions of invariants and pre-expectation.

\begin{definition}[Invariants and linear invariants]
	
	Given a program, its set $L$ of labels, and an initial valuation $\pv^*$ to program variables $\pvars$, an \emph{invariant} is a function $I: L \rightarrow \rd{\mathcal{P}}({\val{\pvars}})$ that assigns a set $I(\loc)$ of valuations over $\pvars$ to every label $\loc$, such that for all configurations $(\loc, \pv)$ that are reachable from the initial configuration $(\lin, \pv^*)$ by a run of the program, it holds that $\pv \in I(\loc)$. \rd{The invariant $I$ is linear if every $I(\loc)$ is a finite union of polyhedra.}
\end{definition}
\paragraph{Intuition.} An invariant $I$ is an over-approximation of the reachable valuations at each label of the program. \rd{An invariant is called linear if it can be represented by a finite number of linear inequalities.}

\begin{remark}\rd{In the sequel, we compute polynomial bounds that are applicable to \emph{every} initial valuation that satisfies the linear invariants. To obtain concrete bounds, we fix a single initial valuation $\pv^*$ and choose polynomial bounds that are as tight as possible wrt $\pv^*$. Nevertheless, these polynomial bounds are valid upper/lower bounds for all other valid initial valuations, too.}
\end{remark}

\begin{example}
	Figure~\ref{fig:ex2} (top), shows the same program as in Example~\ref{ex:ex1}, together with linear invariants for each label of the program. The invariants are enclosed in square brackets.
\end{example}

\begin{definition}[Pre-expectation]\label{def:pre-exp1}
Consider any function $h:\locs{}\times \val{\pvars{}}\rightarrow\Rset$.
We define its \emph{pre-expectation} as the function $\pre_{h}:\locs{}\times\val{\pvars{}}\rightarrow\Rset$ by:
\begin{compactitem}
\item $\pre_h(\loc,\pv):=h(\loc,\pv)$ if $\loc=\lout{}$ is the terminal label;
\item $\pre_{h}(\loc,\pv):=\expv_{\rv}[h(\loc', F_\loc(\pv,\rv))]$ if $\loc\in\alocs{}$ is an assignment label with the update function $F_\loc$, and the next label is $\loc'$.  Note that in the expectation $\expv_{\rv}[h(\loc', F_\loc(\pv,\rv))]$, the values of $\loc'$ and $\pv$ are treated as constants and $\rv$ observes the probability distributions specified for the sampling variables;
\item $\pre_{h}(\loc,\pv):= \mathbf{1}_{\pv\models\phi}\cdot h(\loc_{1},\pv) + \mathbf{1}_{\pv\not\models\phi}\cdot h(\loc_{2},\pv)$ if $\loc\in\blocs{}$ is a branching label and $\loc_{1},\loc_2$ are the labels for the \textbf{true}-branch and the \textbf{false}-branch, respectively. The indicator $\mathbf{1}_{\pv\models\phi}$ is equal to $1$ when $\pv$ satisfies $\phi$ and $0$ otherwise. Conversely, $\mathbf{1}_{\pv\not\models\phi}$ is $1$ when $\pv$ does not satisfy $\phi$ and $0$ when it does;
\item $\pre_{h}(\loc,\pv):= \sum_{(\loc,p,\loc')\in\transitions{}} p \cdot h(\loc',\pv)$ if $\loc\in\plocs{}$ is a probabilistic label;
\item $\pre_{h}(\loc,\pv):= R_{\loc}(\pv) + h(\loc',\pv)$ if $\loc\in\tlocs{}$ is a tick label with the cost function $R_{\loc}$ and the successor label $\loc'$;
\item $\pre_{h}(\loc,\pv):= \max_{(\loc,\star,\loc')\in\transitions{}}h(\loc',\pv)$ if $\loc\in\Dlocs{}$ is a nondeterministic label.
\end{compactitem}
\end{definition}

\paragraph{Intuition.} The pre-expectation $\pre_{h}(\loc,\pv)$ is the cost of the current step plus the expected value of $h$
in the next step of the program execution, i.e.~the step after the configuration $(\loc,\pv)$. In this expectation, $\loc$ and $\pv$ are treated as constants.
For example, the pre-expectation at a probabilistic branching label is the averaged sum over the values of $h$ at all possible successor labels.

\begin{example}\label{ex:simple}
In Figure~\ref{fig:ex2} (top) we consider the same program as in Example~\ref{ex:ex1}. Recall that the probability distributions used for sampling variables $r$ and $r'$ are $(1, -1):(1/4, 3/4)$ and $(1,-1):(2/3,1/3)$, respectively.
The table in Figure~\ref{fig:ex2} (bottom) provides an example function $h$ and the corresponding pre-expectation $\pre_h$. The gray part shows the steps in computing the function $\pre_h$ and the black part is the final result\footnote{The reason for choosing this particular $h$ will be clarified by Example~\ref{ex:complex2}.}.

\begin{figure}
	\lstset{language=prog}
	\lstset{linewidth=6.0cm}
\begin{center}
	\begin{minipage}{7cm}
\begin{lstlisting}[basicstyle=\small,mathescape]
$1$: $[x\ge 0]$               while $x\ge 1$ do
$2$: $[x\ge 1]$                 $x:=x+r$;
$3$: $[x\ge 0]$                 $y:=r'$;
$4$: $[x\ge 0 \wedge -1 \le y \le 1]$       tick$(x * y)$ od
$5$: $[0 \le x \le 1]$
\end{lstlisting}	\end{minipage}
\end{center}	
\begin{minipage}{\columnwidth}
\begin{center}

\footnotesize{
\begin{tabular}{|m{2mm}|m{2cm}|C{5cm}|}
\hline
$n$&$h(\loc_n,x,y)$&$pre_h(\loc_n,x,y)$\\
\hline
$1$ &$\frac{1}{3} x^2+\frac{1}{3} x$& \makecell{\smaller\textcolor{gray}{$\mathbf{1}_{x \geq 1} \cdot h(\loc_2, x, y) + \mathbf{1}_{x<1} \cdot h(\loc_5, x, y)$ =}\\ $\mathbf{1}_{x \geq 1}\cdot(\frac{1}{3}x^2+\frac{1}{3}x)+\mathbf{1}_{x < 1}\cdot 0 $}    \\
\hline
$2$ &$\frac{1}{3} x^2+\frac{1}{3} x$& \makecell{\smaller\textcolor{gray}{$\frac{1}{4} h(\loc_3, x+1, y) + \frac{3}{4} h(\loc_3, x-1, y)$ =} \\ $\frac{1}{3}x^2+\frac{1}{3} x$} \\
\hline
$3$ &$\frac{1}{3} x^2+\frac{2}{3} x$ & \makecell{\smaller\textcolor{gray}{$\frac{2}{3} h(\loc_4, x, 1) + \frac{1}{3} h(\loc_4, x, -1)$ =} \\ $\frac{1}{3} x^2+\frac{2}{3} x$}  \\
\hline
$4$ &$\frac{1}{3} x^2+xy+\frac{1}{3} x$ & \makecell{\smaller\textcolor{gray}{$x \cdot y + h(\loc_1, x, y)$ =} \\ $\frac{1}{3} x^2+xy+\frac{1}{3} x$ }         \\
\hline
$5$ &$0$& \makecell{\smaller\textcolor{gray}{$h(\loc_5, x, y)$ =} $0$}\\
\hline
\end{tabular}
}
\end{center}
\end{minipage}

\caption{A program together with an example function $h$ and the corresponding pre-expectation function $\pre_h$.}
\label{fig:ex2}

\end{figure}

\end{example}

We now define the central notion of \emph{cost martingales}.
For algorithmic purposes, we only consider \emph{polynomial} cost martingales in this work. We start with the notion of PUCS which is meant to serve as an upperbound for the expected accumulated cost of a program.

\begin{definition}[Polynomial Upper Cost Supermartingales]\label{def:pupf}
A \emph{polynomial upper cost supermartingale (PUCS)} of degree $d$ wrt a given linear invariant $I$ is a function $h:\locs{}\times\val{\pvars{}} \rightarrow\Rset$
that satisfies the following conditions:
\begin{compactitem}
\item[\emph{(C1)}] for each label $\loc$, $h(\loc)$ is a polynomial of degree at most $d$ over program variables;
\item[\emph{(C2)}] for all valuations $\pv\in\val{\pvars{}}$, we have
$h(\lout,\pv)=0$;

\item[\emph{(C3)}] for all non-terminal labels $\loc\in\locs\setminus\{\lout\}$ and reachable valuations $\pv \in I(\loc)$, we have $\pre_{h}(\loc,\pv)\leq h(\loc,\pv)$.

\end{compactitem}
\end{definition}

\paragraph{Intuition.} Informally, (C1) specifies that the PUCS should be polynomial at each label, (C2) says that the value of the PUCS at the terminal label $\lout$ should always be zero, and (C3) specifies that at all reachable configurations $(\loc,\pv)$, the pre-expectation is no more than the value of the PUCS itself.

Note that if $h$ is polynomial in program variables, then $\pre_h(\loc,-)$ is also polynomial
if $\loc$ is an assignment, probabilistic branching or tick label.
For example, in the case of assignment labels, $\expv_{\rv}[h(\loc', F_\loc(\pv,\rv))]$
is polynomial in $\pv$ if both $h$ and $F_\loc$ are polynomial.

\begin{example}\label{ex:complex2} \label{ex:pucs1}
By Definition~\ref{def:pupf}, the function $h$ given in Example~\ref{ex:simple} is a PUCS. For every label $\loc$ of the program, $h(\loc, -)$ is a polynomial of degree at most $2$, so $h$ satisfies condition (C1). It is straightforward to verify, using the table in Figure~\ref{fig:ex2} (bottom), that $h$ satisfies (C2) and (C3) as well.

\end{example}

We now define the counterpart of PUCS for lower bound.

\begin{definition}[Polynomial Lower Cost Submartingales]\label{de:plcs}
A \emph{polynomial lower cost submartingale (PLCS)} wrt a linear invariant $I$ is a function $h:\locs{}\times\val{\pvars{}}\rightarrow\Rset$
that satisfies (C1) and (C2) above, and the additional condition (C3') below (instead of (C3)):
\begin{compactitem}
\item[\emph{(C3')}] for all non-terminal labels $\loc\neq\lout{}$ and reachable valuations $\pv \in  I(\loc)$, we have $\pre_h(\loc,\pv)\ge h(\loc,\pv)$;
\end{compactitem}
\end{definition}
Intuitively, a PUCS requires the pre-expectation $\pre_h$ to be no more than $h$ itself, while a PLCS requires the converse, i.e.~that $\pre_h$ should be no less than $h$.

\begin{example} \label{ex:alsoplcs}
	As shown in Example~\ref{ex:pucs1}, the function $h$ given in Example~\ref{ex:simple} (Figure~\ref{fig:ex2}) satisfies (C1) and (C2). Using the table in Figure~\ref{fig:ex2}, one can verify that $h$ satisfies (C3') as well. Hence, $h$ is a PLCS.
\end{example}

In the following sections, we prove that PUCS's and PLCS's are sound methods for obtaining upper and lower bounds on the expected accumulated cost of a program.

\subsection{General Unbounded Costs and Bounded Updates}\label{sec:soundness}

In this section, we consider nondeterministic probabilistic programs with general unbounded costs, i.e.~both positive and negative costs, and bounded updates to the program variables. Using our extension of the OST (Theorem~\ref{thm:eost}), we show that PUCS's and PLCS's are sound for deriving upper and lower bounds for the expected accumulated cost.

Recall that the extended OST has two prerequisites. One is that, for sufficiently large $n$, the stopping time $U$ should have exponentially decreasing probability of nontermination, i.e.~$\probm(U > n) \leq c_1 \cdot e^{-c_2 \cdot n}$. The other is that the stepwise difference $\vert X_{n+1} - X_n \vert$ should be bounded by a polynomial on the number $n$ of steps. We first describe how these conditions affect the type of programs that can be considered, and then provide our formal soundness theorems.

The first prerequisite is equivalent to the assumption that the program
has the concentration property.
To ensure the first prerequisite, we apply the existing approach of difference-bounded ranking-supermartingale maps~\cite{ChatterjeeFNH16,ChatterjeeFG16}.
We ensure the second prerequisite by assuming the bounded update condition, i.e.~that every assignment to each program variable changes the value of the variable by a bounded amount. We first formalize the concept of bounded update and then argue why it is sufficient to ensure the second prerequisite.

\begin{definition}[Bounded Update]\label{bu}
A  program $P$ with invariant $I$ has the \emph{bounded update} property over its program variables, if there exists a constant $M>0$ such that
for every assignment label $\loc$ with update function $F_\loc$, we have
$\forall \pv\in I(\loc)\,~\forall \rv\,~\forall x\in \pvars\,~~ |F_\loc(\pv,\rv)(x)-\pv(x)|\le M$\enskip.
\end{definition}

\paragraph{The reason for assuming bounded update.} A consequence of the bounded update condition is that at the $n$-th execution step of any run of the program, the absolute value of any program variable $x$ is bounded by $M\cdot n + x_0$, where
$M$ is the constant bound in the definition above and $x_0$ is the initial value of the variable $x$. Hence, for large enough $n$, the absolute value of any variable $x$ is bounded by $(M+1) \cdot n$.
Therefore, given a PUCS $h$ of degree $d$, one can verify that
the step-wise difference of $h$ is bounded by a polynomial on the number $n$ of steps. More concretely, $h$ is a degree-$d$ polynomial over variables that are bounded by $(M+1) \cdot n$, so $h$ is bounded by $M'\cdot n^d$ for some constant $M'> 0$ .
Thus, the bounded update condition is sufficient to fulfill the second prerequisite of our extended OST.

Based on the discussion above, we have the following soundness theorems:

\begin{theorem}[Soundness of PUCS]\label{thm:lmf1}
Consider a nondeterministic probabilistic program $P$, with a linear invariant $I$ and a PUCS $h$. If $P$ satisfies the concentration property and the bounded update property, then $\supval(\pv)\le h(\lin,\pv)$ for all initial valuations $\pv\in I(\lin)$.
\end{theorem}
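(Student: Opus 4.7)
The plan is to reduce the bound on $\supval(\pv)$ to an application of our Extended OST (Theorem~\ref{thm:eost}) by fusing the PUCS $h$ with the partial cost accumulated so far into a single supermartingale, and then stopping at the program's termination time $T$. Fix an arbitrary scheduler $\sigma$ and initial valuation $\pv\in I(\lin)$; together with the distributions of the sampling variables, these induce a probability space on program runs, and the sequence of configurations $\{(\loc_n,\pv_n)\}_{n=0}^\infty$ generates a natural filtration $\{\mathcal{F}_n\}_{n=0}^\infty$. I would then define
\[
X_n \ := \ h(\loc_n,\pv_n) \ + \ \sum_{m=0}^{n-1} C_m,
\]
so that $X_0 = h(\lin,\pv)$, and aim to show that $\{X_n\}$ is a supermartingale wrt $\{\mathcal{F}_n\}$.

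The heart of the argument is this supermartingale property, which I would verify by a case analysis on the type of $\loc_n$ using Definition~\ref{def:pre-exp1}. For assignment, branching and probabilistic labels $C_n=0$, and the conditional expectation of $h(\loc_{n+1},\pv_{n+1})$ given $\mathcal{F}_n$ coincides exactly with $\pre_h(\loc_n,\pv_n)$; at a tick label the cost $R_{\loc_n}(\pv_n)=C_n$ is packed directly into $\pre_h$; at a nondeterministic label, any scheduler choice of successor $\loc'$ satisfies $h(\loc',\pv_n)\le\max_{(\loc_n,\star,\loc'')\in\transitions{}}h(\loc'',\pv_n)=\pre_h(\loc_n,\pv_n)$, regardless of how $\sigma$ resolves the nondeterminism; and at $\loc_n=\lout$ the process is absorbed since $h(\lout,-)=0$ and $\lout\notin\tlocs$. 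In every case, condition (C3) together with the invariant $\pv_n\in I(\loc_n)$ forces $\condexpv{X_{n+1}}{\mathcal{F}_n}\le X_n$, uniformly over schedulers.

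Finally, I would verify the two hypotheses of the Extended OST for the stopping time $T$. The concentration property gives hypothesis (i) directly. For hypothesis (ii), the bounded-update assumption yields $|\pv_n(x)|\le|\pv_0(x)|+M\cdot n$ for every program variable $x$, so the degree-$d$ polynomial $h(\loc_n,\pv_n)$ is bounded by $O(n^d)$; and $C_n$, being the evaluation of a polynomial cost expression at $\pv_n$, is similarly polynomially bounded in $n$, giving $|X_{n+1}-X_n|\le M'\cdot n^{d'}$ almost surely. Theorem~\ref{thm:eost} then delivers $\expv(X_T)\le\expv(X_0)=h(\lin,\pv)$. Since $T<\infty$ almost surely and $\loc_T=\lout$ makes $h(\loc_T,\pv_T)=0$ while $C_m=0$ for $m\ge T$, we obtain $X_T=C_\infty$ and hence $\expv_\pv^\sigma(C_\infty)\le h(\lin,\pv)$; taking the supremum over $\sigma$ closes the proof. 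I expect the principal subtlety to be the uniform handling of nondeterminism---the supermartingale inequality must hold pointwise no matter which successor $\sigma$ selects, which is precisely what the $\max$ in the pre-expectation guarantees---together with the careful verification that the polynomial difference bound on $X_n$ holds almost surely on the sample space induced by the sampling distributions, so that the new hypothesis of the Extended OST is genuinely applicable.
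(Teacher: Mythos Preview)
Your proposal is correct and follows essentially the same route as the paper: you build the process $X_n=h(\loc_n,\pv_n)+\sum_{m<n}C_m$ (the paper calls this $Y_n$, reserving $X_n$ for $h(\loc_n,\pv_n)$ alone), establish the supermartingale property by the same case analysis on label types (packaged in the paper as a separate lemma), and then invoke the Extended OST with the concentration and bounded-update hypotheses exactly as you describe. The identification $X_T=C_\infty$ via (C2) and the final supremum over schedulers match the paper's argument as well.
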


\begin{proof}[Proof Sketch]
We define the stochastic process $\{X_n\}_{n=0}^\infty$ as $X_n := h(\overline{\loc}_n,\overline{\pv}_n)$, where $\overline{\loc}_n$ is the random variable representing
the label at the $n$-th step of a program run, and $\overline{\pv}_n$ is a vector of random variables consisting of components $\overline{\pv}_n(x)$
which represent values of program variables $x$ at the $n$-th step.
Furthermore, we construct the stochastic process $\{Y_n\}_{n=0}^\infty$ such that $Y_n=X_n+\sum_{k=0}^{n-1} C_k$. Recall that $C_k$ is the cost of the $k$-th step of the run and $C_\infty = \sum_{k=0}^\infty C_k$.
We consider the termination time $T$ of $P$ and prove that $\{Y_n\}_{n=0}^\infty$ satisfies the prerequisites of our extended OST (Theorem~\ref{eost}). This proof depends on the assumption that $P$ has concentration and bounded update properties.
Then by applying Theorem~\ref{eost}, we have that $\expv(Y_T)\le \expv(Y_0)$.
Since $Y_T=X_T+\sum_{k=0}^{T} C_k= C_\infty$, we obtain the desired result.  For a more detailed proof, see Appendix~\ref{app:PUPFs}.
\end{proof}

\begin{example}\label{ex:complex3}
Given that the $h$ in Example~\ref{ex:simple} is a PUCS, we can conclude that for all initial values $x_0$ and $y_0$, we have $\supval(x_0,y_0)\le h(\loc_1,x_0,y_0)=\frac{1}{3}x_0^2+\frac{1}{3}x_0$.
\end{example}

We showed that PUCS's are sound upper bounds for the expected accumulated cost of a program. The following theorem provides a similar result for PLCS's and lower bounds.

\begin{theorem}[Soundness of PLCS, Proof in Appendix~\ref{app:PLPFs}]\label{thm:llmf1}
	
	Consider a nondeterministic probabilistic program $P$, with a linear invariant $I$ and a PLCS $h$. If $P$ satisfies the concentration property and the bounded update property, then $\supval(\pv)\ge h(\lin,\pv)$ for all initial valuations $\pv\in  I(\lin)$.
\end{theorem}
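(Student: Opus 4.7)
The plan is to mirror the argument for the soundness of PUCS (Theorem~\ref{thm:lmf1}), with the roles of supermartingale/submartingale and upper/lower bound reversed. A subtle twist is that while (C3) yields a supermartingale under \emph{every} scheduler (the $\max$ in $\pre_h$ at nondeterministic labels forces $h(\loc',\pv)\le h(\loc,\pv)$ for each successor $\loc'$), condition (C3') only guarantees the existence of \emph{some} successor $\loc'$ with $h(\loc',\pv)\ge h(\loc,\pv)$. Hence we only get a submartingale under a specific ``maximizing'' scheduler, and we have to exhibit this scheduler to witness the lower bound.

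First I would construct the witness scheduler $\sigma^*$: at every nondeterministic label $\loc$ reached with valuation $\pv$, let $\sigma^*$ pick a successor $\loc'$ attaining $\max_{(\loc,\star,\loc')\in\transitions} h(\loc',\pv)$ (breaking ties arbitrarily). I would then form the stochastic processes $X_n := h(\overline{\loc}_n,\overline{\pv}_n)$ and $Y_n := X_n + \sum_{k=0}^{n-1} C_k$ on the probability space induced by $\sigma^*$, adapted to the natural filtration $\{\mathcal{F}_n\}_{n=0}^\infty$ of configurations. A case analysis on label types, using the choice of $\sigma^*$ at nondeterministic labels and condition (C3') elsewhere, yields $\condexpv{Y_{n+1}}{\mathcal{F}_n}\ge Y_n$ almost surely, so $\{Y_n\}_{n=0}^\infty$ is a submartingale. (At tick labels the extra summand $C_n = R_{\loc_n}(\overline{\pv}_n)$ is precisely what aligns $\condexpv{X_{n+1}+C_n}{\mathcal{F}_n}$ with $\pre_h$.)

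Next I would invoke the extended OST (Theorem~\ref{thm:eost}) in its submartingale form, with stopping time $T$ equal to the termination time of the run under $\sigma^*$. Its two prerequisites are checked exactly as in the PUCS proof. The concentration property holds under every scheduler, so in particular $\probm(T>n)\le c_1\cdot e^{-c_2\cdot n}$ under $\sigma^*$. The bounded-update property together with (C1) forces $|Y_{n+1}-Y_n|\le M'\cdot n^d$ for some $M'>0$: after $n$ steps every program variable has absolute value $O(n)$, and both $h$ and the step-wise cost are polynomial of degree at most $d$ in the program variables. The OST then yields $\expv(Y_T)\ge \expv(Y_0)=h(\lin,\pv)$. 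Finally, (C2) gives $X_T = h(\lout,\overline{\pv}_T)=0$, so $Y_T = \sum_{k=0}^{T-1} C_k = C_\infty$, whence $\expv_\pv^{\sigma^*}(C_\infty)\ge h(\lin,\pv)$. Taking the supremum over schedulers concludes $\supval(\pv)\ge h(\lin,\pv)$.

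The main obstacle is precisely this asymmetry between (C3) and (C3'): for PUCS the supermartingale property holds uniformly over schedulers, but for PLCS one must explicitly construct the ``max-picking'' witness scheduler $\sigma^*$ and then verify that the termination time $T$ under $\sigma^*$ still satisfies the exponential concentration bound demanded by the extended OST. Once this is pinned down, the remainder of the argument is a direct translation of the PUCS proof, with all inequality directions flipped.
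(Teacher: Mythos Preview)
Your proposal is correct, and in one respect it is \emph{more} careful than the paper's own proof. The paper (Appendix~\ref{app:PLPFs}) mirrors the PUCS argument too literally: it writes ``Fix any scheduler $\sigma$'' and then asserts ``by (C3'), we have $\condexpv{X_{n+1}+C_n}{\mathcal{F}_n}\ge h(\overline{\loc}_n,\overline{\pv}_n)$'', treating the submartingale property as if it held under every scheduler. As you correctly observe, this step does not go through at nondeterministic labels: since $\pre_h$ is defined with a $\max$, condition (C3') only guarantees $\max_{\loc'} h(\loc',\pv)\ge h(\loc,\pv)$, so an arbitrary scheduler may pick a successor $\loc'$ with $h(\loc',\pv)<h(\loc,\pv)$, breaking the submartingale inequality. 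Your fix --- constructing the explicit max-picking witness scheduler $\sigma^*$ and showing $\{Y_n\}$ is a submartingale under $\sigma^*$, then concluding via $\supval(\pv)\ge \expv_\pv^{\sigma^*}(C_\infty)$ --- is the right way to close this gap. Everything else (the bounded-difference estimate from bounded updates, the concentration bound, the application of the extended OST to $\{-Y_n\}$ or equivalently its submartingale form, and the use of (C2) at termination) matches the paper's argument.
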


\begin{example}\label{ex:complex5}
Given that the $h$ in Example~\ref{ex:simple} is a PLCS, we can conclude that for all initial values $x_0$ and $y_0$, we have $\supval(x_0,y_0)\ge h(\loc_1,x_0,y_0)=\frac{1}{3}x_0^2+\frac{1}{3}x_0$.
\end{example}

\begin{remark}
Putting together the results from Examples~\ref{ex:complex3} and~\ref{ex:complex5}, we conclude that the expected accumulated cost of Example~\ref{ex:simple} is precisely $\frac{1}{3}x_0^2+\frac{1}{3}x_0$.
\end{remark}

\begin{remark}
	The motivating examples in Sections~\ref{sec:mining}, \ref{sec:pool} and \ref{sec:queue}, i.e. Bitcoin mining, Bitcoin pool mining and FJ queuing networks, have potentially unbounded costs that can be both positive and negative. Moreover, they satisfy the bounded update property. Therefore, using PUCS's and PLCS's leads to sound bounds on their expected accumulated costs.
\end{remark}

\subsection{Unbounded Nonnegative Costs and General Updates}\label{sec:soundness2}

In this section, we consider programs with unbounded \emph{nonnegative} costs, and show that a PUCS is a sound upper bound for their expected accumulated cost.
This result holds for programs with arbitrary unbounded updates to the variables.

Our main tool is the well-known Monotone Convergence Theorem (MCT)~\cite{williams1991probability}, which states that if $X$ is a random variable and $\{X_n\}_{n=0}^\infty$ is a non-decreasing discrete-time stochastic process such that $\lim_{n \rightarrow \infty} X_n = X$ almost surely, then $\lim_{n \rightarrow \infty} \expv(X_n) = \expv(X)$.

As in the previous case, the first step is to translate the program to a stochastic process. However, in contrast with the previous case, in this case we only consider \emph{nonnegative} PUCS's. This is because all costs are assumed to be nonnegative. We present the following soundness result:

\begin{theorem}[Soundness of nonnegative PUCS]\label{thm:lmf2}
	Consider a nondeterministic probabilistic program $P$, with a linear invariant $I$ and a nonnegative PUCS $h$. If all the step-wise costs in $P$ are always nonnegative, then $\supval(\pv)\le h(\lin,\pv)$ for all initial valuations $\pv\in I(\lin)$.
\end{theorem}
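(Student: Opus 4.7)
The plan is to mirror the structure used in the proof sketch of Theorem~\ref{thm:lmf1}, but to replace the Extended OST (which required bounded updates and concentration) with the Monotone Convergence Theorem, which only requires nonnegativity. Fix a scheduler $\sigma$ and an initial valuation $\pv\in I(\lin)$. As in the earlier proof, define the stochastic process $X_n := h(\overline{\loc}_n,\overline{\pv}_n)$ where $(\overline{\loc}_n,\overline{\pv}_n)$ is the random configuration at step $n$ of the run, adapted to the natural filtration $\{\mathcal{F}_n\}_{n=0}^\infty$ generated by the program execution. Let $S_n := \sum_{k=0}^{n-1} C_k$ and $Y_n := X_n + S_n$.

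The first key step is to verify that $\{Y_n\}_{n=0}^\infty$ is a supermartingale under every scheduler $\sigma$. By induction on the type of the label $\overline{\loc}_n$, a case analysis on the clauses of Definition~\ref{def:pre-exp1} shows that
\begin{equation*}
\condexpv{Y_{n+1}}{\mathcal{F}_n} \;=\; C_n + \condexpv{X_{n+1}}{\mathcal{F}_n} + S_n \;\le\; \pre_h(\overline{\loc}_n,\overline{\pv}_n) + S_n \;\le\; X_n + S_n \;=\; Y_n,
\end{equation*}
where the first inequality absorbs the nondeterministic choice into the $\max$ in Definition~\ref{def:pre-exp1}, and the second is the PUCS condition (C3), which applies because $\overline{\pv}_n \in I(\overline{\loc}_n)$ by the invariant property. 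Since $h$ is nonnegative and costs are nonnegative, each $Y_n \ge 0$, so $\expv_{\pv}^\sigma(|Y_n|) = \expv_{\pv}^\sigma(Y_n) \le Y_0 = h(\lin,\pv) < \infty$, and the supermartingale property is well-defined.

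The second step extracts the cost bound. From the supermartingale inequality, $\expv_{\pv}^\sigma(Y_n) \le Y_0 = h(\lin,\pv)$ for every $n$. Nonnegativity of $h$ gives $X_n\ge 0$, hence $\expv_{\pv}^\sigma(S_n) \le \expv_{\pv}^\sigma(Y_n) \le h(\lin,\pv)$. The third step is the limit passage. Since all $C_k\ge 0$, the sequence $\{S_n\}_{n=0}^\infty$ is almost surely nondecreasing with pointwise limit $C_\infty = \sum_{k=0}^\infty C_k$. The Monotone Convergence Theorem then yields
\begin{equation*}
\expv_{\pv}^\sigma(C_\infty) \;=\; \lim_{n\to\infty}\expv_{\pv}^\sigma(S_n) \;\le\; h(\lin,\pv).
\end{equation*}
Taking the supremum over schedulers $\sigma$ gives $\supval(\pv)\le h(\lin,\pv)$.

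The main technical obstacle, compared with the more delicate proof of Theorem~\ref{thm:lmf1}, is notably mild here: we do not need to control $\vert X_{n+1}-X_n \vert$ by a polynomial nor invoke exponential concentration of the termination time, because nonnegativity of both $h$ and the costs makes MCT directly applicable. The only care required is (i) ensuring the supermartingale inequality holds uniformly over all schedulers, which is handled by the $\max$-clause in the definition of $\pre_h$ at nondeterministic labels, and (ii) verifying that at each step $\overline{\pv}_n$ lies in $I(\overline{\loc}_n)$ so that (C3) can be applied, which follows from the definition of an invariant.
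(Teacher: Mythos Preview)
Your proposal is correct and follows essentially the same approach as the paper: both fix a scheduler, use condition (C3) to obtain the one-step inequality $\condexpv{X_{n+1}+C_n}{\mathcal{F}_n}\le X_n$, drop the nonnegative term $\expv(X_{n+1})$ to bound the partial cost sums by $h(\lin,\pv)$, and then pass to the limit via the Monotone Convergence Theorem. The only cosmetic difference is that you package the argument through the auxiliary supermartingale $Y_n=X_n+S_n$ (as the paper does for Theorem~\ref{thm:lmf1}), whereas the paper's proof of Theorem~\ref{thm:lmf2} works directly with expectations and an induction on $n$; the content is the same. One minor presentational point: your integrability claim $\expv(|Y_n|)\le Y_0$ appeals to the supermartingale property you are in the process of establishing, so strictly speaking you should phrase it as an induction on $n$ (or use that conditional expectations of nonnegative variables are always defined) to avoid circularity.
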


\begin{proof}[Proof Sketch]
We define the stochastic process $\{X_n\}_{n=0}^\infty$ as in Theorem~\ref{thm:llmf1}, i.e.~$X_n := h(\overline{\loc}_n,\overline{\pv}_n)$.
 By definition, for all~$n$, we have $\expv(X_{n+1}) + \expv(C_n) \leq \expv(X_n)$, hence by induction, we get $\expv(X_{n+1}) + \sum_{m=0}^n \expv(C_m) \leq \expv(X_0)$. Given that $h$ is nonnegative, $\expv(X_{n+1})\ge 0$, so $\sum_{m=0}^n \expv(C_m) \leq \expv(X_0)$.
By applying the MCT, we obtain $\expv(C_\infty) = \expv(\lim_{n \rightarrow \infty} \sum_{m=0}^n C_m) = \lim_{n \rightarrow \infty} \sum_{m=0}^n \expv(C_m) \le \expv(X_0)$, which is the desired result.
For a more detailed proof, see Appendix~\ref{app:PUPFs2}.
\end{proof}

\begin{remark}
	The motivating example in Section~\ref{sec:problin}, i.e. the species fight stochastic linear recurrence, has unbounded nonnegative costs. Hence, nonnegative PUCS's lead to sound upper bounds on its expected accumulated cost.
\end{remark}

\rd{
\begin{remark} We remark two points about general updates:
	\begin{compactitem}
		\item As in Theorem~\ref{thm:lmf2}, our approach for general updates requires the PUCS to be nonnegative. However, note that our approach for bounded updates (Section~\ref{sec:soundness}), does not have this requirement.
		\item Since we are considering real-valued variables, unbounded updates cannot always be replaced with bounded updates and loops. (e.g.~consider $x := 3.1415 \cdot x.$)
	\end{compactitem}
\end{remark}
}

\section{Algorithmic Approach} \label{sec:computational}


In this section, we provide automated algorithms that, given a program $P$, an initial valuation $\pv^*$, a linear invariant $I$ and a constant  $d$, synthesize a PUCS/PLCS of degree $d$. For brevity, we only describe our algorithm for PUCS synthesis. A PLCS can be synthesized in the same manner. Our algorithms run in polynomial time and reduce the problem of PUCS/PLCS synthesis to a linear programming instance by applying Handelman's theorem.

In order to present Handelman's theorem, we need a few basic definitions. Let $X$ be a finite set of variables and $\Gamma \subseteq \setR[X]$ a finite set of linear functions (degree-$1$ polynomials) over $X$. We define $\sat{\Gamma} \subseteq \val{X}$ as the set of all valuations $v$ to the variables in $X$ that satisfy $g_i(v) \geq 0$ for all $g_i \in \Gamma$. We also define the monoid set of $\Gamma$ as
$$
\monoid(\Gamma) :=\left\{\prod\limits_{i=1}^t g_i \ \vert~~t\in \setN \cup \{0\} ~\wedge~ \ g_1,\ldots,g_t \in\Gamma\right\}.
$$
By definition, it is obvious that if $g \in \monoid(\Gamma)$, then for every $v \in \sat{\Gamma}$, we have $g(v) \geq 0$. Handelman's theorem characterizes every polynomial $g$ that is positive over $\sat{\Gamma}$.

\begin{theorem}[\bf Handelman's Theorem~\cite{handelman1988representing}]\label{thm:Handelman}
	Let $g\in\setR[X]$ and $g(\mathbf{x}) > 0$ for all $\mathbf{x}\in\langle \Gamma\rangle$. If $\langle \Gamma\rangle$ is compact, then
	\[
	 g= \sum\limits_{k=1}^s c_k \cdot f_k \  \  \ \quad \quad \handelmanformat
	\]
	for some $s\in\Nset$, $c_1,\ldots,c_s>0$ and $f_1,\ldots,f_s\in Monoid(\Gamma)$.
\end{theorem}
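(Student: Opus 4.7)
The plan is to derive Handelman's theorem from Pólya's classical positivity result: if $p(y_1,\ldots,y_s)$ is a homogeneous polynomial with $p>0$ on the standard simplex $\Delta=\{\mathbf{y}\ge 0 : \sum_i y_i=1\}$, then for some $N\in\Nset$ every coefficient of $(y_1+\cdots+y_s)^N\cdot p$ is nonnegative. Handelman's representation $\handelmanformat$ is precisely what one obtains by pulling such a Pólya expansion back through the linear change of variables that identifies the slack forms $g_i$ with new coordinates $y_i$.

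Concretely, I would first homogenize. Writing each $g_i(\mathbf{x})=a_i^T\mathbf{x}+b_i$, introduce an auxiliary variable $x_0$ and set $\widetilde{g}_i(x_0,\mathbf{x}) := a_i^T\mathbf{x}+b_i x_0$, and let $\widetilde{g}$ be a degree-$D$ homogenization of $g$ for some $D\ge\deg g$. Compactness of $\langle\Gamma\rangle$ permits one, after adjoining a single redundant bounding inequality to $\Gamma$ if necessary, to arrange that $\{\widetilde{g}_i\ge 0\}\cap\{x_0=1\}$ recovers $\langle\Gamma\rangle$ and is affinely equivalent to a simplex. The substitution $y_i:=\widetilde{g}_i$ then presents $\widetilde{g}$ as a homogeneous polynomial in $\mathbf{y}$ that is strictly positive on the image simplex.

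Next, I would invoke Pólya's theorem to obtain $(y_1+\cdots+y_s)^N\cdot\widetilde{g} = \sum_\alpha c_\alpha\,\mathbf{y}^\alpha$ with every $c_\alpha\ge 0$. Translating back via $y_i\mapsto g_i$, each monomial $\mathbf{y}^\alpha$ becomes a product $\prod_i g_i^{\alpha_i}\in\monoid(\Gamma)$. One then absorbs the factor $(g_1+\cdots+g_s)^N$, which itself expands into $\monoid(\Gamma)$-terms with positive coefficients; dividing through — justified by the fact that $\sum_i g_i$ is strictly positive on $\langle\Gamma\rangle$ under the normalization fixed in the previous step — and collecting terms yields exactly the form $g=\sum_{k=1}^s c_k\cdot f_k$ with $c_k>0$ and $f_k\in\monoid(\Gamma)$.

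The main obstacle will be the change of coordinates when $s$ exceeds the ambient dimension: the map $\mathbf{x}\mapsto(g_1(\mathbf{x}),\ldots,g_s(\mathbf{x}))$ is not injective and its image sits inside the proper affine subspace carved out by the linear dependencies among the $g_i$, so Pólya must be applied on this embedded simplex rather than on the full standard one, and some care is needed to ensure the resulting nonnegative-coefficient identity pulls back cleanly through the linear relations. An alternative that bypasses this geometric bookkeeping is a direct induction on $\deg g$, using compactness only to extract a uniform lower bound $g\ge\varepsilon>0$ on $\langle\Gamma\rangle$ and then reducing the degree at each step by subtracting a suitable nonnegative combination of monoid elements.
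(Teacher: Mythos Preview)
The paper does not prove Handelman's theorem at all; it is quoted as a classical result with a citation to~\cite{handelman1988representing} and then used as a black box in the synthesis algorithm. So there is no proof in the paper to compare your attempt against.

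On the merits of your sketch: reducing to P\'olya is a legitimate and well-known route, but your ``dividing through'' step is wrong as stated. You cannot divide by $\sum_i g_i$ in the polynomial ring just because it is strictly positive on $\langle\Gamma\rangle$; that would only give a rational-function identity, not a polynomial one. The way the P\'olya argument actually closes, in the simplex case, is that after normalizing the $g_i$ to barycentric coordinates one has $\sum_i g_i \equiv c$ for a \emph{constant} $c>0$, so the P\'olya factor $(\sum_i y_i)^N$ becomes the scalar $c^N$ upon substitution and no division is needed. When $s$ exceeds $\dim X + 1$ this identity is unavailable, and---as you correctly flag---one must either work on the affine image of the slack map or argue differently (triangulate into simplices, or follow Handelman's original Kadison--Dubois/Krivine-style argument). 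Your alternative induction-on-degree idea is not a standard proof of this statement and would need substantially more detail to be convincing; in particular, it is unclear what ``reducing the degree by subtracting a nonnegative combination of monoid elements'' means concretely while preserving strict positivity.
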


Intuitively, Handelman's theorem asserts that every polynomial $g$ that is positive over $\sat{\Gamma}$ must be a positive linear combination of polynomials in $\monoid(\Gamma)$. This means that in order to synthesize a polynomial that is positive over $\sat{\Gamma}$ we can limit our attention to polynomials of the form $\handelmanformat$. When using Handelman's theorem in our algorithm, we fix a constant $K$ and only consider those elements of $Monoid(\Gamma)$ that are obtained by $K$ multiplicands or less.

We now have all the required tools to describe our algorithm for synthesizing a PUCS.

\smallskip\paragraph{PUCS Synthesis Algorithm.} The algorithm has four steps:
\begin{compactenum}[(1)]
	\item \emph{Creating a Template for $h$.} Let $X = \pvars$ be the set of program variables. According to (C1), we aim to synthesize a PUCS $h$, such that for each label $\loc_i$ of the program, $h(\loc_i)$ is a polynomial of degree at most $d$ over $X$. Let $M_d(X) = \{\bar{f}_1, \bar{f}_2, \ldots, \bar{f}_r\}$ be the set of all monomials of degree at most $d$ over the variables $X$. Then, $h(\loc_i)$ has to be of the form  $\sum_{j=1}^r a_{ij} \cdot \bar{f}_j$ for some unknown real values $a_{ij}$. We call this expression a template for $h(\loc_i)$. Note that by condition (C2) the template for $h(\lout)$ is simply $h(\lout) = 0$. The algorithm computes these templates at every label $\loc_i$, treating the $a_{ij}$'s as unknown variables.
	
	\item \emph{Computing Pre-expectation.} The algorithm symbolically computes a template for $\pre_{h}$ using Definition~\ref{def:pre-exp1} and the template obtained for $h$ in step (1). This template will also contain $a_{ij}$'s as unknown variables.
	
	\item \emph{Pattern Extraction.} The algorithm then processes condition (C3) by symbolically computing polynomials 
$g = h(\loc_i) - \pre_h(\loc_i)$ for every label $\loc_i$. Then, as in Handelman's theorem, it rewrites each $g$ on the left-hand-side of the equations above in the form $\handelmanformat$, using the linear invariant $I(\loc_i)$ as the set $\Gamma$ of linear functions.
The nonnegativity of $h$ is handled in a similar way.
This effectively translates (C3) and the nonnegativity into a system $S$ of linear equalities over the $a_{ij}$'s and the new nonnegative unknown variables $c_k$ resulting from equation $\handelmanformat$.
	\item \emph{Solution via Linear Programming.} The algorithm calls an LP-solver to find a solution of $S$ that optimizes $h(\lin, \pv^*)$.
\end{compactenum}

If the algorithm is successful, i.e.~if the obtained system of linear equalities is feasible, then the solution to the LP contains values for the unknowns $a_{ij}$ and hence, we get the coefficients of the PUCS $h$. Note that we are optimizing for $h(\lin, \pv^*)$, so the obtained PUCS is the one that produces the best polynomial upper bound for the expected accumulated cost of $P$ with initial valuation $\pv^*$. We use the same algorithm for PLCS synthesis, except that we replace (C3) with (C3').

\begin{theorem}
	The algorithm above has polynomial runtime and synthesizes sound upper and lower bounds for the expected accumulated cost of the given program $P$.
\end{theorem}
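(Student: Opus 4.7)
The plan is to split the argument into two independent parts: soundness of the synthesized bounds and polynomial runtime of the algorithm. For soundness, I would argue that whenever the LP in step~(4) is feasible, the function $h$ obtained by substituting the computed $a_{ij}$ values into the template is genuinely a PUCS wrt the invariant $I$. Condition (C1) is immediate because the template is a polynomial of degree at most $d$ by construction; condition (C2) follows because the template at $\lout$ is set to $0$ in step~(1). The only nontrivial point is (C3), and here the key observation is the \emph{easy direction} of Handelman's theorem: any polynomial written as $\sum_{k=1}^{s} c_k\cdot f_k$ with $c_k\ge 0$ and $f_k\in\monoid(I(\loc_i))$ is automatically nonnegative on $\sat{I(\loc_i)}$, since every $f_k$ is a product of expressions that are nonnegative on $\sat{I(\loc_i)}$. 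Thus the linear equalities produced in step~(3) by matching coefficients of $h(\loc_i)-\pre_h(\loc_i)$ against the $\handelmanformat$ template, together with the sign constraints $c_k\ge 0$, certify (C3) wherever they hold. Once $h$ is a PUCS, Theorem~\ref{thm:lmf1} (or Theorem~\ref{thm:lmf2} in the nonnegative-cost regime) gives $\supval(\pv)\le h(\lin,\pv)$. The PLCS case is symmetric, using (C3') and Theorem~\ref{thm:llmf1}.

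Before applying Handelman directly, I need to handle the label types whose pre-expectation is not polynomial in a single piece. For branching labels, (C3) splits along the guard $\phi$ into two polynomial inequalities, one on $\sat{I(\loc)\cup\{\phi\}}$ and one on $\sat{I(\loc)\cup\{\neg\phi\}}$; since $I$ is a finite union of polyhedra (hence $\neg\phi$ can be encoded as a disjunction of linear inequalities), each piece can be fed into Handelman separately. For nondeterministic labels, the $\max$ in $\pre_h$ is rewritten as one inequality per successor, each of which is again polynomial. Tick, assignment and probabilistic-branching labels already produce a single polynomial pre-expectation. So after a linear-size pre-processing, step~(3) consists of applying Handelman's certificate template to a polynomial number of polynomial inequalities.

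For the runtime bound, I would count the sizes of all the objects built by the algorithm. With $n=|\pvars|$ program variables and fixed degree $d$, the template at each label uses $\binom{n+d}{d}=O(n^d)$ monomials, and there are $|\locs|$ labels, so the total number of unknowns $a_{ij}$ is polynomial in the program size. Since update functions, cost functions, and guards are polynomials of bounded size, computing $\pre_h$ symbolically in step~(2) is polynomial. In step~(3), after fixing a Handelman bound $K$ on the number of multiplicands, the number of candidate products $f_k$ at each label is $O(m^K)$, where $m$ bounds the number of linear atoms in $I(\loc)$; this is polynomial for fixed $K$. Matching coefficients between a polynomial of degree $\le \max(d,K)$ on the left and the Handelman template on the right yields $O(n^{\max(d,K)})$ linear equalities per label. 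Thus the full LP in step~(4) has polynomially many variables and constraints, and it can be solved in polynomial time by any standard LP algorithm. Optimizing the linear objective $h(\lin,\pv^*)$ does not change this.

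The main obstacle I anticipate is not any single step but the bookkeeping in step~(3): one must verify that every case split (branching guards, nondeterministic successors, the extra nonnegativity requirement needed for Theorem~\ref{thm:lmf2}, and the terminal label) produces only a polynomial number of polynomial inequalities, each of which admits a Handelman certificate of polynomial size. Once this is checked carefully, soundness reduces cleanly to the easy direction of Handelman plus the previously proved soundness theorems for PUCS/PLCS, and the runtime bound follows from standard polynomial-time LP solvability.
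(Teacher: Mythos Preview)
Your proposal is correct and follows essentially the same approach as the paper's proof: verify (C1) and (C2) by construction of the template, argue that step~(3) enforces (C3) (and nonnegativity where needed) via the easy direction of Handelman, then invoke the earlier soundness theorems; for runtime, observe that steps (1)--(3) are polynomial-size symbolic computations and step~(4) solves a polynomial-size LP. In fact you spell out considerably more than the paper does---the case split for branching and nondeterministic labels, the explicit monomial and Handelman-product counts, and the observation that only the trivial direction of Handelman is needed for soundness---all of which the paper's three-line proof leaves implicit.
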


\begin{proof}
	
	Step (1) ensures that (C1), (C2) are satisfied, while step (3) forces the polynomials $h$ and $g$'s to be nonnegative, ensuring nonnegativity and (C3). So the synthesized $h$ is a PUCS.
	Steps (1)--(3) are polynomial-time symbolic computations. Step (4) solves an LP of polynomial size. Hence, the runtime is polynomial wrt the length of the program.
	The reasoning for PLCS synthesis is similar.
\end{proof}
\lstset{language=prog}
\lstset{tabsize=3}
\newsavebox{\runningexampleaa}
\begin{lrbox}{\runningexampleaa}
\begin{lstlisting}[mathescape]
$1$: $[x\ge 0]$ while $x\ge 1$ do
$2$: $[x\ge 1]$   $x:=x+r$;
$3$: $[x\ge 0]$   $y:=r'$;
$4$: $[x\ge 0]$   tick$(x * y)$
      od
$5$: $[x\ge 0]$
\end{lstlisting}
\end{lrbox}

\begin{example}\label{al:simple}
Consider the program in Figure~\ref{fig:ex2} (Page~\pageref{fig:ex2}).
Let the initial valuation be $x_0 = 100, y_0 = 0$. To obtain a quadratic PUCS, i.e.~$d = 2$, our algorithm proceeds as follows:
\begin{compactenum}[(1)]
	\item A quadratic template is created for $h$, by setting $h(\loc_n,x,y):=a_{n1}\cdot x^2+a_{n2}\cdot xy+a_{n3}\cdot x+a_{n4}\cdot y^2+a_{n5}\cdot y+a_{n6}$. This template contains all monomials of degree $2$ or less.
	\item A template for the function $\pre_h$ is computed in the same manner as in Example~\ref{ex:simple}, except for that the computation is now symbolic and contains the unknown variables $a_{ij}$. The resulting template is  presented
in Table~\ref{ta:runningexample}.
	\item For each label $\loc_i$, the algorithm symbolically computes $g = h(\loc_i) - \pre_h(\loc_i)$. For example, for $\loc_3$, the algorithm computes $g(x, y) = h(\loc_3, x, y) - \pre_h(\loc_3, x, y) = (a_{31} - a_{41}) \cdot x^2 + a_{32} \cdot x y + (a_{33} - \frac{1}{3} a_{42} - a_{43}) \cdot x + a_{34} \cdot y^2 + a_{35} \cdot y + a_{36} - a_{44} - \frac{1}{3} a_{45} - a_{46}$. It then rewrites $g$ according to $\handelmanformat$ using \rd{$\Gamma = \{x\}$}, i.e.~$g(x, y) = \sum c_k \cdot f_k (x, y)$. This is because we need to ensure $g \geq 0$ to fulfill condition (C3). This leads to the polynomial equation $\sum c_k \cdot f_k(x, y) = (a_{31} - a_{41}) \cdot x^2 + a_{32} \cdot x y + (a_{33} - \frac{1}{3} a_{42} - a_{43}) \cdot x + a_{34} \cdot y^2 + a_{35} \cdot y + a_{36} - a_{44} - \frac{1}{3} a_{45} - a_{46}$. \rd{Note that both sides of this equation are polynomials. The coefficients of the polynomial on the LHS are combinations of $c_k$'s and those of the RHS are combinations of $a_{ij}$'s. Given that two polynomials are equal if and only if all of their corresponding terms have the same coefficients, the equality above} can be translated to several linear equations in terms of the $c_k$'s and $a_{ij}$'s by equating the coefficient of each term on both sides. These linear equations are generated at every label.
	\item The algorithm calls an LP-solver to solve the system consisting of all linear equations obtained in step (3). Given that we are looking for an optimal upperbound on the expected accumulated cost with the initial valuation $x_0 = 100, y_0 = 0$, the algorithm minimizes $h(\loc_1, 100, 0) = 10000 \cdot a_{11} + 100 \cdot a_{13} + a_{16}$ subject to these linear equations.
\end{compactenum}
In this case, the resulting values for $a_{ij}$'s lead to the same PUCS $h$ as in Figure~\ref{fig:ex2}. So the upper bound on the expected accumulated cost is $\frac{1}{3} x_0^2 + \frac{1}{3} x_0 = 3366.\overline{6}$. The algorithm can similarly synthesize a PLCS. In this case, the same function $h$ is reported as a PLCS. Therefore, the exact  expected accumulated cost of this program is $3366.\overline{6}$ and our algorithm is able to compute it precisely. See Appendix~\ref{app:detail1} for more details. 
\end{example}

\begin{table}
\caption{Template for $\pre_h$ of the program in Figure~\ref{fig:ex2}}
	\begin{minipage}{\columnwidth}
		\begin{center}
			\footnotesize{
			\begin{tabular}{|l|p{8cm}<{\centering}|}
				\hline
				$n$                &$pre_h(\loc_n,x,y)$      \\
				\hline
				$1$
				&$\mathbf{1}_{x \geq 1}\cdot (a_{21}\cdot x^2 +a_{22} \cdot xy +a_{23}\cdot x+a_{24}\cdot y^2+a_{25}\cdot y+a_{26})+\mathbf{1}_{x < 1}\cdot 0$ \\
				\hline
				$2$   &$a_{31}\cdot x^2+a_{32}\cdot xy+(a_{33}-a_{31})\cdot x
				+a_{34}\cdot y^2+(a_{35}-\frac{1}{2}a_{32})\cdot y +a_{31}-\frac{1}{2}a_{33}+a_{36}$ \\
				\hline
				$3$   &$ a_{41}\cdot x^2+(\frac{1}{3} a_{42}+c_{43})\cdot x+a_{44}+\frac{1}{3} a_{45}+a_{46} $   \\
				\hline
				$4$   &$a_{11}\cdot x^2 +(a_{12}+1) \cdot xy +a_{13}\cdot x+a_{14}\cdot y^2+a_{15}\cdot y+a_{16} $   \\
				\hline
				$5$  & 0\\
				\hline
			\end{tabular}
		}
		\end{center}
	\end{minipage}
	\label{ta:runningexample}

\end{table}

\section{Experimental Results}\label{sec:experiment}

\begin{table*}
	\caption{Comparison of our approach with \cite{pldi18}.}
		\resizebox{0.98\textwidth}{!}{
		\footnotesize{
		\begin{tabular}{c|p{5cm}<{\centering}|p{5cm}<{\centering}|p{5cm}<{\centering}}
			\toprule
			
			\textbf{Program}  & Upper bound of \cite{pldi18}& \textbf{$h(\lin, \pv)$ in PUCS} & \textbf{$h(\lin, \pv)$ in PLCS} \\
			\hline
			
			\texttt{ber} &$2\cdot n-2\cdot x$  &$2\cdot n-2\cdot x$    & $2\cdot n-2\cdot x -2 $    \\
			\hline
			
			\texttt{bin} & $ 0.2\cdot n +1.8 $ &$ 0.2\cdot n +1.8 $   & $ 0.2\cdot n  -0.2$ \\
			\hline

			\texttt{linear01} & $0.6\cdot x$    &$0.6\cdot x$  & $0.6\cdot x-1.2$\\
			\hline
			
			\texttt{prdwalk} & $1.14286\cdot n - 1.14286\cdot x+4.5714$ &$1.14286\cdot n - 1.14286\cdot x+4.5714$ & $1.14286\cdot n - 1.14286\cdot x-1.1429$ \\
			\hline

			\texttt{race} & $0.666667\cdot t-0.666667\cdot h +6$ &$\frac{2}{3}\cdot t-\frac{2}{3}\cdot h+6$ & $\frac{2}{3}\cdot t-\frac{2}{3}\cdot h$ \\
			\hline

			\texttt{rdseql} &$2.25\cdot x+y $ &$2.25\cdot x+y +2.25$ & $2\cdot x$  \\
			\hline

			\texttt{rdwalk} & $ 2\cdot n- 2\cdot x+2$  & $ 2\cdot n- 2\cdot x+2$    & $2\cdot n - 2\cdot x - 2$  \\
			\hline

			\texttt{sprdwalk} &$2\cdot n - 2\cdot x$  &$2\cdot n - 2\cdot x$  &$2\cdot n- 2\cdot x-2$ \\
			\hline

			\texttt{C4B\_t13}& $1.25\cdot x +y$ &$1.25\cdot x +y$  & $x-1$ \\
			\hline

			\texttt{prnes} & $0.052631\cdot y -68.4795\cdot n $  &$0.05263\cdot y -68.4795\cdot n $ &$- 10\cdot n-10 $  \\
			\hline
			
			\texttt{condand} & $m+n$ &$m+n-1$ & $0$  \\
			\hline
			
			\texttt{pol04} & $4.5\cdot x^2 + 7.5\cdot x $ &$4.5\cdot x^2 + 10.5\cdot x $ &$0$  \\
			\hline
			
			\texttt{pol05} & $x^2+x$ &$0.5\cdot x^2 + 2.5\cdot x $ &$0$  \\
			\hline
			
			\texttt{rdbub} & $3\cdot n^2$ &$3\cdot n^2  $ &$0$  \\
			\hline
			
			\texttt{trader} &$- 5\cdot s_{min}^2 -5\cdot s_{min}+5\cdot s^2+5\cdot s $  &$- 5\cdot s_{min}^2 -5\cdot s_{min}+5\cdot s^2+5\cdot s $ &$0$  \\
			
			\bottomrule
		\end{tabular}
	}}
	\label{tab:expr3}
	
\end{table*}

\begin{table*}
	\caption{Symbolic upper and lower bounds, i.e.~$h(\lin,\pv)$, obtained through PUCS and PLCS.}
	\resizebox{.98\textwidth}{!}{
	\footnotesize{
	\begin{tabular}{c|c|C{5cm}|C{5cm}|c}
		\toprule
		
		\textbf{Program} & $\mathbf{\pv_0}$ & \textbf{$h(\lin, \pv)$ in PUCS} & \textbf{$h(\lin, \pv)$ in PLCS } & Runtime (s) \\
		\hline
		
		\makecell{Bitcoin Mining\\(Figure~\ref{fig:mining})} & $x_0 = 100$ &$1.475 - 1.475\cdot x$    & $-1.5\cdot x $  &  $9.24$ \\
		\hline

		\makecell{Bitcoin Mining Pool\\(Figure~\ref{fig:pool})} & $y_0 = 100$ &$ -  7.375\cdot y^2 - 41.62\cdot y + 49.0 $   & $ - 7.5\cdot y^2 - 67.5\cdot y$ & $27.81$ \\
		\hline

		\makecell{Queuing Network\\(Figure~\ref{fig:FJcode})} & 				$n_0 = 320$ &$0.0492\cdot n- 0.0492\cdot i +0.0103\cdot l_1^2+0.00342\cdot l_2^3+0.00726\cdot l_2^2 + 0.0492 $  & $0.0384\cdot n- 0.0384\cdot i- (1.76\times 10^{-4})\cdot l_1^2 - 0.00854\cdot l_1\cdot l_2^2 -(8.16\times 10^{-5})\cdot l_2^3- 0.00173\cdot l_2^2 + 0.0384 $ & $282.44$\\
		\hline

		\makecell{Species Fight\\(Figure~\ref{fig:species})}	& $a_0 = 16, b_0 = 10$ & $40\cdot a\cdot b - 180\cdot b - 180\cdot a + 810$ &--  & $16.30$\\
		\hline

		Figure~\ref{fig:example} & 	$x_0 = 200$ & $\frac{1}{3}\cdot x^2+\frac{1}{3}\cdot x$ & $\frac{1}{3}\cdot x^2+\frac{1}{3}\cdot x-\frac{2}{3}$ & $6.00$ \\
		\hline

		Nested Loop  &  $i_0 = 150$ & $\frac{1}{3}\cdot i^2+i$ & $\frac{1}{3}\cdot i^2-\frac{1}{3}\cdot i$  & $31.73$\\
		\hline

		Random Walk & $x_0 = 12, n_0 = 20$ & $ 2.5\cdot x- 2.5\cdot n$    & $2.5\cdot x - 2.5\cdot n - 2.5$  & $15.02$ \\
		\hline

		2D Robot & 	$x_0 = 100, y_0 = 80$ & $1.728\cdot x^2 - 3.456\cdot x\cdot y + 31.45\cdot x + 1.728\cdot y^2 - 31.45\cdot y + 126.5$  &$1.728\cdot x^2 - 3.456\cdot x\cdot y + 31.45\cdot x + 1.728\cdot y^2 - 31.45\cdot y $ & $40.28$\\
		\hline

		Goods Discount & 	$n_0 = 200, d_0 = 1$ & $0.00667\cdot d\cdot n - 0.7\cdot n - 3.803\cdot d + 0.00222\cdot d^2 + 119.4$  & $0.00667\cdot d\cdot n - 0.7133\cdot n - 3.812\cdot d + 0.00222\cdot d^2 + 112.4 $ & $16.89$\\
		\hline

		Pollutant Disposal &	$n_0 = 200$ &  $- 0.2\cdot n^2 + 50.2\cdot n $ &$- 0.2\cdot n^2 + 50.2\cdot n - 482.0 $  & $19.53$\\
		\bottomrule
	\end{tabular}
}
}
	\label{tab:expr2}
\end{table*}

We now report on an implementation of our approach and present experimental results. First, we compare our approach with~\cite{pldi18}.
Then, we show that our approach is able to handle programs that no previous approach could.

\paragraph{Implementation and Environment.} We implemented our approach in Matlab R2018b. We used the Stanford Invariant Generator~\cite{sting1} to find linear invariants and the tool in~\cite{ChatterjeeFG16} to ensure the concentrated termination property for the input programs. The results were obtained on a Windows machine with an Intel Core i7 3.6GHz processor and 8GB of RAM.

\paragraph{Comparison with~\cite{pldi18}.} \rd{ We ran our approach on several benchmarks from~\cite{pldi18}. The results are reported in Table~\ref{tab:expr3}. In general, the upper bounds obtained by our approach and~\cite{pldi18} are very similar. We obtain identical results on most benchmarks. Specifically, our leading coefficient is never worse than~\cite{pldi18}. The only cases where we get different bounds are \texttt{rdseql} (our bound is worse by a small additive constant), \texttt{condand} (our bound is better by a small additive constant), \texttt{pol04} (our bound is worse only in a non-leading coefficient), \texttt{pol05} (our bound is better in the leading coefficient).
	Moreover, note that we provide lower bounds through PLCS, while~\cite{pldi18} cannot obtain any lower bounds. 
}

\paragraph{Experimental Results on New Benchmarks.}
Table~\ref{tab:expr2} provides our experimental results over ten new benchmarks. These include the four motivating examples of Section~\ref{sec:motivation}, our running example, and five other classical programs. In each case, we optimized the bounds wrt a fixed initial valuation $\pv_0$. We report the upper (resp.~lower) bound obtained through PUCS (resp.~PLCS). Note that we do not have lower bounds for the Species Fight example as its updates are unbounded. \rd{See Appendix~\ref{app:results} for plots and details about the benchmarks.}  In all cases of Table~\ref{tab:expr2}, the obtained lower and upper bounds are very close, and in many cases they meet. Hence, our approach can obtain tight bounds on the expected accumulated cost of a variety of programs that could not be handled by any previous approach. Moreover, the reported runtimes show the efficiency of our algorithms in practice.

\section{Related Work} \label{sec:rel}

\paragraph{Termination and cost analysis.}
Program termination has been
studied extensively~\cite{DBLP:conf/pldi/CookPR06,DBLP:conf/esop/KuwaharaTU014,DBLP:conf/tacas/CookSZ13,DBLP:conf/sas/Urban13,DBLP:conf/popl/CousotC12,DBLP:conf/popl/LeeJB01,DBLP:journals/toplas/Lee09,DBLP:journals/fmsd/CookPR09,DBLP:conf/vmcai/AlurC10}.
Automated amortized cost analysis has also been widely studied~\cite{DBLP:conf/aplas/HoffmannH10,DBLP:conf/esop/HoffmannH10,DBLP:conf/popl/HofmannJ03,DBLP:conf/esop/HofmannJ06,DBLP:conf/csl/HofmannR09,DBLP:conf/fm/JostLHSH09,DBLP:conf/popl/JostHLH10,Hoffman1,DBLP:conf/popl/GimenezM16,DBLP:conf/cav/SinnZV14,DBLP:conf/sas/AliasDFG10,ChatterjeeFG16,SPEED1,SPEED2,DBLP:conf/cav/GulavaniG08}.
Other resource analysis approaches include: (a)~recurrence relations for worst-case analysis~\cite{DBLP:conf/icfp/Grobauer01,DBLP:journals/tcs/FlajoletSZ91,DBLP:journals/entcs/AlbertAGGPRRZ09,DBLP:conf/sas/AlbertAGP08,DBLP:conf/esop/AlbertAGPZ07}; (b)~average-case analysis by recurrence relations~\cite{DBLP:conf/cav/ChatterjeeFM17} and (c)~using theorem-proving~\cite{DBLP:conf/popl/SrikanthSH17}.
These approaches do not consider probabilistic programs.

\paragraph{Ranking functions.}
Ranking functions have been widely studied for intraprocedural analysis~\cite{BG05,DBLP:conf/cav/BradleyMS05,DBLP:conf/tacas/ColonS01,DBLP:conf/vmcai/PodelskiR04,DBLP:conf/pods/SohnG91,DBLP:conf/vmcai/Cousot05,DBLP:journals/fcsc/YangZZX10,DBLP:journals/jossac/ShenWYZ13,DBLP:conf/cav/ChatterjeeFG17}.
Most works focus on linear/polynomial ranking functions and target non-probabilistic programs~\cite{DBLP:conf/tacas/ColonS01,DBLP:conf/vmcai/PodelskiR04,DBLP:conf/pods/SohnG91,DBLP:conf/vmcai/Cousot05,DBLP:journals/fcsc/YangZZX10,DBLP:journals/jossac/ShenWYZ13}.
They have been extended in various directions, such as: symbolic approaches~\cite{DBLP:journals/toplas/BrockschmidtE0F16},
proof rules for deterministic programs~\cite{DBLP:journals/fac/Hesselink94},
sized types~\cite{DBLP:journals/lisp/ChinK01,DBLP:conf/icfp/HughesP99,DBLP:conf/popl/HughesPS96},
and polynomial resource bounds~\cite{DBLP:conf/tlca/ShkaravskaKE07}. Moreover,
\cite{DBLP:conf/cav/GulavaniG08} generates bounds through abstract interpretation using inference systems.
However all of these methods are also for non-probabilistic programs only.

\paragraph{Ranking supermartingales.}
Ranking functions have been extended to ranking supermartingales and studied in~\cite{SriramCAV,HolgerPOPL,ChatterjeeFNH16,ChatterjeeFG16,DBLP:journals/corr/ChatterjeeF17,ChatterjeeNZ2017,AgrawalC018}. Proof rules for probabilistic programs are provided in~\cite{JonesPhdThesis,OLKMLICS2016}.
However, these works consider qualitative termination problems. They do not consider precise cost analysis, which is the focus of our work.

\paragraph{Cost analysis for probabilistic programs.}
The most closely-related work is~\cite{pldi18}.
A detailed comparison has been already provided in Section~\ref{sec:novelty}.
In particular, we handle positive and negative costs, as well as unbounded costs, whereas~\cite{pldi18} can handle
only positive bounded costs.
Another related work is~\cite{ijcai18} which considers succinct Markov decision processes (MDPs) and
bounds for such MDPs. However, these MDPs only have single while loops and only linear bounds are obtained.
Our approach considers polynomial bounds for general probabilistic programs.

\newpage
\section{Conclusion}\label{sect:conclusion}
We considered the problem of cost analysis of probabilistic programs. 
While previous approaches only handled positive bounded costs,
our approach can derive polynomial bounds for programs with both positive
and negative costs. 
It is sound for general costs and bounded updates, and general updates
with nonnegative costs. However, finding sound approaches that can handle general costs and general updates remains 
an interesting direction for future work.
Another interesting direction is finding 
non-polynomial bounds.

\newpage
\begin{acks}
	The research was partially supported by National Natural Science Foundation of China (NSFC) Grants No. 61802254, 61772336, 61872142, 61672229, 61832015, Shanghai Municipal Natural Science Foundation (16ZR1409100), Open Project of Shanghai Key Laboratory of Trustworthy Computing, Vienna Science and
	Technology Fund (WWTF) Project ICT15-003, Austrian Science
	Fund (FWF) NFN Grant No S11407-N23 (RiSE/SHiNE), ERC
	Starting Grant (279307: Graph Games), an IBM PhD Fellowship,
	and a DOC Fellowship of the Austrian Academy of Sciences
	(\"{O}AW).
\end{acks}

\clearpage

\clearpage
\appendix

\section{Conditional Expectation}\label{app:condexpv}

Let $X$ be any random variable from a probability space $(\Omega, \mathcal{F},\probm)$ such that $\expv(|X|)<\infty$.
Then given any $\sigma$-algebra $\mathcal{G}\subseteq\mathcal{F}$, there exists a random variable (from $(\Omega, \mathcal{F},\probm)$), conventionally denoted by $\condexpv{X}{\mathcal{G}}$, such that
\begin{compactitem}
\item[(E1)] $\condexpv{X}{\mathcal{G}}$ is $\mathcal{G}$-measurable, and
\item[(E2)] $\expv\left(\left|\condexpv{X}{\mathcal{G}}\right|\right)<\infty$, and
\item[(E3)] for all $A\in\mathcal{G}$, we have $\int_A \condexpv{X}{\mathcal{G}}\,\mathrm{d}\probm=\int_A {X}\,\mathrm{d}\probm$.
\end{compactitem}
The random variable $\condexpv{X}{\mathcal{G}}$ is called the \emph{conditional expectation} of $X$ given $\mathcal{G}$.
The random variable $\condexpv{X}{\mathcal{G}}$ is a.s. unique in the sense that if $Y$ is another random variable satisfying (E1)--(E3), then $\probm(Y=\condexpv{X}{\mathcal{G}})=1$.

Conditional expectation has the following properties for any random variables $X,Y$ and $\{X_n\}_{n\in\Nset_0}$ (from a same probability space) satisfying $\expv(|X|)<\infty,\expv(|Y|)<\infty, \expv(|X_n|)<\infty$ ($n\ge 0$) and any suitable sub-$\sigma$-algebras $\mathcal{G},\mathcal{H}$:
\begin{compactitem}
\item[(E4)] $\expv\left(\condexpv{X}{\mathcal{G}}\right)=\expv(X)$ ;
\item[(E5)] if $X$ is $\mathcal{G}$-measurable, then $\condexpv{X}{\mathcal{G}}=X$ a.s.;
\item[(E6)] for any real constants $b,d$,
\[
\condexpv{b\cdot X+d\cdot Y}{G}=b\cdot\condexpv{X}{G}+d\cdot \condexpv{Y}{G}\mbox{ a.s.;}
\]
\item[(E7)] if $\mathcal{H}\subseteq\mathcal{G}$, then $\condexpv{\condexpv{X}{\mathcal{G}}}{\mathcal{H}}=\condexpv{X}{\mathcal{H}}$ a.s.;
\item[(E8)] if $Y$ is $\mathcal{G}$-measurable and $\expv(|Y|)<\infty$, $\expv(|Y\cdot X|)<\infty$,
then
\[
\condexpv{Y\cdot X}{\mathcal{G}}=Y\cdot\condexpv{X}{\mathcal{G}}\mbox{ a.s.;}
\]
\item[(E9)] if $X$ is independent of $\mathcal{H}$, then $\condexpv{X}{\mathcal{H}}=\expv(X)$ a.s., where $\expv(X)$ here is deemed as the random variable with constant value $\expv(X)$;
\item[(E10)] if it holds a.s that $X\ge 0$, then $\condexpv{X}{\mathcal{G}}\ge 0$ a.s.;
\item[(E11)] if it holds a.s. that (i) $X_n\ge 0$ and $X_n\le X_{n+1}$ for all $n$ and (ii) $\lim\limits_{n\rightarrow\infty}X_n=X$, then
\[
\lim\limits_{n\rightarrow\infty}\condexpv{X_n}{\mathcal{G}}=\condexpv{X}{\mathcal{G}}\mbox{ a.s.}
\]
\item[(E12)] if (i) $|X_n|\le Y$ for all $n$ and (ii) $\lim\limits_{n\rightarrow\infty} X_n=X$, then
\[
\lim\limits_{n\rightarrow\infty}\condexpv{X_n}{\mathcal{G}}=\condexpv{X}{\mathcal{G}}\mbox{ a.s.}
\]
\item[(E13)] if $g:\Rset\rightarrow\Rset$ is a convex function and $\expv(|g(X)|)<\infty$, then $g(\condexpv{X}{\mathcal{G}})\le \condexpv{g(X)}{\mathcal{G}}$ a.s.
\end{compactitem}
We refer to~\cite[Chapter~9]{williams1991probability} for more details.

\section{Detailed Syntax}\label{app:syntax}

In the sequel, we fix two countable sets of \emph{program variables} and \emph{sampling variables}.
W.l.o.g, these three sets are pairwise disjoint.

Informally, program variables are variables that are directly related to the control-flow of a program, while sampling variables reflect randomized inputs to the program.
Every program variable holds an integer upon instantiation,
while every sampling variable is bound to a discrete probability distribution. 

\noindent{\bf The Syntax.} 
Below we explain the grammar in in Figure~\ref{fig:syntax}.
\begin{compactitem}
\item \emph{Variables.} Expressions $\langle\mathit{pvar}\rangle$ (resp. $\langle\mathit{rvar}\rangle$) range over program (resp. sampling) variables.
\item \emph{Constants.} Expressions $\langle\mathit{const}\rangle$ range over decimals.
\item \emph{Arithmetic Expressions.} Expressions $\langle\mathit{expr}\rangle$ (resp. $\langle\mathit{pexpr}\rangle$) range over arithmetic expressions over both program and sampling variables (resp. program variables). As a theoretical paper, we do not fix the syntax for $\langle\mathit{expr}\rangle$ and $\langle\mathit{pexpr}\rangle$.

\item \emph{Boolean Expressions.} Expressions $\langle\mathit{bexpr}\rangle$ range over propositional arithmetic predicates over program variables.
\item \emph{Nondeterminism.} The symbol `$\star$' indicates a nondeterministic choice to be resolved in a demonic way.
\item \emph{Statements $\langle \mathit{stmt}\rangle$.} Assignment statements are indicated by `$:=$';
`\textbf{skip}' is the statement that does nothing;
conditional branches and nondeterminism are both indicated by the keyword `\textbf{if}';
while-loops are indicated by the keyword `\textbf{while}';
sequential compositions are indicated by semicolon;
finally, tick statements are indicated by `\textbf{tick}'.  
\end{compactitem}

\section{Detailed Semantics}\label{app:semantic}
Informally, a control-flow graph specifies how values for program variables and the program counter change along an execution of a program.
We refer to the status of the program counter as a \emph{label}, and assign an initial label $\lin{}$ and a terminal label $\lout{}$ to the start and the end of the program.
Moreover, we have five types of labels, namely \emph{assignment}, \emph{branching},  \emph{probabilistic}, \emph{nondeterministic} and \emph{tick} labels.
\begin{compactitem}
\item An \emph{assignment} label corresponds to an assignment statement indicated by `$:=$', and leads to the next label right after the statement with change of values specified by the update function determined at the right-hand-side of `$:=$'. The update function gives the next valuation on program variables, based on the current values of program variables and the sampled values for this statement.
\item A \emph{branching} label corresponds to a conditional-branching statement indicated by the keyword `\textbf{if}' or `\textbf{while}' together with a propositional arithmetic predicate $\phi$ over program variables (as the condition or the loop guard), and leads to the next label determined by $\phi$ without change on values.

\item A \emph{probabilistic} label corresponds to a probabilistic-branching statement indicated by the keywords `\textbf{if}' and `\textbf{prob}($p$)' with $p\in [0,1]$, and leads to the labels  of the \textbf{then}-branch with probability $p$ and the \textbf{else}-branches with probability $1-p$, without change on values.
\item A \emph{nondeterministic} label corresponds to a nondeterministic-branching statement indicated by the keywords `\textbf{if}' and `$\star$', and leads to the labels of the \textbf{then}- and \textbf{else}-branches without change on values.
\item A \emph{tick} label corresponds to a tick statement `\textbf{tick}($q$)' that triggers a cost/reward, and leads to the next label without change on values. The arithmetic expression
$q$ determines a \emph{cost function} that outputs a real number (as the amount of cost/reward) upon the current values of program variables for this statement. 
\end{compactitem}

It is intuitively clear that any probabilistic program can be transformed into a CFG.
We refer to existing results~\cite{ChatterjeeFG16,ChatterjeeFNH16} for a detailed transformation
from programs to CFGs.

Based on CFGs, the semantics of the program is given by general state space Markov chains (GSSMCs) as follows.
Below we fix a probabilistic program $W$ with its CFG in the form (\ref{eq:cfg}).
To illustrate the semantics, we need the notions of \emph{configurations}, \emph{sampling functions}, \emph{runs} and \emph{schedulers} as follows.

\noindent{\em Configurations.}  A \emph{configuration} is a
triple $(\loc,\nu)$ where $\loc\in\locs{}$ and
$\nu\in\val{\pvars{}}$.
We say that a configuration $(\loc,\nu)$ is \emph{terminal} if $\loc=\lout{}$;
moreover, it is \emph{nondeterministic} if $\loc\in\Dlocs{}$.
Informally, a configuration $(\loc,\nu)$ specifies that the next statement to be executed is the one labelled with $\loc$ and the current values
of program variables is specified by the valuation $\nu$.

\noindent{\em Sampling functions.}
A \emph{sampling function} $\Upsilon$ is a function assigning to every sampling variable $r\in\rvars$ a (possibly continuous) probability distribution over $\Rset$.
Informally, a sampling function $\Upsilon$ specifies the probability distributions for the sampling of all sampling variables, i.e.,
for each $r\in\rvars$, its sampled value is drawn from the probability distribution $\Upsilon(r)$.

\noindent{\em Finite and infinite runs.}
A \emph{finite run} $\rho$ is a finite sequence $(\loc_0,\nu_0),\dots,(\loc_n,\nu_n)$ of configurations.
An \emph{infinite run} is an infinite sequence $\{(\loc_n,\nu_n)\}_{n\in\Nset_0}$ of configurations.
The intuition is that each $\loc_n$ and $\nu_n$ are the current program counter and respectively the current valuation for program variables
at the $n$th step of a program execution.

\noindent{\em Schedulers.}
A scheduler $\sigma$ is a function that assigns to every finite run ending in a nondeterministic configuration $(\loc,\nu)$ a transition  with source label  $\loc$ (in the CFG)
that leads to the target label as the next label.
Thus, based on the whole history of configuration visited so far, a scheduler resolves the choice between the \textbf{then}- and \textbf{else}-branch at a nondeterministic branch.

Based on these notions, we can have an intuitive description on an execution of a probabilistic program.
Given a scheduler $\sigma$, the execution starts in an initial configuration $(\loc_0,\nu_0)$.
Then in every step $n\in\Nset_0$, assuming that the current configuration is $c_n=(\loc_n,\nu_n)$, the following happens.
\begin{compactitem}
\item If $\loc_n=\lout{}$ (i.e., the program terminates), then $(\loc_{n+1},\nu_{n+1})=(\loc_n,\nu_n)$. Otherwise, proceed as follows.
\item A valuation $\mathbf{r}$ on the sampling variables is sampled w.r.t the probability distributions in the sampling function $\Upsilon$.
\item A transition $\tau=(\loc_n,\alpha^*,\loc^*)$ enabled at the current configuration $(\loc_n,\nu_n)$ is chosen, and then the next configuration is determined by the chosen transition. In detail, we have the following.
      \begin{compactitem}
       \item If $\loc_n\in\alocs{}$, then $\tau$ is chosen as the unique transition from $\loc_n$ such that $\alpha^*$ is an update function, and the next configuration $(\loc_{n+1},\nu_{n+1})$ is set to be $(\loc^*, \alpha^*(\nu_n,\mathbf{r}))$.
       \item If $\loc_n\in\blocs{}$, then $\tau$ is chosen as the unique transition such that $\nu_n$ satisfies the propositional arithmetic predicate $\alpha^*$,
       and the next configuration $(\loc_{n+1},\nu_{n+1})$ is set to be $(\loc^*, \nu_n)$.
       \item If $\loc_n\in\plocs{}$ with the probability $p$ specified in its corresponding statement, then $\tau$ is chosen to be the \textbf{then}-branch with probability $p$ and the \textbf{else}-branch with probability $1-p$, and the next configuration $(\loc_{n+1},\nu_{n+1})$ is set to be $(\loc^*, \nu_n)$.
      \item If $\loc_n\in\Dlocs{}$, then $\tau$ is chosen by the scheduler $\sigma$. That is, if $\rho=c_0 c_1 \cdots c_n$ is the finite path of configurations traversed so far, then $\tau$ equals $\sigma(c_0 c_1 \cdots c_n)$, and the next configuration $(\loc_{n+1},\nu_{n+1})$ is set to be $(\loc^*, \nu_n)$.
      \item If $\loc_n\in\tlocs{}$, then $\tau$ is chosen as the unique transition from $\loc_n$ such that $\alpha^*$ is a cost function, then the next configuration $(\loc_{n+1},\nu_{n+1})$ is set to be $(\loc^*, \nu_n)$ and the statement triggers a cost of amount $\alpha^*(\nu_n)$.
      \end{compactitem}
\end{compactitem}
In this way, the scheduler and random choices eventually produce a random infinite run in a probabilistic program.
Then given any scheduler that resolves nondeterminism, the semantics of a probabilistic program is a GSSMC, where the kernel functions
can be directly defined over configurations and based on the transitions in the CFG so that
they specify the probabilities of the next configuration given the current configuration.

Given a scheduler $\sigma$ and an initial configuration $c$, the GSSMC of a probabilistic program induces a probability space where the sample space is the set of all infinite runs, the sigma-algebra is generated from cylinder sets of infinite runs, and the probability measure is determined by the scheduler and the random sampling in the program.

\section{Proofs for Martingale Results}\label{app:martingale}

\subsection{The Extended OST}\label{app:OST}

In the proof of the extended OST, for a stopping time $U$ and a nonnegative integer $n\in\Nset_0$, we denote by $U\wedge n$ the random variable $\min\{U,n\}$.

~\\

{\bf Theorem~\ref{eost}.}~(The Extended OST)
Consider any stopping time $U$ wrt a filtration $\{\mathcal{F}_n\}_{n=0}^\infty$ and any martingale (resp. supermartingale) $\{X_n\}_{n=0}^\infty$ adapted to $\{\mathcal{F}_n\}_{n=0}^\infty$ and let $Y = X_U$.
Then the following condition is sufficient to ensure that $\expv\left(|Y|\right)<\infty$ and $\expv\left(Y\right)=\expv(X_0)$ (resp. $\expv\left(Y\right)\le\expv(X_0)$):
\begin{compactitem}
\item There exist real numbers $M, c_1, c_2, d > 0$ such that (i) for sufficiently large $n \in \setN$, it holds that $\probm(U>n) \leq c_1 \cdot e^{-c_2 \cdot n}$ and (ii) for all $n \in \setN$, $\vert X_{n+1} - X_n \vert \leq M \cdot n^d$ almost surely.
\end{compactitem}

\begin{proof}

We only prove the ``$\le$'' case, the ``$=$'' case is similar. For every $n\in\Nset_0$,
\begin{eqnarray*}
\left|X_{U\wedge n}\right|&=& \left|X_0+\sum_{k=0}^{U\wedge n-1} \left(X_{k+1}-X_k\right)\right| \\
&=& \left|X_0+\sum_{k=0}^\infty \left(X_{k+1}-X_k\right)\cdot \mathbf{1}_{U>k\wedge n>k}\right|\\
&\le& \left|X_0\right|+\sum_{k=0}^\infty \left|\left(X_{k+1}-X_k\right)\cdot \mathbf{1}_{U>k\wedge n>k}\right| \\
&\le& \left|X_0\right|+\sum_{k=0}^\infty \left|\left(X_{k+1}-X_k\right)\cdot \mathbf{1}_{U>k}\right|\enskip. \\
\end{eqnarray*}

Then
\begin{eqnarray*}
& &   \expv\left(\left|X_0\right|+\sum_{k=0}^\infty \left|\left(X_{k+1}-X_k\right)\cdot \mathbf{1}_{U>k}\right|\right) \\
&=& \mbox{(By Monotone Convergence Theorem)} \\
& & \expv\left(\left|X_0\right|\right)+\sum_{k=0}^\infty \expv\left(\left|\left(X_{k+1}-X_k\right)\cdot \mathbf{1}_{U>k}\right|\right) \\
&=& \expv\left(\left|X_0\right|\right)+\sum_{k=0}^\infty \expv\left(\left|X_{k+1}-X_k\right|\cdot \mathbf{1}_{U>k}\right) \\
&\le & \expv\left(\left|X_0\right|\right)+\sum_{k=0}^\infty \expv\left( \lambda\cdot k^d\cdot \mathbf{1}_{U>k}\right) \\
&=& \expv\left(\left|X_0\right|\right)+\sum_{k=0}^\infty M\cdot k^d\cdot \probm\left(U>k\right) \\
&\le & \expv\left(\left|X_0\right|\right)+\sum_{k=0}^\infty M\cdot k^d\cdot c_{1}\cdot e^{-c_{2}\cdot k} \\
&=& \expv\left(\left|X_0\right|\right)+M\cdot c_1 \cdot \sum_{k=0}^\infty k^d\cdot e^{-c_{2}\cdot k} \\
&<& \infty\enskip.
\end{eqnarray*}
Thus, by Dominated Convergence Theorem
and the fact that $X_U=\lim\limits_{n\rightarrow\infty} X_{R\wedge n}$ a.s.,
\[
\expv\left(X_U\right)=\expv\left(\lim\limits_{n\rightarrow\infty} X_{U\wedge n}\right)=\lim\limits_{n\rightarrow\infty}\expv\left(X_{U\wedge n}\right)\enskip.
\]
Finally the result follows from properties for the stopped process $\{X_{U\wedge n}\}_{n\in\Nset_0}$ that 
\[
\expv\left(X_U\right) \le \expv\left(X_0\right)\enskip.
\]
\end{proof}

\subsection{An Important Lemma}

In this part, we prove an important lemma.
Below we define the following sequences of (vectors of) random variables:
\begin{compactitem}
\item $\overline{\pv}_0,\overline{\pv}_1,\dots$ where each $\overline{\pv}_n$ represents the valuation to program variables at the $n$th execution step of a probabilistic program; 
\item $\overline{\rv}_0,\overline{\rv}_1,\dots$ where each $\overline{\rv}_n$ represents the sampled valuation to sampling variables at the $n$th execution step of a probabilistic program;

\item $\overline{\ell}_0,\overline{\ell}_1,\dots$ where each $\overline{\ell}_n$ represents the label at the $n$th execution step of a probabilistic program.

\end{compactitem}

\begin{lemma}\label{app:lemm:important}

Let $h$ be a PUCS and $\sigma$ be any scheduler. Let the stochastic process $\{X_n\}_{n\in\Nset_0}$ be defined such that $X_n:=h(\overline{\ell}_n,\overline{\pv}_n)$.
Then for all $n\in\Nset_0$, we have $\condexpv{X_{n+1}+ C_n}{\mathcal{F}_n} \le pre_h (\overline{\ell}_n,\overline{\pv}_n)$.
\end{lemma}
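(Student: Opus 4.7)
The plan is to proceed by a case analysis on the type of the label $\overline{\loc}_n$, directly mirroring the case-by-case definition of pre-expectation in Definition~\ref{def:pre-exp1}. The key observation enabling this is that $\overline{\loc}_n$ and $\overline{\pv}_n$ are $\mathcal{F}_n$-measurable (so by property (E8) they can be treated as constants inside $\condexpv{\cdot}{\mathcal{F}_n}$), while the sampling valuation $\overline{\rv}_n$ drawn at step $n$ is independent of $\mathcal{F}_n$ (so property (E9) applies). Formally I would write
$$\condexpv{X_{n+1}+C_n}{\mathcal{F}_n} = \sum_{\loc\in\locs} \mathbf{1}_{\overline{\loc}_n=\loc}\cdot \condexpv{X_{n+1}+C_n}{\mathcal{F}_n}$$
and then evaluate each summand under the assumption $\overline{\loc}_n=\loc$ according to the type of $\loc$.

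In the five cases where transitions are deterministic or purely probabilistic (terminal, assignment, branching, probabilistic, tick), equality holds and each reduces to a direct unfolding of the operational semantics from Appendix~\ref{app:semantic}. For a terminal label we have $C_n=0$ and $(\overline{\loc}_{n+1},\overline{\pv}_{n+1})=(\lout,\overline{\pv}_n)$, so $X_{n+1}+C_n=h(\lout,\overline{\pv}_n)=\pre_h(\lout,\overline{\pv}_n)$. At an assignment label the next valuation is $F_\loc(\overline{\pv}_n,\overline{\rv}_n)$ with $\overline{\rv}_n$ independent of $\mathcal{F}_n$, so the conditional expectation collapses to $\expv_{\rv}[h(\loc',F_\loc(\overline{\pv}_n,\rv))]$, which is exactly $\pre_h(\loc,\overline{\pv}_n)$. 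The branching case is handled by the indicator decomposition $\mathbf{1}_{\overline{\pv}_n\models\phi}$ vs.\ $\mathbf{1}_{\overline{\pv}_n\not\models\phi}$, the probabilistic case by the law of total probability, and the tick case by noting that $X_{n+1}+C_n=h(\loc',\overline{\pv}_n)+R_\loc(\overline{\pv}_n)=\pre_h(\loc,\overline{\pv}_n)$.

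The only case where the inequality is genuinely needed is the nondeterministic label case, and this is the one mild technical point of the argument. Here the scheduler $\sigma$, based on the history encoded in $\mathcal{F}_n$, selects some specific successor $\loc^\sigma$ among $\{\loc' : (\loc,\star,\loc')\in\transitions\}$, and we obtain
$$\condexpv{X_{n+1}+C_n}{\mathcal{F}_n} \;=\; h(\loc^\sigma,\overline{\pv}_n) \;\le\; \max_{(\loc,\star,\loc')\in\transitions} h(\loc',\overline{\pv}_n) \;=\; \pre_h(\loc,\overline{\pv}_n),$$
which gives the desired inequality (and makes it clear why $\pre_h$ is defined via $\max$ over nondeterministic successors: it matches the worst-case supremum used in $\supval$). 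Combining all cases via the indicator decomposition above yields $\condexpv{X_{n+1}+C_n}{\mathcal{F}_n}\le \pre_h(\overline{\loc}_n,\overline{\pv}_n)$ almost surely, as required. No convergence or integrability subtlety arises since everything is expressed pointwise in a single step; the remaining bookkeeping is routine.
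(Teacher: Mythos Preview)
Your proposal is correct and follows essentially the same approach as the paper: an indicator decomposition over the label types, using $\mathcal{F}_n$-measurability of $(\overline{\loc}_n,\overline{\pv}_n)$ and independence of the fresh sampling/Bernoulli variables to reduce each case to the corresponding clause of $\pre_h$, with the nondeterministic case being the sole source of the inequality via the $\max$. The paper's proof differs only cosmetically, writing $X_{n+1}$ up front as a single sum $\mathbf{1}_{\overline{\ell}_n=\lout}\cdot X_n + Y_p + Y_a + Y_{nd} + Y_t + Y_b$ before taking conditional expectation, rather than handling the cases one at a time.
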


\begin{proof}
For all $n\in\Nset_0$, from the program syntax we have
\[
X_{n+1}=\mathbf{1}_{\overline{\ell}_n =\lout{}}\cdot X_n +Y_p+Y_a+Y_{nd}+Y_t+Y_b
\]
where the terms are described below:
\[
Y_p:=\sum_{\loc\in\plocs{}}\left[\mathbf{1}_{\overline{\ell}_n=\loc} \cdot \sum_{i\in\{0,1\}} \mathbf{1}_{B_{\loc}=i} \cdot h(\loc_{B_{\loc}=i},\overline{\pv}_n)\right]
\]
where each random variable $B_{\loc}$ is the Bernoulli random variable for the decision of the probabilistic branch and $\loc_{B_{\loc}=0}, \loc_{B_{\loc}=1}$ are the corresponding successor locations of $\loc$. Note that all $B_{\loc}$ 's and $\rv$ 's are independent of $\mathcal{F}_n$ . In other words, $Y_p$ describes the semantics of probabilistic locations.
\[
Y_a:=\sum_{\loc\in\alocs{}}  \mathbf{1}_{\overline{\ell}_n =\loc} \cdot h(\loc',F_\ell(\overline{\pv}_n,\rv))
\]
describes the semantics of assignment locations where $\loc'$ is its successor label.

\[
Y_{nd}:=\sum_{\loc\in\Dlocs{}} \mathbf{1}_{\overline{\ell}_n =\loc} \cdot h(\sigma\left(\loc,\overline{\pv}_n \right),\overline{\pv}_n)
\]
describes the semantics of nondeterministic locations, where $\sigma\left(-,-\right)$ here denotes the target location of the transition chosen by the scheduler $\sigma$.
\[
Y_t:=\sum_{\loc\in\tlocs{}} \mathbf{1}_{\overline{\ell}_n =\loc} \cdot h(\loc',\overline{\pv}_n)
\]
describes the semantics of tick locations.
\[
Y_b:=\sum_{\loc\in\blocs{}} \left[\mathbf{1}_{\overline{\ell}_n = \loc} \cdot \sum_{i\in\{1,2\}} \mathbf{1}_{\overline{\pv}_n \models\phi_i} \cdot h(\loc_{i},\overline{\pv}_n)\right]
\]
describes the semantics of branching locations, where $\phi_1 = \phi, \phi_2 =\neg\phi$ and $\loc_1,\loc_2$ are the corresponding successor locations.
Then from properties of conditional expectation, one obtains:
\begin{eqnarray*}
& & \condexpv{X_{n+1}+ C_n}{\mathcal{F}_n} \\
&=& \condexpv{X_{n+1}}{\mathcal{F}_n} + \condexpv{C_n}{\mathcal{F}_n} \\
&=& \mathbf{1}_{\overline{\ell}_n =\lout{}}\cdot X_n +Y'_p+Y'_a+Y_{nd}+Y_t+Y_b\\
&& +\mathbf{1}_{\overline{\ell}_n \in\tlocs{}} \cdot C_n \\
\end{eqnarray*}
where
\[
Y'_p:=\sum_{\loc\in\plocs{}}\left[\mathbf{1}_{\overline{\ell}_n=\loc} \cdot \sum_{i\in\{0,1\}} \probm(B_{\loc}=i) \cdot h(\loc_{B_{\loc}=i},\overline{\pv}_n)\right]
\]
and
\[
Y'_a:=\sum_{\loc\in\alocs{}} \mathbf{1}_{\overline{\ell}_n =\loc} \cdot \expv_{\rv}(h(\loc',F_\ell(\overline{\pv}_n,\rv)))
\]
This follows from the facts that (i) $\mathbf{1}_{\overline{\ell}_n =\lout{}} \cdot X_n$, $Y_{nd}$, $Y_t$, $Y_b$ are measurable in $\mathcal{F}_n$; (ii) $\condexpv{C_n}{\mathcal{F}_n}=\mathbf{1}_{\overline{\ell}_n =\tlocs{}} \cdot C_n$;  (iii) for $Y_p$ and $Y_a$, their conditional expectations are resp. $Y'_p$, $Y'_a$.

From (C3), when $\overline{\ell}_n\in \plocs{} \cup \alocs{} \cup \blocs{}$, we have
$pre_h(\overline{\ell}_n,\overline{\pv}_n) = \mathbf{1}_{\overline{\ell}_n =\lout{}} \cdot X_n +Y'_p + Y'_a + Y_b$.
When $\overline{\ell}_n\in \tlocs{}$, we have $pre_h(\overline{\ell}_n,\overline{\pv}_n) = Y_t +C_n$.
Then we get:
\[
\condexpv{X_{n+1}+ C_n}{\mathcal{F}_n} = pre_h(\overline{\ell}_n,\overline{\pv}_n)\enskip.
\]
When $\overline{\ell}_n\in \Dlocs{}$, we have $\condexpv{X_{n+1}+ C_n}{\mathcal{F}_n} =Y_{nd} \le  pre_h(\overline{\ell}_n,\overline{\pv}_{n})$.
Hence the result follows.
\end{proof}

\subsection{Polynomial Upper Cost Supermartingales (PUCSs)}\label{app:PUPFs}

{\bf Theorem~\ref{thm:lmf1}.}~(Soundness of PUCS)
Consider a nondeterministic probabilistic program $P$, with a linear invariant $I$ and a PUCS $h$. If $P$ satisfies the concentration property and the bounded update property, then $\supval(\pv)\le h(\lin,\pv)$ for all initial valuations $\pv\in I(\lin)$.

\begin{proof}[Proof of Theorem~\ref{thm:lmf1}]
Fix any scheduler $\sigma$ and initial valuation $\pv$ for a nondeterministic probabilistic program $P$.
Let $T=\min\{n\mid \overline{\ell}_n= \lout{}\}$.
By our assumption, $\expv(T)<\infty$ under $\sigma$.
We recall the random variables $C_0,C_1,\dots$ where each $C_n$ represents the cost/reward accumulated during the $n$th execution step of $P$.

We define the stochastic process $\{X_n\}_{n\in\Nset_0}$ by $X_n=h(\overline{\loc}_n,\overline{\nu}_n)$.
Then we define the stochastic process $Y_0,Y_1,\dots$ by:
\[
\textstyle Y_n:=h(\overline{\ell}_{n},\overline{\pv}_n)+\sum_{m=0}^{n-1} C_m\enskip.
\]
Furthermore, we accompany $Y_0,Y_1,\dots$ with the filtration $\mathcal{F}_0,\mathcal{F}_1,\dots$ such that each $\mathcal{F}_n$ is the smallest sigma-algebra that makes all random variables from $\{\overline{\pv}_0,\dots,\overline{\pv}_n \},\{\overline{\rv}_0,\dots,\overline{\rv}_{n} \}$ and $\{\overline{\ell}_0,\dots,\overline{\ell}_{n-1} \}$ measurable.
Then by Lemma~\ref{app:lemm:important}, we have $\condexpv{X_{n+1}+ C_n}{\mathcal{F}_n} \le X_n$.\\
Thus we get:
\begin{eqnarray*}
& &\condexpv{Y_{n+1}}{\mathcal{F}_n}\\
&=& \condexpv{Y_{n}+h(\overline{\ell}_{n+1},\overline{\pv}_{n+1})-h(\overline{\ell}_{n},\overline{\pv}_n)+C_{n}}{\mathcal{F}_n}\\
&=& Y_n+\left(\condexpv{h(\overline{\ell}_{n+1},\overline{\pv}_{n+1})+C_{n}}{\mathcal{F}_n}-h(\overline{\ell}_n,\overline{\pv}_n)\right)\\
&\le& Y_n
\end{eqnarray*}
Hence, $\{Y_n\}_{n\in\Nset_0}$ is a supermartingale. Moreover, we have from the bounded-update property that

\begin{eqnarray*}
|Y_{n+1}-Y_n|&=&|h_{n+1}+\sum_{m=1}^{n} C_m-h_n-\sum_{m=1}^{n-1} C_m| \\
&=& |h_{n+1}-h_n+C_n| \\
&\le& |h_{n+1}-h_n|+|C_n| \\
&\le& M\cdot n^d + c''\cdot n \\
&\le& \overline{M}^d \cdot n
\end{eqnarray*}
for some $\overline{M} >0$. \\
Thus, by applying Optional Stopping Theorem, we obtain immediately that $\expv(Y_T)\le \expv(Y_0)$.
By definition,
\[
\textstyle Y_T=h(\overline{\ell}_{T},\overline{\pv}_T)+\sum_{m=1}^{T-1} C_m=\sum_{m=1}^{T-1} C_m\enskip.
\]
It follows from (C2) that $\expv(C_\infty)=\expv(\sum_{m=1}^{T-1} C_m)\le \expv(Y_0)=h(\lin{}, \pv)$.
Since the scheduler $\sigma$ is chosen arbitrarily, we obtain that $\mbox{\sl supval}(\pv)\le h(\lin{}, \pv)$.
\end{proof}

\subsection{Polynomial Lower Cost Submartingales (PLCSs)}\label{app:PLPFs}

{\bf Theorem~\ref{thm:llmf1}.}~(Soundness of PLCS)
Consider a nondeterministic probabilistic program $P$, with a linear invariant $I$ and a PLCS $h$. If $P$ satisfies the concentration property and the bounded update property, then $\supval(\pv)\ge h(\lin,\pv)$ for all initial valuations $\pv\in  I(\lin)$.

\begin{proof}[Proof of Theorem~\ref{thm:llmf1}]
We follow most definitions above.
Fix any scheduler $\sigma$ and initial valuation $\pv$ for a nondeterministic probabilistic program $P$.
Let $T=\min\{n\mid \overline{\ell}_n= \lout{}\}$.
By our assumption, $\expv(T)<\infty$ under $\sigma$.
We recall the random variables $C_0,C_1,\dots$ where each $C_n$ represents the cost/reward accumulated during the $n$th execution step of $P$.
We define the stochastic process $\{X_n\}_{n\in\Nset_0}$ by $X_n=h(\overline{\loc}_n,\overline{\nu}_n)$.
Then we define the stochastic process $Y_0,Y_1,\dots$ by:
\[
\textstyle Y_n:=h(\overline{\ell}_{n},\overline{\pv}_n)+\sum_{m=0}^{n-1} C_m\enskip.
\]
Furthermore, we accompany $Y_0,Y_1,\dots$ with the filtration $\mathcal{F}_0,\mathcal{F}_1,\dots$ such that each $\mathcal{F}_n$ is the smallest sigma-algebra that makes all random variables from $\{\overline{\pv}_0,\dots,\overline{\pv}_n \}, \{\overline{\rv}_0,\dots,\overline{\rv}_{n-1} \}$ and $\{\overline{\ell}_0,\dots,\overline{\ell}_{n} \}$ measurable.
Then by (C3'), we have $\condexpv{X_{n+1}+ C_n}{\mathcal{F}_n} \ge h(\overline{\loc}_n,\pv_n)$.\\
Thus we get:
\begin{eqnarray*}
& &\condexpv{Y_{n+1}}{\mathcal{F}_n}\\
&=& \condexpv{Y_{n}+h(\overline{\ell}_{n+1},\overline{\pv}_{n+1})-h(\overline{\ell}_{n},\overline{\pv}_n)+C_{n}}{\mathcal{F}_n}\\
&=& Y_n+\left(\condexpv{h(\overline{\ell}_{n+1},\overline{\pv}_{n+1})+C_{n}}{\mathcal{F}_n}-h(\overline{\ell}_n,\overline{\pv}_n)\right)\\
&\ge& Y_n
\end{eqnarray*}
Hence, $\{Y_n\}_{n\in\Nset_0}$ is a submartingale, so $\{-Y_n\}_{n\in\Nset_0}$ is a supermartingale. Moreover, we have from the bounded update property that 
\begin{eqnarray*}
|-Y_{n+1}-(-Y_n)|&=& |Y_{n+1}-Y_n| \\
&=&|h_{n+1}+\sum_{m=1}^{n} C_m-h_n-\sum_{m=1}^{n-1} C_m| \\
&=& |h_{n+1}-h_n+C_n| \\
&\le& |h_{n+1}-h_n|+|C_n| \\
&\le& M\cdot n^d + c''\cdot n \\
&\le& \overline{M}^d \cdot n
\end{eqnarray*}
for some $\overline{M} >0$. \\
Thus, by applying Optional Stopping Theorem, we obtain immediately that $\expv(-Y_T)\le \expv(-Y_0)$, so $\expv(Y_T)\ge \expv(Y_0)$.
By definition,
\[
\textstyle -Y_T= -h(\overline{\ell}_{T},\overline{\pv}_T)-\sum_{m=1}^{T-1} C_m=-\sum_{m=1}^{T-1} C_m\enskip.
\]
It follows from (C2) that $\expv(C_\infty)=\expv(\sum_{m=1}^{T-1} C_m)\ge \expv(Y_0)=h(\pv)$.
Since the scheduler $\sigma$ is chosen arbitrarily, we obtain that $\mbox{\sl supval}(\pv)\ge h(\pv)$.

\end{proof}

\subsection{Unbounded Nonnegative Costs and General Updates}\label{app:PUPFs2}

{\bf Theorem~\ref{thm:lmf2}.}~(Soundness of nonnegative PUCS)
Consider a nondeterministic probabilistic program $P$, with a linear invariant $I$ and a nonnegative PUCS $h$. If all the step-wise costs in $P$ are always nonnegative, then $\supval(\pv)\le h(\lin,\pv)$ for all initial valuations $\pv\in I(\lin)$.

\begin{proof}[Proof of Theorem~\ref{thm:lmf2}]
We also follow most definitions above.
Fix any scheduler $\sigma$ and initial valuation $\pv$ for a nondeterministic probabilistic program $P$.
Let $T=\min\{n\mid \overline{\ell}_n= \lout{}\}$.
By our assumption, $\expv(T)<\infty$ under $\sigma$.
We recall the random variables $C_0,C_1,\dots$ where each $C_n$ represents the cost/reward accumulated during  the $n$th execution step of $P$.
Then we define the stochastic process $X_0,X_1,\dots$ by:
\[
\textstyle X_n:=h(\overline{\ell}_{n},\overline{\pv}_n)\enskip.
\]
Furthermore, we accompany $X_0,X_1,\dots$ with the filtration $\mathcal{F}_0,\mathcal{F}_1,\dots$ such that each $\mathcal{F}_n$ is the smallest sigma-algebra that makes all random variables from $\{\overline{\pv}_0,\dots,\overline{\pv}_n \},\{\overline{\rv}_0,\dots,\overline{\rv}_{n-1} \}$ and $\{\overline{\ell}_0,\dots,\overline{\ell}_{n-1} \}$ measurable.
Then by C3, we have $$\condexpv{X_{n+1}+ C_n}{\mathcal{F}_n} \le X_{n}.$$
Thus we get:
\begin{eqnarray*}
& &\condexpv{X_{n+1}+C_n}{\mathcal{F}_n}\le X_n\\
\Leftrightarrow& & \expv(\condexpv{X_{n+1}+C_n}{\mathcal{F}_n})\le \expv(X_n)\\
\Leftrightarrow& & \expv(X_{n+1}+C_n)\le \expv(X_n) \\
\Leftrightarrow& & \expv(X_{n+1})+\expv(C_n)\le \expv(X_n) \\
& &\mbox{(By Induction)} \\
\Leftrightarrow& & \expv(X_{n+1})+\sum\limits_{m=0}^{n}\expv(C_m)\le \expv(X_0) \\
& &\mbox{(By C2)} \\
\Leftrightarrow& & \sum\limits_{m=0}^{n}\expv(C_m)\le \expv(X_0) \\
& &\mbox{(By Monotone Convergence Theorem)} \\
\Leftrightarrow& & \expv(\sum\limits_{m=0}^{n} C_m)\le \expv(X_0) \\
\end{eqnarray*}
The Induction is:
\begin{eqnarray*}
& &\expv(X_n)+\expv(C_{n-1})\le \expv(X_{n-1}) \\
\Leftrightarrow& & \expv(X_n)\le \expv(X_{n-1})-\expv(C_{n-1}) \\
& &\expv(X_{n-1})+\expv(C_{n-2})\le \expv(X_{n-2}) \\
\Leftrightarrow& & \expv(X_{n-1})\le \expv(X_{n-2})-\expv(C_{n-2}) \\
\end{eqnarray*}
Then
\begin{eqnarray*}
& & \expv(X_n)\le \expv(X_{n-1})-\expv(C_{n-1}) \\
\Leftrightarrow& & \expv(X_n)\le \expv(X_{0})-\sum\limits_{m=0}^{n-1}\expv(C_{m})
\end{eqnarray*}
Because all the PUCSs are nonnegative, we can get $\expv(X_{n+1})\ge 0$.
When $n\to \infty$, we obtain
\[
\textstyle \expv(\sum\limits_{m=0}^{\infty} C_m)\le \expv(X_0)\enskip.
\]
Since the scheduler $\sigma$ is chosen arbitrarily, we obtain that $\mbox{\sl supval}(\pv)\le h(\pv)$.

\end{proof}

\section{Details of Example~\ref{al:simple}}\label{app:detail1}

Since our algorithm is technical, we will illustrate the computational steps of the our algorithms on
the example in Figure~\ref{fig:example}.

\begin{example}[Illustration of our algorithms]\label{de:simple}
 We consider the example in Figure~\ref{fig:example}, and assign the invariant $I$ as in Figure~\ref{fig:ex2}.

Firstly, the algorithm sets up a quadratic template $h$ for a PUCS by setting $h(\loc_n,x,y):=a_{n1}\cdot x^2+a_{n2}\cdot xy+a_{n3}\cdot x+a_{n4}\cdot y^2+a_{n5}\cdot y+a_{n6}$ for each $\loc_n(n=1,\dots,4)$ and $h(\loc_5,x,y)=0$ because $\loc_5=\lout{}$, where $a_{np}$ are scalar variables for $n=1,\dots,4$ and $p=1,\dots,6$.

Next we compute the pre-expectations of this example.
\begin{itemize}
\item $\loc_1\in\blocs{}$,\\
      \begin{eqnarray*}
        pre_h(\loc_1,x,y)&=& \mathbf{1}_{\loc'=\loc_2}\cdot h(\loc_2,x,y)+\mathbf{1}_{\loc'=\loc_5}\cdot h(\loc_5,x,y) \\
            &=&\mathbf{1}_{\loc'=\loc_2}\cdot (a_{21}\cdot x^2 +a_{22} \cdot xy +a_{23}\cdot x\\
            &&+a_{24}\cdot y^2+a_{25}\cdot y+a_{26})
            +\mathbf{1}_{\loc'=\loc_5}\cdot 0\\
      \end{eqnarray*}
\item $\loc_2\in\alocs{}$,\\
       \begin{eqnarray*}
          pre_h(\loc_2,x,y) &=& \expv_{R}[h(\loc_3,x+r,y)] \\
          &=& \expv_{R}[a_{31}\cdot (x+r)^2 +a_{32} \cdot (x+r)y \\&&+a_{33}\cdot (x+r)+a_{34}\cdot y^2+a_{35}\cdot y+a_{36}] \\
          &=& a_{31}\cdot x^2+a_{32}\cdot xy+(a_{33}-a_{31})\cdot x\\
          &&+a_{34}\cdot y^2+(a_{35}-\frac{1}{2}a_{32})\cdot y +a_{31}\\&&-\frac{1}{2}a_{33}+a_{36}\\
      \end{eqnarray*}
\item $\loc_3\in\alocs{}$,\\
      \begin{eqnarray*}
        pre_h(\loc_3,x,y) &=& \expv_{R}[h(\loc_4,x,r')]\\
          &=& \expv_{R}[a_{41}\cdot x^2 +a_{42} \cdot x\cdot r'+a_{43}\cdot x\\&& +a_{44}\cdot r'^2+a_{45}\cdot r'+a_{46} ] \\
          &=& a_{41}\cdot x^2+(\frac{1}{3} a_{42}+a_{43})\cdot x+a_{44}\\&&+\frac{1}{3} a_{45}+a_{46} \\
      \end{eqnarray*}
\item $\loc_4\in\tlocs{}$, \\
      \begin{eqnarray*}
        pre_h(\loc_4,x,y) &=& h(\loc_1,x,y)+\expv_{R}(x\cdot y) \\
          &=& a_{11}\cdot x^2 +a_{12} \cdot xy +a_{13}\cdot x+a_{14}\cdot y^2\\&&+a_{15}\cdot y+a_{16}+ xy\\
          &=& a_{11}\cdot x^2 +(a_{12}+1) \cdot xy +a_{13}\cdot x\\&&+a_{14}\cdot y^2+a_{15}\cdot y+a_{16} \\
      \end{eqnarray*}
\end{itemize}

Let the maximal number of multiplicands $t$ in $Monoid(\Gamma)$ be 2, the form of Eq. ($\sharp$) is as following:
\begin{itemize}
\item(label 1)
   (1)$\loc'=\loc_2$ \\
   \begin{eqnarray*}
   \Gamma&=&\{x,x-1\}\\
    u_1&=&1,u_2=x,u_3=x-1,u_4=x^2-x,\\
    u_5&=&x^2,u_6=x^2-2x+1;\\
    g(\mathbf{x})&=&b_1+b_2x+b_3(x-1)+b_4(x^2-x)+b_5x^2\\
    & &+b_6(x^2-2x+1) \\
    &=&(b_4+b_5+b_6)x^2+(b_2+b_3-b_4-2b_6)x\\
    & &+b_1-b_3+b_6 \\
   \end{eqnarray*}
   (2)$\loc'=\loc_5$ \\
   \begin{eqnarray*}
   \Gamma&=&\{x,1-x\}\\
   u_1&=&1,u_2=x,u_3=1-x,u_4=x-x^2,\\
   u_5&=&x^2,u_6=1-2x+x^2;\\
   g(\mathbf{x})&=&b_7+b_8x+b_9(1-x)+b_{10}(x-x^2)+b_{11}x^2\\&&+b_{12}(1-2x+x^2) \\
   &=& (b_{11}+b_{12}-b_{10})x^2+(b_8-b_9+b_{10}-2b_{12})x\\&&+b_7+b_{12} \\
   \end{eqnarray*}
   for $b_i\ge 0, i=1,\dots,12$.
\item(label 2)
   \begin{eqnarray*}
   \Gamma&=&\{x-1\} \\
   u_1&=&1,u_2=x-1,u_3=x^2-2x+1;\\
   g(\mathbf{x})&=&c_1+c_2(x-1)+c_3(x^2-2x+1) \\
   &=& c_3x^2+(c_2-2c_3)x+c_1-c_2+c_3 \\
   \end{eqnarray*}
   for $c_j\ge 0, j=1,2,3$.
\item(label 3)
   \begin{eqnarray*}
   \Gamma&=&\{x\} \\
   u_1&=&1,u_2=x,u_3=x^2;\\
   g(\mathbf{x})&=&d_1+d_2x+d_3x^2 \\
   &=& d_3x^2+d_2x+d_1 \\
   \end{eqnarray*}
   for $d_l\ge 0, l=1,2,3$.
\item(label 4)
   \begin{eqnarray*}
   \Gamma&=&\{x,1-y,1+y\} \\
   u_1&=&1,u_2=x,u_3=1-y,u_4=1+y,\\
   u_5&=&x(1-y),u_6=x(1+y),u_7=(1-y)(1+y),\\
   u_8&=&x^2,u_9=(1-y)^2,u_{10}=(1+y)^2;\\
   g(\mathbf{x})&=&e_1+e_2x+e_3(1-y)+e_4(1+y)+e_5x(1-y)\\&&+e_6x(1+y)+e_7(1-y)(1+y)+e_8x^2\\&&+e_9(1-y)^2+e_{10}(1+y)^2 \\
   &=& e_8x^2+(e_6-e_5)xy+(e_2+e_5+e_6)x\\&&+(e_9+e_{10}-e_7)y^2+(e_4-e_3-2e_9+2e_{10})y\\&&+e_1+e_3+e_4+e_7+e_9+e_{10} \\
   \end{eqnarray*}
   for $e_m\ge 0, m=1,\dots,10$.
\end{itemize}

Then we extract instances conforming to pattern $g=h(\loc,\nu)-\pre_{h}(\loc,\nu)$ from C3.
\begin{itemize}
\item(C3, label 1) \\
     (1)$\loc'=\loc_2$ \\
     \begin{eqnarray*}
     g(\mathbf{x})&=&h(\loc_1,x,y)-pre_h(\loc_1,x,y) \\
     &=&h(\loc_1,x,y)-h(\loc_2,x,y) \\
     &=&(a_{11}-a_{21})x^2+(a_{12}-a_{22})xy+(a_{13}-a_{23})x\\
     &&+(a_{14}-a_{24})y^2+(a_{15}-a_{25})y+a_{16}-a_{26} \\
     \end{eqnarray*}
     (2)$\loc'=\loc_5$ \\
     \begin{eqnarray*}
     g(\mathbf{x})&=&h(\loc_1,x,y)-pre_h(\loc_1,x,y) \\
     &=&h(\loc_1,x,y)-h(\loc_5,x,y) \\
     &=&a_{11}\cdot x^2 +a_{12} \cdot xy +a_{13}\cdot x+a_{14}\cdot y^2\\&&+a_{15}\cdot y+a_{16} \\
     \end{eqnarray*}
\item(C3, label 2)
     \begin{eqnarray*}
     g(\mathbf{x})&=&h(\loc_2,x,y)- pre_h(\loc_2,x,y) \\
     &=&(a_{21}-a_{31})x^2+(a_{22}-a_{32})xy \\&&+(a_{23}-a_{33}+a_{31})x+(a_{24}-a_{34})y^2\\&&+(a_{25}-a_{35}+\frac{1}{2}a_{32})y+a_{26}-a_{31}+\frac{1}{2}a_{33}-a_{36} \\
     \end{eqnarray*}
\item(C3, label 3)
     \begin{eqnarray*}
     g(\mathbf{x})&=&h(\loc_3,x,y)- pre_h(\loc_3,x,y) \\
     &=&(a_{31}-a_{41})x^2+a_{32}xy+(a_{33}-\frac{1}{3}a_{42}-a_{43})x\\&&+a_{34}y^2+a_{35}y+a_{36}-a_{44}-\frac{1}{3}a_{45}-a_{46}\\
     \end{eqnarray*}
\item(C3, label 4)
     \begin{eqnarray*}
     g(\mathbf{x})&=&h(\loc_4,x,y)- pre_h(\loc_4,x,y) \\
     &=&(a_{41}-a_{11})x^2+(a_{42}-a_{12}-1)xy+(a_{43}-a_{13})x\\&&+(a_{44}-a_{14})y^2+(a_{45}-a_{15})y+a_{46}-a_{16}\\
     \end{eqnarray*}
\end{itemize}

So we can translate them into systems of linear equalities.\\
(I) For label 1,
\[	
\begin{cases}
a_{11}-a_{21}=b_4+b_5+b_6 \\
a_{12}-a_{22}=0     \\
a_{13}-a_{23}=b_2+b_3-b_4-2b_6  \\
a_{14}-a_{24}=0 \\
a_{15}-a_{25}=0 \\
a_{16}-a_{26}=b_1-b_3+b_6 \\
\end{cases}
\]
and
\[
\begin{cases}
a_{11}=b_{11}+b_{12}-b_{10} \\
a_{12}=0\\
a_{13}=b_8-b_9+b_{10}-2b_{12} \\
a_{14}=0 \\
a_{15}=0 \\
a_{16}=b_7+b_{12} \\
\end{cases}
\]

(II) For label 2,
\[	
\begin{cases}
a_{21}-a_{31}=c_3 \\
a_{22}-a_{32}=0 \\
a_{23}-a_{33}+a_{31}=c_2-2c_3 \\
a_{24}-a_{34}=0 \\
a_{25}-a_{35}+\frac{1}{2}a_{32}=0 \\
a_{26}-a_{31}+\frac{1}{2}a_{33}-a_{36}=c_1-c_2+c_3 \\
\end{cases}
\]

(III) For label 3,
\[
\begin{cases}
a_{31}-a_{41}=d_3 \\
a_{32}=0 \\
a_{33}-\frac{1}{3}a_{42}-a_{43}=d_2   \\
a_{34}=0 \\
a_{35}=0 \\
a_{36}-a_{44}-\frac{1}{3}a_{45}-a_{46}=d_1    \\
\end{cases}
\]

(IV) For label 4,
\[
\begin{cases}
a_{41}-a_{11}=e_8 \\
a_{42}-a_{12}-1=e_6-e_5 \\
a_{43}-a_{13}=e_2+e_5+e_6  \\
a_{44}-a_{14}=e_9+e_{10}-e_7 \\
a_{45}-a_{15}=e_4-e_3-2e_9+2e_{10} \\
a_{46}-a_{16}=e_1+e_3+e_4+e_7+e_9+e_{10}    \\
\end{cases}
\]

Our target function is $h(\loc_1,x_0,y_0)$, where $x_0,y_0$ are the initial inputs and we fix $x_0$ to be a proper large integer, i.e. $x_0=100$, and $y_0$ to be $0$.
\begin{eqnarray*}
min\  a_{11}x_0^2+a_{13}x_0+a_{16} \\
subject\  to\  (I),(II),(III),(IV) \\
b_i,c_j,d_l,e_m\ge 0, \forall i,j,l,m
\end{eqnarray*}

Finally, the algorithm gives the optimal solutions through linear programming such that:
\begin{eqnarray*}
h(\loc_1,x,y)&=&\frac{1}{3}\cdot x^2 +\frac{1}{3} \cdot x \\
h(\loc_2,x,y)&=&\frac{1}{3}\cdot x^2 +\frac{1}{3} \cdot x \\
h(\loc_3,x,y)&=&\frac{1}{3}\cdot x^2 +\frac{2}{3} \cdot x \\
h(\loc_4,x,y)&=& \frac{1}{3}\cdot x^2 +xy+\frac{1}{3} \cdot x \\
\end{eqnarray*}

To find a PLCS for this example, the steps are similar.
The algorithm sets up a quadratic template $h'$ for a PLCS with the similar form of the above PUCS $h$.
By the same way, we get the optimal solutions of the template $h'$ and find they are the same as the PUCS's.

By the definition of PUCS and PLCS (see Section~\ref{sec:approach}), we can conclude that this template $h$ is both PUCS and PLCS, and we can get the accurate value of expected resource consumption that $\expv(C_\infty)= h(\loc_1,x_0,y_0)=\frac{1}{3}x_0^2+\frac{1}{3}x_0$.
\end{example}

\section{Experimental Results}\label{app:results}

\subsection{Benchmarks}
We use ten example programs for our experimental results, including (1)~Bitcoin Mining (see Figure~\ref{fig:mining}); (2)~Bitcoin Mining Pool (see Figure~\ref{fig:pool}); (3)~Queuing Network (see Figure~\ref{fig:FJcode}); (4)~Species Fight (see Figure~\ref{fig:species}); (5)~Simple Loop (see Figure~\ref{fig:example}); (6)~Nested Loop (see Figure~\ref{fig:complexdisplay}); (7)~Random Walk (see Figure~\ref{fig:rdwalk}); (8)~2D Robot (see Figure~\ref{fig:robot}); (9)~Goods Discount (see Figure~\ref{fig:goods}); and (10)~Pollutant Disposal (see Figure~\ref{fig:pollutant}).

We now provide a brief introduction about 2D Robot, Goods discount, and Pollutant Disposal.

\smallskip\noindent{\bf{2D Robot.}}
We consider a robot walking in a plane. Suppose that the robot is initially located below the line $y=x$ and we want it to cross this line, i.e.~the program continues until the robot crosses the line. At each iteration, the robot probabilistically chooses one of the following 9 directions and moves in that direction: \{0: North, 1: South, 2: East, 3: West, 4: Northeast, 5: Southeast, 6: Northwest, 7: Southwest, 8: Stay\}. Moreover, the robot's step size is a uniformly random variable between 1 and 3. At the end of each iteration, a cost is incurred, which is dependent on the distance between the robot and the line $y=x$.

\smallskip\noindent{\bf{Goods Discount.}}
Consider a shop that sells a specific type of perishable goods with an expiration date. After a certain number of days ($30$ days in our example), when the expiration date is close, the goods have to be sold at a discount, which will cause losses. Moreover, stocking goods takes up space, which also incurs costs. On the other hand, selling goods leads to a reward. In this example, we model this scenario as follows: $n$ is the number of goods which are on sale, $d$ is the number of days after the goods are manufactured and each time one piece of goods is sold, $d$ will be incremented by a random variable $r$ which has a uniform distribution $[1,2]$ (This models the time it takes to sell the next piece). The program starts with the initial value $n=a$ , $d=b$ and terminates if $d$ exceeds $30$ days, which will eventually happen with probability $1$.

\smallskip\noindent{\bf{Pollutant Disposal.}}
We consider a pollutant disposal factory that has two machines $A$ and $B$. At first, the factory is given an initial amount of pollutants to dispose of. At each iteration, the factory uses machine $A$ with probability $0.6$ and machine $B$ with probability $0.4$. Machine $A$ can dispose of $r_1$ units of pollutants, while creating $r'_1$ new units of pollutants in the process. Similarly, machine $B$ can dispose of $r_2$ units by creating $r'_2$ new units of pollutants. The sampling variables $r_1, r_2$ are \emph{integer-valued} random variables which have an equivalent sampling rate between $1$ and $10$. Similarly, $r'_1, r'_2$ are \emph{integer-valued} random variables which have an equivalent sampling rate between $2$ and $8$. There is a reward associated with disposing of each unit of pollutants. On the other hand, at the end of each iteration, a cost is incurred which is proportional to the amount of remaining pollutants.

\lstset{language=prog}
\lstset{tabsize=3}
\newsavebox{\complexdisplay}
\begin{lrbox}{\complexdisplay}
\begin{lstlisting}[mathescape]
$1$: while $i\ge 1$ do
$2$:    $x:=i$;
$3$:    while $x\ge 1$ do
$4$:        $x:=x+r$;
$5$:         $y:=r'$;
$6$:        $\mathbf{tick}(y)$
        od
$7$:   $i:=i+r''$;
$8$    $z:=r'''$;
$9$:   $\mathbf{tick}(-z * i)$
     od
$10$:
\end{lstlisting}
\end{lrbox}

\begin{figure}[H]
\centering
  \begin{minipage}[c]{0.45\textwidth}
      \centering
      \usebox{\complexdisplay}
      \caption{A Nested Loop Example}
      \label{fig:complexdisplay}
  \end{minipage}

\end{figure}

\lstset{language=prog}
\lstset{tabsize=3}
\newsavebox{\rdwalk}
\begin{lrbox}{\rdwalk}
\begin{lstlisting}[mathescape]
$\probm(r=1)=0.25,\probm(r=-1)=0.75$
while $x\le n$ do
   if prob(0.6) then
       x:=x+1
   else
       x:=x-1
   fi;
   $y=r$;
   tick$(y)$
od
\end{lstlisting}
\end{lrbox}

\begin{figure}[H]
  \centering
  \usebox{\rdwalk}
  \caption{rdwalk}
  \label{fig:rdwalk}
\end{figure}

\lstset{language=prog}
\lstset{tabsize=3}
\newsavebox{\robot}
\begin{lrbox}{\robot}
\begin{lstlisting}[mathescape]
 x:=a;y:=b;
 while $y\le x$ do
    if prob$(0.2)$ then
       $y:=y+r_0$
    else if prob$(0.125)$ then
            $y:=y-r_1$
    else if prob$(0.143)$ then
            $x:=x+r_2$
    else if prob$(0.167)$ then
            $x:=x-r_3$
    else if prob$(0.2)$ then
            $x:=x+r_4$;
            $y:=y+r'_4$
    else if prob$(0.25)$then
            $x:=x+r_5$;
            $y:=y-r'_5$
    else if prob$(0.333)$ then
            $x:=x-r_6$;
            $y:=y+r'_6$
    else if prob$(0.5)$ then
            $x:=x-r_7$;
            $y:=y-r'_7$
         else skip
    fi fi fi fi fi fi fi fi;
    tick$(0.707*(x-y))$
 od
\end{lstlisting}
\end{lrbox}

\begin{figure}[H]
\centering
\usebox{\robot}
\caption{2D Robot}
\label{fig:robot}
\end{figure}

\lstset{language=prog}
\lstset{tabsize=3}
\newsavebox{\goods}
\begin{lrbox}{\goods}
\begin{lstlisting}[mathescape]
 n:=a;d:=1;
 while $d\le 30$ and $n>=1$ do
     $n:=n-1$;
     tick$(5)$;
     $d:=d+r$;
     tick$(-0.01* n)$
 od;
 tick$(-0.5* n)$
\end{lstlisting}
\end{lrbox}

\begin{figure}[H]
\centering
\usebox{\goods}
\caption{Goods Discount}
\label{fig:goods}
\end{figure}

\lstset{language=prog}
\lstset{tabsize=3}
\newsavebox{\pollutant}
\begin{lrbox}{\pollutant}
\begin{lstlisting}[mathescape]
 n:=a;
 while $n\ge 10$ do
    if prob$(0.6)$ then
       $x:=r_1$;
       $n:=n-x+r'_1$;
       tick$(5* x)$
    else
       $y:=r_2$;
       $n:=n-y+r'_2$;
       tick$(5* y)$
    fi;
    tick$(-0.2* n)$
 od
\end{lstlisting}
\end{lrbox}

\begin{figure}[H]
\centering
\usebox{\pollutant}
\caption{Pollutant Disposal}
\label{fig:pollutant}
\end{figure}

\subsection{Numeric Bounds and Plots}

Table~\ref{tab:expr} shows the numeric upper and lower bounds obtained for each benchmark over several initial valuations. In each case, we report the upper bound obtained through PUCS, the runtime of our PUCS synthesis algorithm, the lower bound obtained through PLCS, and the runtime of our PLCS synthesis algorithm. Moreover, we simulated $1000$ runs of each program with each initial value, computed the resulting costs, and reported the mean $\mu$ and standard deviation $\sigma$ of the costs.
Note that we do not have simulation results for Bitcoin mining examples as they involve nondeterminism.
Also we do not have lower bounds for species fight example as its updates are unbounded. 

In another experiment, we set the time limit for simulations to the time it takes for our approach to synthesize a PUCS/PLCS. We also replaced the nondeterministic $\textbf{if}~\star$ statements with probabilistic $\textbf{if prob(}0.5\textbf{)}$. The results are reported in Table~\ref{tab:exprnew}.

\begin{table*}[hbtp]
	\vspace{5cm}
	\caption{Experimental Results. All times are reported in seconds.}
	\resizebox{15cm}{!}{
		\begin{tabular}{c|c|cc|cc|cc}
			\toprule
			
			\textbf{Benchmark} & $\mathbf{\pv_0}$ & \multicolumn{2}{c|}{\textbf{PUCS}} & \multicolumn{2}{c|}{\textbf{PLCS}} & \multicolumn{2}{c}{\textbf{Simulation}}\\
			\textbf{Program} & & $h(\lin, \pv_0)$ & $T$ & $h(\lin, \pv_0)$ & $T$ & $\mu$ & $\sigma$\\
			\hline
			
			Bitcoin Mining & $x_0 = 20$ & $-28.03$ & $4.69$ & $-30.00$ & $4.73$ &--  & -- \\
			(Figure~\ref{fig:mining}) & $x_0 = 50$ & $-72.28$ & $4.66$ & $-75.00$ & $4.63$ & --  & --  \\
			& $x_0 = 100$ & $-146.03$ & $4.62$ & $-150.00$ & $4.62$ & -- & --\\
			\hline

			Bitcoin Mining Pool & $y_0 = 20$ & $-3.73\times 10^3$ & $14.03$ & $-4.35\times 10^3$ & $13.73$ & -- & --\\
			(Figure~\ref{fig:pool}) & $y_0 = 50$ & $-2.05 \times 10^4$ & $13.78$ & $-2.21 \times 10^4$ & $13.76$ & --& --\\
			& $y_0 = 100$ & $-7.79 \times 10^4$ & $13.96$ & $-8.18 \times 10^4$ & $13.85$ & -- &-- \\
			\hline

			Queuing Network & 				$n_0 = 240$ & $11.82$ & $141.28$ & $9.23$ & $141.32$ & $9.90$ & $4.43$\\
			(Figure~\ref{fig:FJcode})&	  	$n_0 = 280$ & $13.79$ & $142.16$ & $10.76$ & $140.70$ & $11.15$ & $4.66$ \\
			&								$n_0 = 320$ & $15.76$ & $141.02$ & $12.30$ & $141.42$ & $12.99$ & $5.29$ \\
			\hline

			Species Fight 					& $a_0 = 12, b_0 = 10$ & $1.65 \times 10^3$ & $16.43$ & -- & -- & $817.40$ & $379.28$\\
			(Figure~\ref{fig:species})		& $a_0 = 14, b_0 = 10$ & $2.09 \times 10^3$ & $16.47$ & -- & -- & $971.86$ & $453.89$\\
			& $a_0 = 16, b_0 = 10$ & $2.53 \times 10^3$ & $16.30$ & -- & -- & $1.13 \times 10^3$ & $0.55 \times 10^3$\\
			\hline

			Figure~\ref{fig:example} & 	$x_0 = 100$ & $3.37 \times 10^3$ & $3.05$ & $3.37 \times 10^3$ & $3.03$ & $3.41 \times 10^3$ & $0.90 \times 10^3$\\
			&							$x_0 = 160$ & $8.59 \times 10^3$ & $3.00$ & $8.59 \times 10^3$ & $3.02$ & $8.62 \times 10^3$ & $1.76 \times 10^3$\\
			&							$x_0 = 200$ & $1.34 \times 10^4$ & $3.00$ & $1.34 \times 10^4$ & $3.00$ & $1.35 \times 10^4$ & $0.25 \times 10^4$\\
			\hline

			Nested Loop  &  $i_0 = 50$ & $883.33$ & $15.82$ & $816.67$ & $15.91$ & $872.78$ & $344.29$\\
			& 	            $i_0 = 100$ & $3.43 \times 10^3$ & $16.13$ & $3.30 \times 10^3$ & $15.89$ & $3.43 \times 10^3$ & $0.90 \times 10^3$\\
			&				$i_0 = 150$ & $7.65 \times 10^3$ & $15.80$ & $7.45 \times 10^3$ & $15.93$ & $7.66 \times 10^3$ & $1.68 \times 10^3$\\
			\hline

			Random Walk & $x_0 = 4, n_0 = 20$ & $-40.00$ & $7.00$ & $-42.50$ & $7.07$ & $-42.77$ & $23.46$\\
			& $x_0 = 8, n_0 = 20$ & $-30.00$ & $6.96$ & $-32.50$ & $6.96$ & $-32.32$ & $21.27$\\
			& $x_0 = 12, n_0 = 20$ & $-20.00$ & $7.09$ & $-22.50$ & $7.93$ & $-23.23$ & $18.47$\\
			\hline

			2D Robot & 	$x_0 = 100, y_0 = 40$ & $8.23 \times 10^3$ & $20.11$ & $8.11 \times 10^3$ & $20.03$ & $7.96 \times 10^3$ & $5.83 \times 10^3$\\
			&			$x_0 = 100, y_0 = 60$ & $4.15 \times 10^3$ & $20.16$ & $4.02 \times 10^3$ & $20.13$ & $4.01 \times 10^3$ & $3.64 \times 10^3$\\
			&			$x_0 = 100, y_0 = 80$ & $1.45 \times 10^3$ & $20.15$ & $1.32 \times 10^3$ & $20.13$ & $1.36 \times 10^3$ & $2.00 \times 10^3$\\
			\hline

			Goods Discount & 	$n_0 = 100, d_0 = 1$ & $46.30$ & $8.42$ & $37.89$ & $8.45$ & $41.45$ & $4.22$\\
			&					$n_0 = 150, d_0 = 1$ & $11.63$ & $8.43$ & $2.56$ & $8.43$ & $6.33$ & $3.84$\\
			&					$n_0 = 200, d_0 = 1$ & $-23.02$ & $8.46$ & $-32.77$ & $8.43$ & $-28.26$ & $3.34$\\
			\hline

			Pollutant Disposal &	$n_0 = 50$ & $2.01 \times 10^3$ & $10.04$ & $1.53 \times 10^3$ & $9.85$ & $1.66 \times 10^3$ & $1.02 \times 10^3$\\
			&						$n_0 = 80$ & $2.74 \times 10^3$ & $9.78$ & $2.25 \times 10^3$ & $9.88$ & $2.42 \times 10^3$ & $1.13 \times 10^3$\\
			&						$n_0 = 200$ & $2.04 \times 10^3$ & $9.75$ & $1.56 \times 10^3$ & $9.78$ & $1.66 \times 10^3$ & $1.56 \times 10^3$\\

			\bottomrule
		\end{tabular}
	}
	\vspace{5cm}
	\label{tab:expr}
\end{table*}

\begin{table*}[hbtp]
	\vspace{5cm}
	\caption{Experimental Results on Programs in which Nondeterminism is Replaced with Probability.}
	\resizebox{15cm}{!}{
		\begin{tabular}{c|c|cc|cc|cc}
			\toprule
			
			\textbf{Benchmark} & $\mathbf{\pv_0}$ & \multicolumn{2}{c|}{\textbf{PUCS}} & \multicolumn{2}{c|}{\textbf{PLCS}} & \multicolumn{2}{c}{\textbf{Simulation}}\\
			\textbf{Program} & & $h(\lin, \pv_0)$ & $T$ & $h(\lin, \pv_0)$ & $T$ & $\mu$ & $\sigma$\\
			\hline
			
			Modified Bitcoin Mining & $x_0 = 20$ & $-28.26$ & $4.41$ & $-29.75$ & $4.24$ &$-30.04$  &500.75 \\
			& $x_0 = 50$ & $-72.89$ & $4.52$ & $-74.38$ & $4.33$ &$-73.90$ &$788.04$  \\
			& $x_0 = 100$ & $-147.26$ & $6.26$ & $-148.75$ & $6.12$ &$-145.40$ & $1.11\times 10^3$\\
			\hline

			Modified Bitcoin Mining Pool & $y_0 = 20$ & $-3.77\times 10^3$ & $14.62$ & $-4.31\times 10^3$ & $14.73$ & $-4.26\times 10^3$ &$7.00\times 10^3$\\
			& $y_0 = 50$ & $-2.06 \times 10^4$ & $14.88$ & $-2.19 \times 10^4$ & $14.76$ & $-2.27\times 10^4$& $1.86\times 10^4$\\
			& $y_0 = 100$ & $-7.85 \times 10^4$ & $14.93$ & $-8.11 \times 10^4$ & $14.81$ & $-7.92\times 10^4$ &$3.93\times 10^4$ \\
			\hline

			Queuing Network & 				$n_0 = 240$ & $11.82$ & $141.28$ & $9.23$ & $141.32$ & $10.64$ & $4.94$\\
			(Figure 6)&	  	$n_0 = 280$ & $13.79$ & $142.16$ & $10.76$ & $140.70$ & $12.41$ & $5.34$ \\
			&								$n_0 = 320$ & $15.76$ & $141.02$ & $12.30$ & $141.42$ & $14.17$ & $5.69$ \\
			\hline

			Species Fight 					& $a_0 = 12, b_0 = 10$ & $1.65 \times 10^3$ & $16.43$ & -- & -- & $808.20$ & $379.91$\\
			(Figure 8)		& $a_0 = 14, b_0 = 10$ & $2.09 \times 10^3$ & $16.47$ & -- & -- & $978.48$ & $458.93$\\
			& $a_0 = 16, b_0 = 10$ & $2.53 \times 10^3$ & $16.30$ & -- & -- & $1.13 \times 10^3$ & $0.54 \times 10^3$\\
			\hline

			Figure 2 & 	$x_0 = 100$ & $3.37 \times 10^3$ & $3.05$ & $3.37 \times 10^3$ & $3.03$ & $3.36 \times 10^3$ & $0.92 \times 10^3$\\
			&							$x_0 = 160$ & $8.59 \times 10^3$ & $3.00$ & $8.59 \times 10^3$ & $3.02$ & $8.58 \times 10^3$ & $1.84 \times 10^3$\\
			&							$x_0 = 200$ & $1.34 \times 10^4$ & $3.00$ & $1.34 \times 10^4$ & $3.00$ & $1.34 \times 10^4$ & $0.26 \times 10^4$\\
			\hline

			Nested Loop  &  $i_0 = 50$ & $883.33$ & $15.82$ & $816.67$ & $15.91$ & $880.12$ & $329.54$\\
			& 	            $i_0 = 100$ & $3.43 \times 10^3$ & $16.13$ & $3.30 \times 10^3$ & $15.89$ & $3.40 \times 10^3$ & $0.92 \times 10^3$\\
			&				$i_0 = 150$ & $7.65 \times 10^3$ & $15.80$ & $7.45 \times 10^3$ & $15.93$ & $7.67 \times 10^3$ & $1.61 \times 10^3$\\
			\hline

			Random Walk & $x_0 = 4, n_0 = 20$ & $-40.00$ & $7.00$ & $-42.50$ & $7.07$ & $-42.73$ & $24.12$\\
			& $x_0 = 8, n_0 = 20$ & $-30.00$ & $6.96$ & $-32.50$ & $6.96$ & $-32.43$ & $20.86$\\
			& $x_0 = 12, n_0 = 20$ & $-20.00$ & $7.09$ & $-22.50$ & $7.93$ & $-22.45$ & $17.37$\\
			\hline

			2D Robot & 	$x_0 = 100, y_0 = 40$ & $8.23 \times 10^3$ & $20.11$ & $8.11 \times 10^3$ & $20.03$ & $8.17 \times 10^3$ & $6.19 \times 10^3$\\
			&			$x_0 = 100, y_0 = 60$ & $4.15 \times 10^3$ & $20.16$ & $4.02 \times 10^3$ & $20.13$ & $4.08 \times 10^3$ & $3.85 \times 10^3$\\
			&			$x_0 = 100, y_0 = 80$ & $1.45 \times 10^3$ & $20.15$ & $1.32 \times 10^3$ & $20.13$ & $1.36 \times 10^3$ & $2.07 \times 10^3$\\
			\hline

			Goods Discount & 	$n_0 = 100, d_0 = 1$ & $46.30$ & $8.42$ & $37.89$ & $8.45$ & $41.43$ & $4.23$\\
			&					$n_0 = 150, d_0 = 1$ & $11.63$ & $8.43$ & $2.56$ & $8.43$ & $6.48$ & $3.76$\\
			&					$n_0 = 200, d_0 = 1$ & $-23.02$ & $8.46$ & $-32.77$ & $8.43$ & $-28.43$ & $3.33$\\
			\hline

			Pollutant Disposal &	$n_0 = 50$ & $2.01 \times 10^3$ & $10.04$ & $1.53 \times 10^3$ & $9.85$ & $1.66 \times 10^3$ & $1.00 \times 10^3$\\
			&						$n_0 = 80$ & $2.74 \times 10^3$ & $9.78$ & $2.25 \times 10^3$ & $9.88$ & $2.38 \times 10^3$ & $1.11 \times 10^3$\\
			&						$n_0 = 200$ & $2.04 \times 10^3$ & $9.75$ & $1.56 \times 10^3$ & $9.78$ & $1.69 \times 10^3$ & $1.57 \times 10^3$\\

			\bottomrule
		\end{tabular}
	}
	\vspace{5cm}
	\label{tab:exprnew}
\end{table*}

In a third experiment, we simulated each example program over $20$ different initial valuations ($1000$ simulated executions for each valuation) to experimentally compare upper and lower bounds obtained using our approach with simulations (Figures~\ref{com:btcmining}--\ref{com:pollutant}).

\begin{figure}[H]
	\centering
	\includegraphics[width=.75\linewidth]{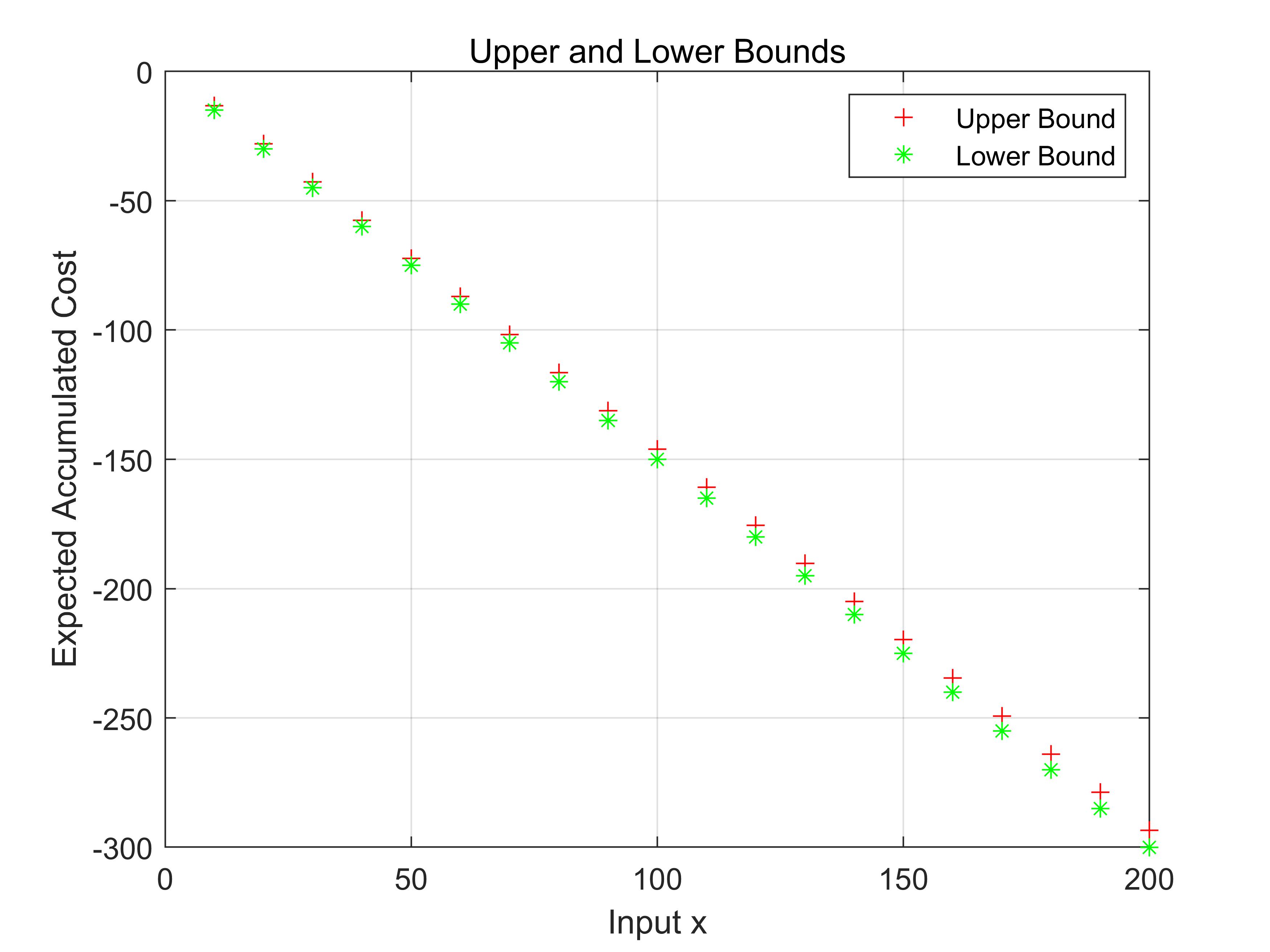}
	\caption{Bitcoin Mining}
	\label{com:btcmining}
\end{figure}

\begin{figure}[H]
	\centering
	\includegraphics[width=.75\linewidth]{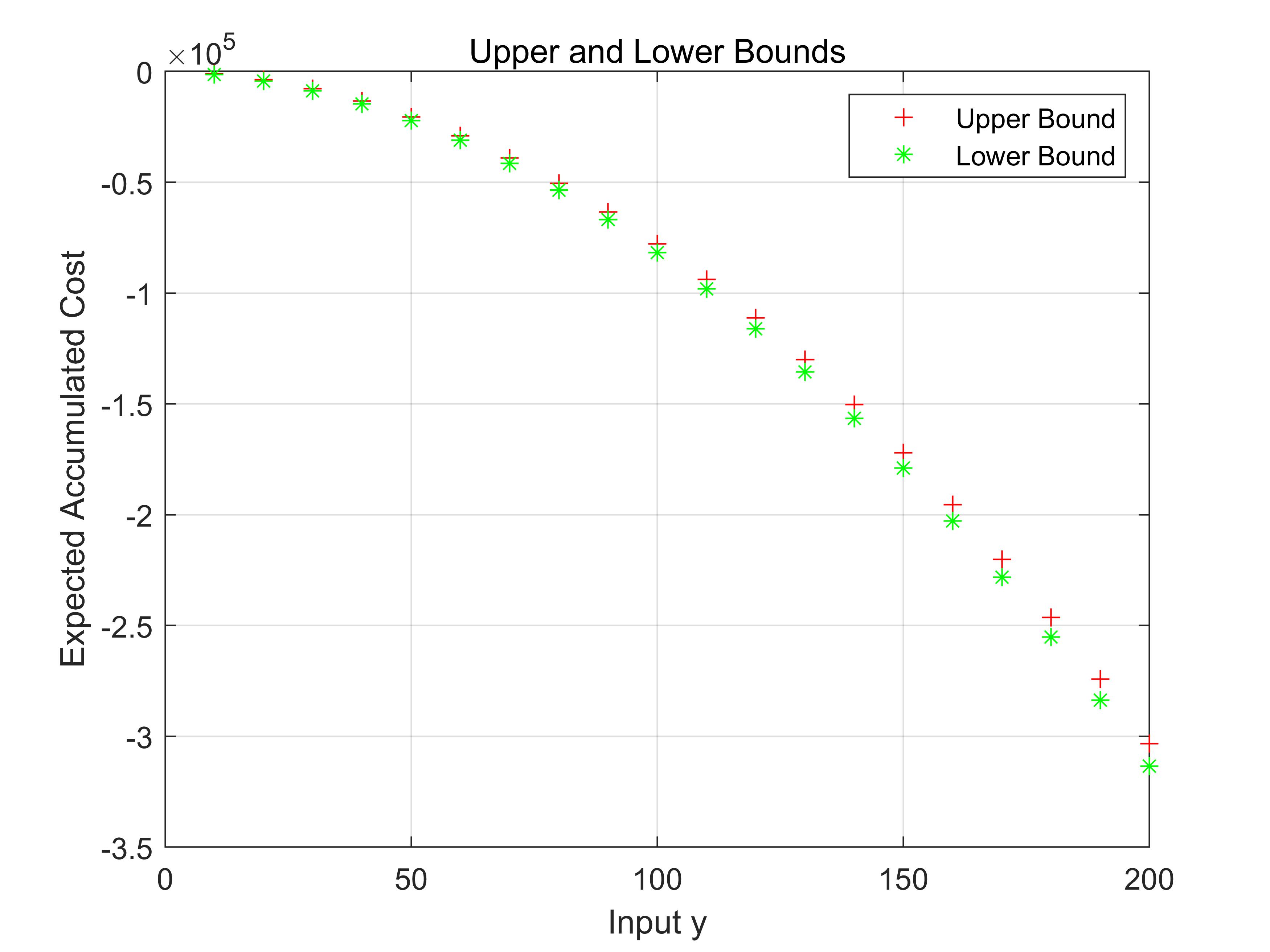}
	\caption{Bitcoin Mining Pool}
	\label{com:btcminingp}
\end{figure}

\begin{figure}[H]
	\centering
	\includegraphics[width=.75\linewidth]{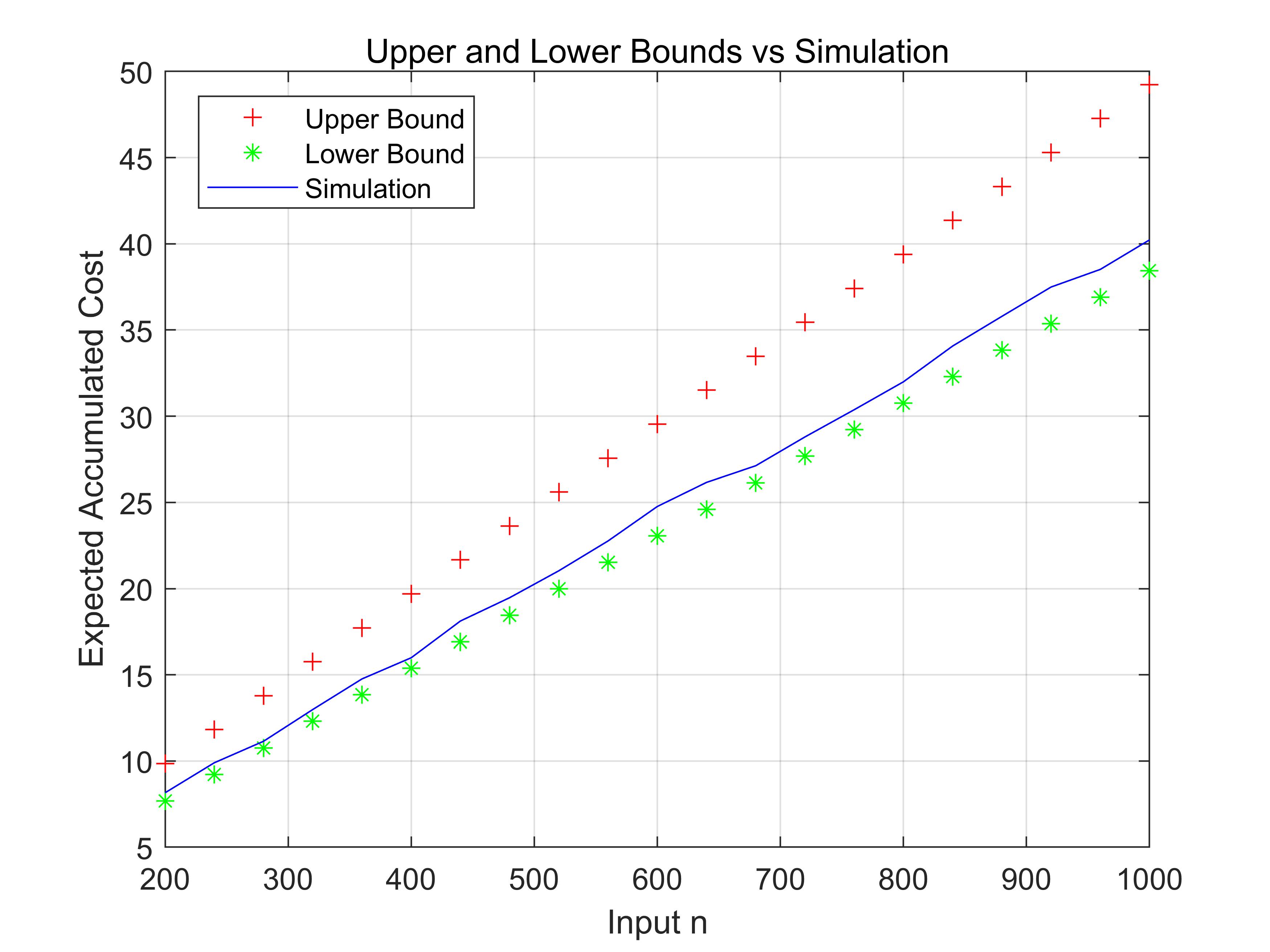}
	\caption{Queuing Network}
	\label{com:queuing}
\end{figure}

\begin{figure}[H]
	\centering
	\includegraphics[width=.75\linewidth]{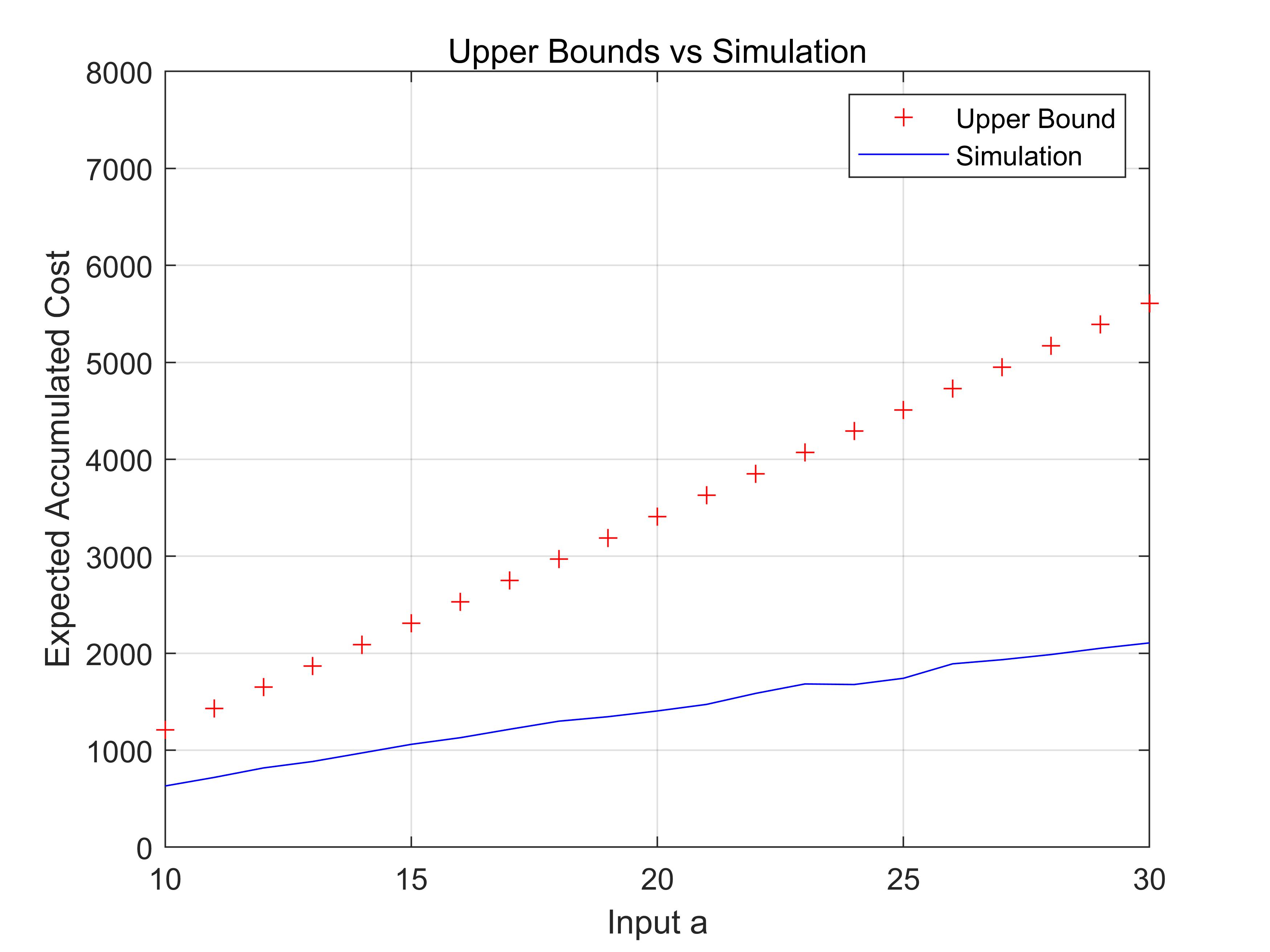}
	\caption{Species Fight}
	\label{com:fight}
\end{figure}

\begin{figure}[H]
	\centering
	\includegraphics[width=.75\linewidth]{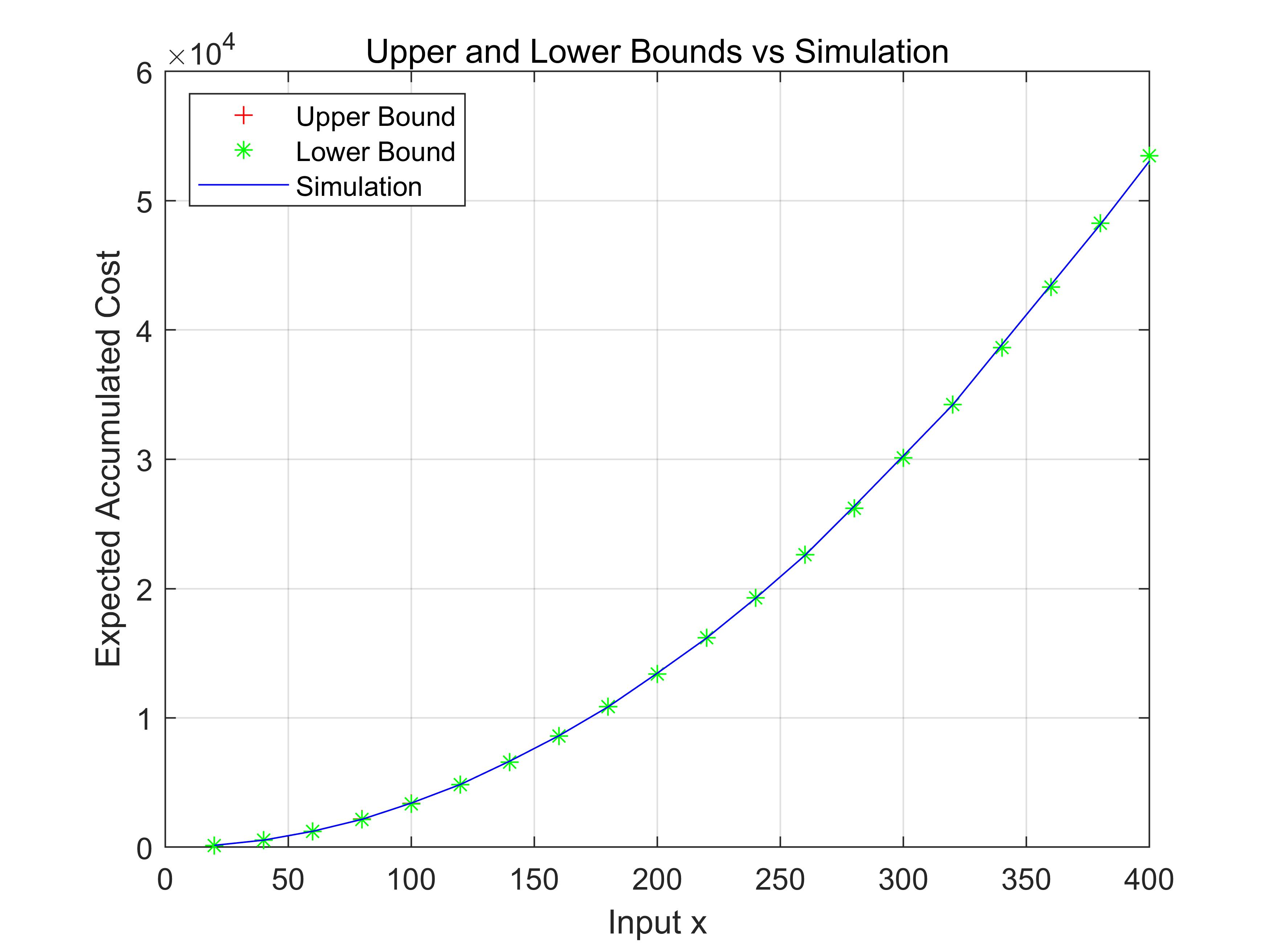}
	\caption{Simple Loop}
	\label{com:simple}
\end{figure}

\begin{figure}[H]
	\centering
	\includegraphics[width=.75\linewidth]{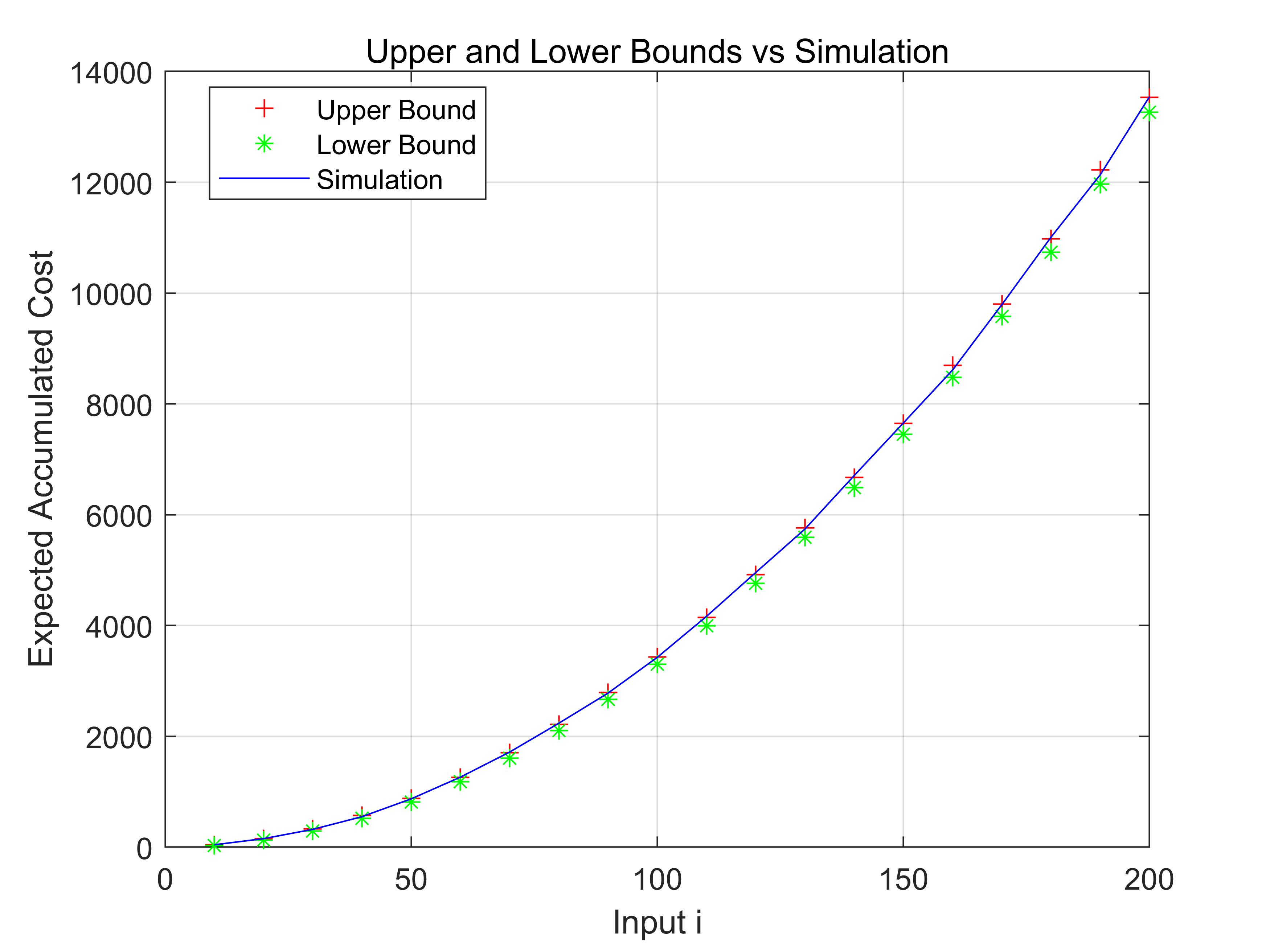}
	\caption{Nested Loop}
	\label{com:complex}
\end{figure}

\begin{figure}[H]
	\centering
	\includegraphics[width=.75\linewidth]{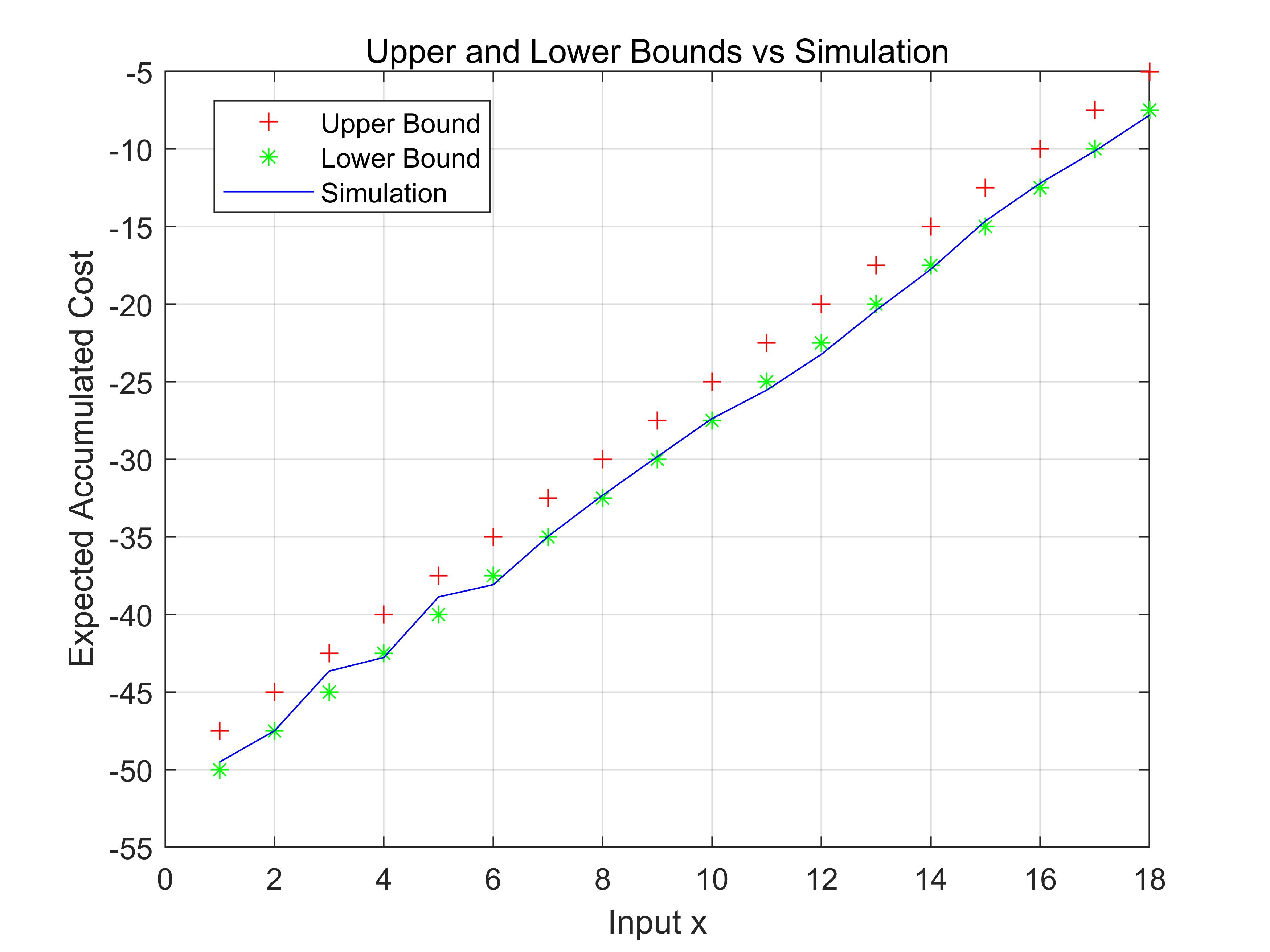}
	\caption{Random Walk}
	\label{com:rdwalk}
\end{figure}

\begin{figure}[H]
	\centering
	\includegraphics[width=.75\linewidth]{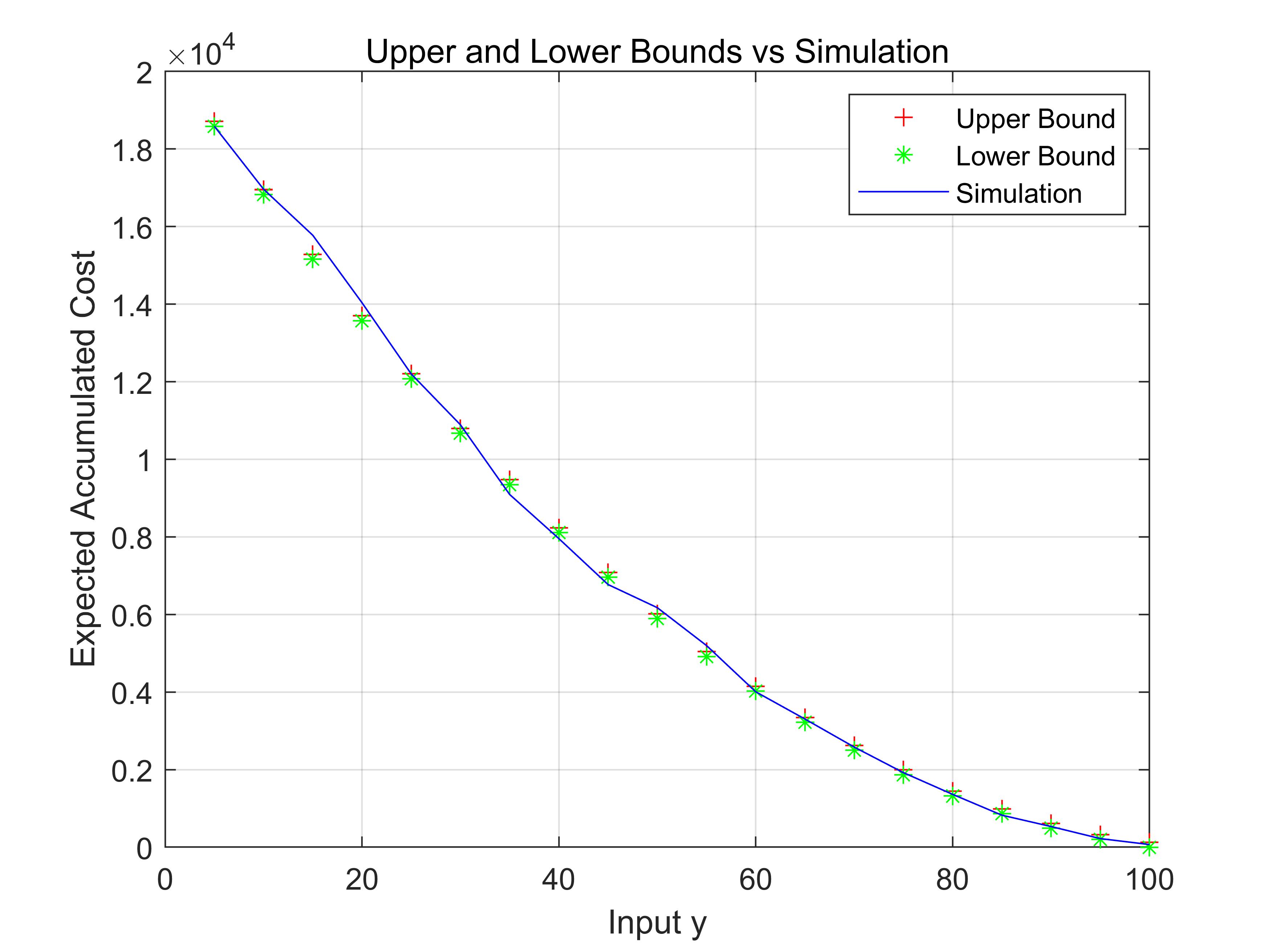}
	\caption{2D Robot}
	\label{com:robot}
\end{figure}

\newpage

\begin{figure}[H]
	\centering
	\includegraphics[width=.75\linewidth]{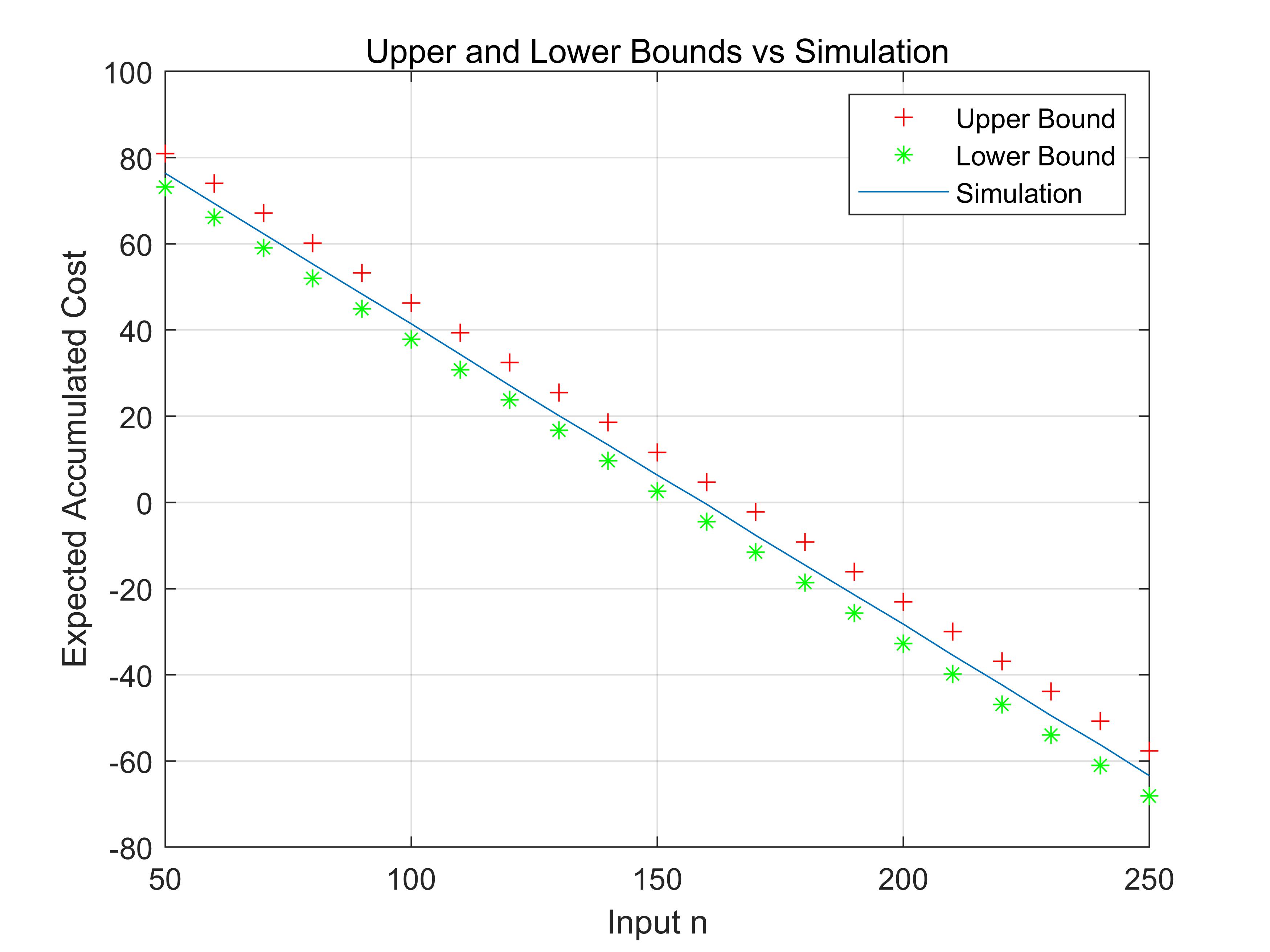}
	\caption{Goods Discount}
	\label{com:goods}
\end{figure}

\begin{figure}[H]
	\centering
	\includegraphics[width=.75\linewidth]{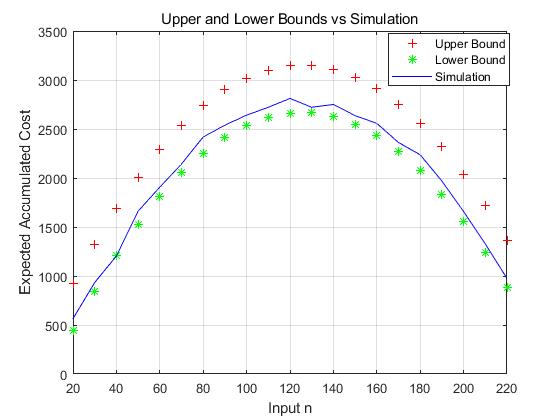}
	\caption{Pollutant Disposal}
	\label{com:pollutant}
\end{figure}

\end{document}